\newif\ifFull
\newcites{appendix}{Additional References}
\let\doendproof\endproof
\renewcommand\endproof{~\hfill\qed\doendproof}
\newcommand{\reals}{\mathbf{R}}
\newcommand{\rationals}{\mathbf{Q}}
\newcommand{\integers}{\mathbf{Z}}
\newcommand{\Emph}[1]{\smallskip\textbf{\textit{#1}}~}
\DeclareMathOperator{\Gal}{Gal}
\DeclareMathOperator{\Bipyr}{Bipyramid}
\DeclareMathOperator{\Aut}{Aut}
\DeclareMathOperator{\Pack}{Pack}
\DeclareMathOperator{\splitting}{split}
\DeclareMathOperator{\adjacency}{adj}
\DeclareMathOperator{\degree}{deg}
\DeclareMathOperator{\laplacian}{lap}
\DeclareMathOperator{\rlaplacian}{lap_\rho}
\DeclareMathOperator{\transition}{tran}
\DeclareMathOperator{\characteristic}{char}
\DeclareMathOperator{\dist}{dist}
\newcommand{\nth}[1]{$#1^\text{th}$}
\title{The Galois Complexity of Graph Drawing:\\ 
Why Numerical Solutions are Ubiquitous
for Force-Directed, Spectral, and Circle Packing Drawings% 
% Don't Have Simple Formulas
\thanks{This research was supported in part by ONR MURI grant N00014-08-1-1015 and NSF grants 1217322, 1011840, and 1228639.}}
\author{Michael J. Bannister \and William E. Devanny \and\\ David Eppstein \and Michael T. Goodrich}
\institute{Department of Computer Science, University of California, Irvine}
\begin{document}
% \setcounter{tocdepth}{2} % Fix for LLNCS
% \listoftodos
% \tableofcontents
\maketitle

\begin{abstract}
Many well-known graph drawing techniques, 
including force directed drawings, spectral graph layouts, multidimensional scaling, and circle packings, 
have algebraic formulations. However, practical methods for producing such
drawings ubiquitously use iterative numerical approximations rather than constructing and then solving algebraic expressions representing their exact solutions.
To explain this phenomenon, we use Galois theory to show that many variants of these problems have solutions that cannot be expressed by nested radicals or nested roots of low-degree polynomials.
Hence, such solutions cannot be computed exactly even in extended computational
models that include such operations.
\ifFull %%%%%%%%%%%%%% commenting out for space
We formulate an abstract model of exact symbolic computation that augments algebraic computation trees with functions for computing radicals or roots of low-degree polynomials, and we show that this model cannot solve these graph drawing problems.
\fi 
\end{abstract}

\setcounter{page}{1}
\pagestyle{plain}

\section{Introduction}
One of the most powerful paradigms for drawing a graph 
is to construct an algebraic formulation for a suitably-defined optimal drawing
of the graph and then solve this formulation to produce a drawing.
Examples of this \emph{algebraic graph drawing} approach include the 
force-directed, spectral, multidimensional scaling, and circle packing
drawing techniques 
(which we review in Appendix~\ref{app:gd} for readers unfamiliar
with them)\ifFull, for which there
are literally thousands of citations\fi.

Even though this paradigm starts from an algebraic formulation,
the ubiquitous method
for solving such formulations is to approximately optimize them numerically in
an iterative fashion.
That is, with a few exceptions for linear systems~\cite{Chrobak:1996,hk-gaa-92,tutte1963draw},
approximate numerical solutions for algebraic graph drawing are overwhelmingly
preferred over exact symbolic solutions.
It is therefore natural to ask if this preference for numerical solutions 
over symbolic solutions is
inherent in algebraic graph drawing 
or due to some other phenomena, such as 
laziness or lack of mathematical sophistication on the part of those who are
producing the algebraic formulations.

In this paper, we introduce a framework for deciding whether certain 
algebraic graph drawing formulations have symbolic solutions, and we show that
exact symbolic solutions are, in fact, impossible in several 
algebraic computation models, for some simple examples
of common algebraic graph drawing formulations, including 
force-directed graph
drawings (in both the Fruchterman--Reingold~\cite{FruRei-SPE-1991} and 
Kamada--Kawai~\cite{KamKaw-IPL-1989} approaches), 
spectral graph drawings~\cite{Kor-CMA-2005}, classical multidimensional scaling~\cite{KruSee-CSG-80}, and 
circle packings~\cite{Koe-BSAWL-36}.
Note that these impossibility 
results go beyond saying that such symbolic solutions are
computationally infeasible or undecidable to find---instead, 
we show that such solutions do not exist.

\ifFull
In particular,
we show that force directed drawing, spectral graph drawings and circle
packings cannot be constructed in suitably defined algebraic computation models, regardless of whether the running time (height of the tree) is polynomial. Specifically, there exists graphs for 
which none of the following 
can be constructed by a quadratic computation tree (nor by compass and straightedge):
\begin{itemize}
\item Fruchterman and Reingold force direct drawing,
\item Kamada and Kawai force directed drawing,
\item spectral drawing based on the Laplacian matrix,
\item spectral drawing based on the relaxed Laplacian matrix,
\item spectral drawing based on the adjacency matrix,
\item spectral drawing based on the transition matrix,
\item circle packing.
\end{itemize}
There exist graphs for which none of the items above can be constructed by a radical computation tree, nor represented by an expression involving nested radicals. And for every bound $D$ on the degree of a root computation tree, there exist graphs for which none of the above items can be constructed by a root tree of degree at most $D$.
\fi

To prove our results, we use Galois theory, a connection between the theories of algebraic numbers and abstract groups. 
Two classical applications of Galois theory use it to prove the impossibility of the classical Greek problem of doubling the cube using compass and straightedge, and of solving fifth-degree polynomials by nested radicals. In our terms, these results concern quadratic computation trees and radical computation trees, respectively. 
Our proofs build on this theory by applying Galois theory to the algebraic numbers given by the vertex positions in different types of graph drawings. For force-directed and spectral drawing, we find small graphs (in one case as small as a length-three path) whose drawings directly generate unsolvable Galois groups. For circle packing, an additional argument involving the compass and straightedge constructibility of M\"obius transformations allows us to transform arbitrary circle packings into a canonical form with two concentric circles, whose construction is equivalent to the calculation of certain algebraic numbers.
Because of this mathematical foundation, we refer to this topic as the 
\emph{Galois complexity} of graph drawing.

\ifFull
The strength of our proof of hardness for the root computation tree model depends on an unproven but well-known conjecture in number theory. We can prove unconditionally that, in this model, drawing of graphs on $n$ vertices requires degree $\Omega(n^{0.677})$. However, if (as conjectured~\cite{Sho-CINTA-09}) there are infinitely many Sophie Germain primes, then these drawings algorithms requires degree $\Omega(n)$.

The root computation tree model, and the proof that standard computational geometry problems such as graph drawing may require unbounded degree in this model, appears to be a novel contribution of our work.
\fi

\Emph{Related Work.}
The problems for which
Galois theory has been used to prove unsolvability in
simple algebraic computational models include
shortest paths around polyhedral obstacles~\cite{Baj-Pol-1985}, shortest paths through weighted regions of the plane~\cite{CarGriMah-2013},
the geometric median of planar points~\cite{Baj-DCG-1988},
computing structure from motion in computer vision~\cite{NisHarSte-CVPR-2007},
and finding polygons of maximal area with specified edge lengths~\cite{Var-SB-2004}.
In each of these cases, the non-existence of a nested radical formula for the solution is established by finding a Galois group containing a symmetric group of constant degree at least five. In our terminology, this shows that these problems cannot be solved by a radical computation tree.
We are not aware of any previous non-constant lower bounds on the degree of the polynomial roots needed to solve a problem, comparable to our new bounds using the root computation tree model.
Brightwell and Scheinerman~\cite{BriSch-SJDM-93} show that some 
circle packing graph representations cannot be constructed by compass and straightedge
(what we call the quadratic computation tree model).

\section{Preliminaries}

\Emph{Models of computation.}
\ifFull
In order to avoid issues of memory representation, 
while still allowing the conditional branching necessary 
to distinguish the desired drawing from other 
solutions to the same polynomial equations, we 
\else
We
\fi
define models of computation based on the \emph{algebraic computation tree}~\cite{Ben-STOC-83,Yao-SJC-91}, in which each node computes a value or makes a decision using standard arithmetic functions of previously computed values.
\ifFull
A decision node tests whether a computed value is greater than zero or not, and branches to one of its two children depending on the result.
A computation node computes a value that is a simple function of constants and values computed at ancestral nodes of the tree; in the original algebraic computation tree the allowed set of functions consists of addition, subtraction, multiplication, and division. 
Given integer inputs and constants, such a system can only produce results that are rational numbers, too weak to solve the circle packing problem and too weak to match the power of modern exact real number packages.
We augment this model by giving it a limited capability to find roots of univariate polynomials.\footnote{Both the ability to compute \nth{k} roots, and to find roots of polynomials, were briefly suggested but not analyzed by the original paper of Ben-Or on algebraic computation trees~\cite{Ben-STOC-83}. For research that augments the algebraic computation tree model by a different set of additional functions, see~\cite{GriVor-TCS-96}.} 
\fi
{} Specifically, we define the following variant models:
\begin{itemize}
\item 
A \emph{quadratic computation tree} is an algebraic computation tree in which the set of allowable functions for each computation node is augmented with square roots and complex conjugation. These trees capture the geometric constructions that can be performed by compass and unmarked straightedge.
\item 
A \emph{radical computation tree} is an algebraic computation tree in which the 
set of allowable functions is augmented with the \nth{k} root operation, 
where $k$ is an integer parameter to the operation, and with complex conjugation. These trees capture the calculations whose results can be expressed as nested radicals.
\item 
A \emph{root computation tree} is an algebraic computation tree in which the allowable functions include the ability to find complex roots of polynomials whose coefficients are integers or previously computed values, and to compute complex conjugates of previously computed values. 
For instance, this model can compute any algebraic number.
As a measure of complexity in this model, we define
the \emph{degree} of a root computation tree as the maximum degree of any of its polynomials.
A \emph{bounded-degree root computation tree}
has its degree bounded by some constant unrelated to the size of its input.
Thus, a quadratic computation tree is exactly a 
bounded-degree root computation tree (of degree two).
\ifFull
We will primarily be concerned with root trees of bounded degree, since allowing unbounded-degree root-finding would immediately trivialize any algebraic problem.
\fi
\end{itemize}
Our impossibility results and degree lower bounds for these models 
imply the same 
results for algorithms in more realistic models of computation that use as a black box the corresponding primitives 
for constructing and representing algebraic numbers in
symbolic computation systems.
Because our results are lower bounds, they also apply \textit{a fortiori} to weaker primitives, such as systems limited to \emph{real} algebraic numbers, which
don't include complex conjugation.
\ifFull
For example, the well-known LEDA system has a real-number 
data type, \texttt{leda\_real}, which has evolved over time,
such that impossibility results for the quadratic computation tree 
imply similar results for algorithms that use the initial (1995) version 
of this data type~\cite{Burnikel:1995} as a black box,
impossibility results for the radical computation tree apply to such
algorithms that use subsequent
versions of LEDA through 2001~\cite{MehSch-SAMVM-01}, and degree lower bounds
for the root computation tree apply to 
the radical and root-finding steps 
of such algorithms that use more recent LEDA versions~\cite{s-doie-05}.
\fi

It is important to note that each of
the above models can generate algebraic numbers of unbounded degree. 
For instance, even the quadratic computation tree 
(compass and straightedge model) can construct regular $2^k$-gons, whose coordinates are
algebraic numbers with degrees that are high powers of two.
Thus, to prove lower bounds and impossibility results in these models, 
it is not sufficient to prove that a problem is described by a high-degree polynomial; additional structure is needed.

\ifFull
\Emph{Eigenvalues and eigenvectors.}
Given a $n\times n$ square matrix $M$ we say that $\lambda$ is an \emph{eigenvalue} of $M$ if there exists a vector $u$ such that $Mu = \lambda u$, and in this case we say $u$ is an \emph{eigenvector} of $M$. We use the fact that the eigenvalues of a matrix $M$ are solution to the \nth{n}-degree polynomial,
\[
p(x) = \det(M - xI).
\]
This polynomial is called the \emph{characteristic polynomial} of $M$,
and we write $p = \characteristic(M)$. The fundamental theorem of algebra implies that $M$ has exactly $n$ eigenvalues when counted with multiplicity. We denote the $n$ eigenvalues by $\lambda_1 \leq \lambda_2 \leq \cdots \leq \lambda_n$ and their corresponding eigenvectors by $u_1, u_2, \ldots, u_n$.
\fi

\Emph{Algebraic graph theory.}
In algebraic graph theory, the properties of a graph are examined via the spectra of several matrices associated with the graph. The \emph{adjacency matrix} $A = \adjacency(G)$ of a graph $G$ is the $n\times n$ matrix with $A_{i,j}$ equal to $1$ if there is an edge between $i$ and $j$ and $0$ otherwise. The \emph{degree matrix} $D = \degree(G)$ of $G$ is the $n\times n$ matrix with $D_{i,i} = \deg(v_i)$. From these two matrices we define the \emph{Laplacian matrix}, $L = \laplacian(G) = D - A$, 
and the \emph{transition matrix}, $T = \transition(G) = D^{-1}A$.

\begin{lemma}\label{lem:regular-eigenvalues}
For a regular graph $G$,  $ \adjacency(G)$, $ \laplacian(G)$, and $ \transition(G)$ have the same set of eigenvectors.
\end{lemma}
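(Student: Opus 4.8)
The plan is to exploit the one structural fact that distinguishes regular graphs: when $G$ is $d$-regular, the degree matrix is a scalar multiple of the identity, $\degree(G) = dI$. This collapses both of the derived matrices into affine functions of the adjacency matrix $A = \adjacency(G)$, namely $\laplacian(G) = \degree(G) - A = dI - A$ and $\transition(G) = \degree(G)^{-1}A = \frac{1}{d}A$. (The expression for the transition matrix presupposes $d \ge 1$, i.e.\ that $G$ has no isolated vertices, which is already required for $\transition(G)$ to be defined; the degenerate case $d = 0$ can simply be excluded.)

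First I would establish the forward direction. If $u$ is an eigenvector of $A$ with $Au = \lambda u$, then $\laplacian(G)\,u = (dI - A)u = (d - \lambda)\,u$ and $\transition(G)\,u = \tfrac{1}{d}Au = \tfrac{\lambda}{d}\,u$, so $u$ is also an eigenvector of $\laplacian(G)$ and of $\transition(G)$ (with eigenvalues $d-\lambda$ and $\lambda/d$, respectively). Next I would run the same computation backwards: since $A = dI - \laplacian(G) = d\cdot\transition(G)$, any eigenvector of $\laplacian(G)$ or of $\transition(G)$ is immediately seen to be an eigenvector of $A$. Combining the two inclusions shows that the three matrices share exactly the same set of eigenvectors, which is the claim.

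Honestly, there is no real obstacle here; the content of the lemma is the single observation $\degree(G) = dI$, and everything else is a two-line eigenvalue substitution. The only points worth stating carefully are (i) that regularity is exactly what is being used, so the argument does not extend to non-regular graphs, and (ii) the exclusion of the edgeless case so that $\degree(G)^{-1}$ exists. This lemma will then be used later to let us move freely between the spectral drawings defined from $\adjacency(G)$, $\laplacian(G)$, and $\transition(G)$ when $G$ is regular, since they all use the same eigenvectors as vertex coordinates.
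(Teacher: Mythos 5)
Your proof is correct and follows essentially the same route as the paper's: observe that regularity gives $\degree(G)=dI$, so $\laplacian(G)=dI-A$ and $\transition(G)=\frac{1}{d}A$ are affine functions of $A$ and hence share its eigenvectors. Your version is just slightly more explicit about the two inclusions and the exclusion of the $d=0$ case.
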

\ifFull
\begin{proof}
If $G$ is a regular graph, then $D = \degree(G) = dI_n$, which implies
\[
L = D - A = dI_n - A \qquad\text{and}\qquad T = D^{-1}A = \frac{1}{d} A.
\]
The results follows from the fact that taking a linear combination of a matrix and the identity does not change the eigenvectors of a matrix.
\end{proof}
\fi

\begin{lemma}\label{lem:cycle-eigenvalues}
For the cycle on $n$ vertices, the eigenvalues of $\adjacency(G)$ are $2\cos(2\pi k/n)$, for $0 \leq k < n$.
\end{lemma}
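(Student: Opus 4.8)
The plan is to diagonalize the adjacency matrix of the $n$-cycle explicitly by exhibiting its eigenvectors, since the cycle's structure is a circulant matrix. First I would label the vertices $0, 1, \ldots, n-1$ so that vertex $j$ is adjacent to vertices $j-1$ and $j+1$, with indices taken modulo $n$. Then $\adjacency(G)$ is the circulant matrix whose action on a vector $x = (x_0, \ldots, x_{n-1})$ sends $x_j$ to $x_{j-1} + x_{j+1}$. The natural candidate eigenvectors are the discrete Fourier modes: for each $k$ with $0 \le k < n$, set $v^{(k)}_j = \omega^{jk}$ where $\omega = e^{2\pi i/n}$ is a primitive $n$th root of unity.

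The key computation is then a one-line verification: applying $\adjacency(G)$ to $v^{(k)}$ gives at coordinate $j$ the value $\omega^{(j-1)k} + \omega^{(j+1)k} = (\omega^{-k} + \omega^{k})\,\omega^{jk} = 2\cos(2\pi k/n)\, v^{(k)}_j$, so $v^{(k)}$ is an eigenvector with eigenvalue $2\cos(2\pi k/n)$. Since the vectors $v^{(0)}, \ldots, v^{(n-1)}$ are linearly independent (they form, up to scaling, the DFT matrix, which is invertible), these are all $n$ eigenvalues counted with multiplicity, and the list $\{2\cos(2\pi k/n) : 0 \le k < n\}$ is exactly the spectrum. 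I would note that the wraparound edges of the cycle (which make vertex $0$ adjacent to vertex $n-1$) are precisely what makes the modulo-$n$ indexing exact and the Fourier modes genuine eigenvectors — this is the only place any care is needed.

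There is essentially no serious obstacle here; the ``hard part'' is merely bookkeeping the modular indices so that the boundary edge is handled correctly, and observing linear independence of the Fourier basis. If one prefers to avoid complex eigenvectors, one can instead take real combinations $\cos(2\pi jk/n)$ and $\sin(2\pi jk/n)$, which also works and makes the repeated eigenvalues (for $0 < k < n/2$, where $k$ and $n-k$ give the same cosine) transparent, but the complex form is the cleanest. I expect the author's proof to be the short Fourier-mode verification, possibly just citing that the cycle's adjacency matrix is circulant and invoking the standard formula for eigenvalues of circulant matrices.
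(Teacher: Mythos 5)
Your proof is correct: the discrete Fourier modes $v^{(k)}_j=\omega^{jk}$ are indeed eigenvectors of the circulant adjacency matrix of $C_n$ with eigenvalues $\omega^k+\omega^{-k}=2\cos(2\pi k/n)$, and their linear independence (invertibility of the DFT matrix) accounts for the full spectrum. The paper states this lemma without proof, treating it as a standard fact about circulant matrices, and your Fourier-mode verification is exactly the standard argument one would supply.
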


\Emph{M\"{o}bius transformations.}
We may represent each point $p$ in the plane by a complex number,
$z$,
whose real part represents $p$'s $x$ coordinate and whose 
imaginary part represents $p$'s $y$ coordinate.
\ifFull
It is convenient to add a single point $+\infty$ to this system of points.
\fi
A \emph{M\"obius transformation} is a fractional linear transformation,
\ifFull
\[ z\mapsto \frac{az+b}{cz+d}, \]
\else
$z\mapsto (az+b)/(cz+d)$,
\fi
defined by a 4-tuple $(a,b,c,d)$ of complex numbers, or the complex conjugate of such a transformation.
\ifFull
In order to define a non-degenerate transformation of the plane the parameters should satisfy the inequality $ad-bc\ne 0$. Multiplying them all by the same complex scalar does not change the transformation, so the set of M\"obius transformations has six real degrees of freedom. 
M\"obius transformations are closed under composition; the composition of two transformations may be computed as a $2\times 2$ matrix product of their parameters.

An important special case of a M\"obius transformation is an inversion through a circle, a transformation that maps each ray through the circle's center to itself, and maps the inside of the circle to its outside and vice versa, leaving the points on the circle unchanged.
For a circle of radius $r$ centered at a point $q$ (in complex coordinates)
it may be expressed by the transformation,
\[ z\mapsto q+\frac{r^2}{(z-q)^*}. \]
\fi
We prove the following in Appendix~\ref{app:proofs}.

\begin{lemma}\label{lem:mob-trans}
Given any two disjoint circles, a M\"{o}bius transformation mapping them to two concentric circles can be constructed using a quadratic computation tree.
\end{lemma}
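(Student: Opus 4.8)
The plan is to exploit the classical fact that two disjoint circles lie in a common hyperbolic pencil of circles, whose two \emph{limit points} (the point circles of the pencil) have the property that any M\"obius transformation sending them to $0$ and $\infty$ turns every member of the pencil into a circle centered at the origin. So it suffices to show that the limit points, and then a M\"obius transformation carrying them to $0$ and $\infty$, can be produced by a quadratic computation tree.

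First I would handle the degenerate case with a decision node: if the two circles are concentric the identity transformation already works. Otherwise, write $C_i$ as the zero set of $x^2+y^2+D_i x+E_i y+F_i$, so that each coefficient is a polynomial in the centers and radii and hence available at a computation node. The radical axis is the line $(D_1-D_2)x+(E_1-E_2)y+(F_1-F_2)=0$; it is perpendicular to the line through the two centers, so these two lines meet in a single point $R$ (the radical center), obtained by a rational calculation. Since the circles are disjoint, their radical axis misses both of them, so the power $k$ of $R$ with respect to $C_1$ --- a polynomial in the coefficients --- is strictly positive, and $s:=\sqrt k$ is produced by one square-root node. With $u$ the unit vector along the line of centers (one more square root), the limit points are $P:=R+su$ and $Q:=R-su$, both computed by the tree.

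The output is then the M\"obius transformation $T(z)=(z-P)/(z-Q)$, i.e.\ the $4$-tuple $(1,-P,1,-Q)$; note that no complex conjugation is needed. To see that $T(C_1)$ and $T(C_2)$ are concentric, normalize coordinates so that the line of centers is the real axis and $R$ is the origin; then $C_1$ and $C_2$ lie in the pencil $x^2+y^2-2cx+k=0$ and the limit points are $(\pm\sqrt k,0)$. A short computation with the orthogonality condition for two circles shows that every circle through both limit points is orthogonal to every member of this pencil --- the orthogonality condition is affine in the pencil parameter and vanishes at $C_1$ and $C_2$. Since $T$ sends the limit points to $0$ and $\infty$, it sends each such orthogonal circle to a line through the origin; as M\"obius transformations preserve orthogonality, $T(C_1)$ and $T(C_2)$ are circles orthogonal to every line through $0$, hence centered at $0$.

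The main obstacle is not the construction but keeping the pencil argument honest while brief. The two places where disjointness is actually used are (i) that $R$ has positive power, so that $\sqrt k$ is real and $P\ne Q$, and (ii) that the limit points are distinct from $C_1$ and $C_2$ themselves --- which holds because $C_1,C_2$ are genuine circles rather than the point circles of the pencil --- so that $Q\notin C_1\cup C_2$ and the images $T(C_1),T(C_2)$ are honest circles rather than lines. Both are easy to verify but also easy to forget.
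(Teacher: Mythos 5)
Your proof is correct and takes essentially the same approach as the paper: both locate the limiting points of the two disjoint circles via the intersection of the radical axis with the line of centers and a square root of the power of that point, and then apply a M\"obius transformation determined by those limiting points. The only cosmetic difference is that the paper uses an inversion centered at one limiting point while you use $z\mapsto (z-P)/(z-Q)$, and you spell out the concentricity verification that the paper delegates to a cited explicit formula.
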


\Emph{Number theory.}
The \emph{Euler totient function}, $\phi(n)$, counts the number of integers in the interval $[1,n-1]$ that are relatively prime to $n$. It can be calculated from the prime factorization $n = \prod p_i^{r_i}$ by the formula \[
\phi(n) = \prod p_i^{r_i-1}(p_i-1).
\]

A \emph{Sophie Germain prime} is a prime number $p$ such that $2p+1$ 
is also prime~\cite{Sho-CINTA-09}. 
\ifFull
These numbers were introduced by Sophie Germain in her study of Fermat's last theorem.
\fi
It has been conjectured that there are infinitely many of them, but the conjecture remains unsolved.
The significance of these primes for us is that, when $p$ is a Sophie Germain prime, $\phi(2p+1)$ has the large prime factor~$p$. 
An easy construction gives a number $n$
for which $\phi(n)$ has a prime factor of size $\Omega(\sqrt n)$: simply let $n=p^2$ for a prime $p$, with $\phi(n)=p(p-1)$. 
Baker and Harman~\cite{BakHar-AA-98} proved the following stronger bound.
\ifFull 
(See also~\cite{Har-MC-05} for a more elementary and more explicit bound of intermediate strength.)
\fi

\begin{lemma}[Baker and Harman~\cite{BakHar-AA-98}]
\label{lem:large-prime-factor-of-phi}
For infinitely many prime numbers $p$, the largest prime factor of $\phi(p)$  is at least $p^{0.677}$.
\end{lemma}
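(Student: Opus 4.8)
Lemma~\ref{lem:large-prime-factor-of-phi} is a deep result of analytic number theory, quoted here as a black box; the following is a road map to how it is proved rather than a self-contained argument. The plan is to follow the standard strategy for producing large prime factors of $p-1$ via the equidistribution of primes in arithmetic progressions to large moduli. Since $p$ is prime, $\phi(p)=p-1$, so the claim is that the largest prime factor $P(p-1)$ satisfies $P(p-1)\ge p^{0.677}$ for infinitely many primes $p$. Fix a large parameter $x$ and set $\theta=0.677$. I would estimate
\[
S(x)=\sum_{\substack{x^{\theta}<q\le x\\ q\text{ prime}}}\ \pi(x;q,1),
\]
where $\pi(x;q,1)$ counts primes $p\le x$ with $p\equiv 1\pmod q$, and show that $S(x)\to\infty$. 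Two elementary observations make $S(x)$ the right quantity. First, every prime $p\le x$ is counted at most once, because a prime factor of $p-1<x$ exceeding $x^{1/2}$ is unique and $\theta>1/2$; second, whenever $p\le x$ is counted, the witnessing prime $q$ satisfies $q>x^{\theta}\ge p^{\theta}$, so $P(p-1)\ge q>p^{0.677}$. Hence $S(x)\to\infty$ produces arbitrarily large sets of primes with the desired property, and therefore infinitely many of them.

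For the heuristic main term one uses $\pi(x;q,1)\approx\pi(x)/(q-1)$, which would give $S(x)\approx\pi(x)\log(1/\theta)\gg\pi(x)$. Making this rigorous is the crux of the matter and is where the specific constant $0.677$ enters: the moduli $q$ here run all the way up to $x$, far outside the Bombieri--Vinogradov range $q\le x^{1/2-\varepsilon}$ in which such approximations are known on average, so no off-the-shelf equidistribution theorem suffices. The approach would be to (i) invoke the Brun--Titchmarsh inequality $\pi(x;q,1)\ll x/\bigl(\phi(q)\log(x/q)\bigr)$ to bound the contribution of the very largest moduli and to justify discarding any overcounting, and (ii) obtain a lower bound for the main part of the range using Harman's sieve, whose arithmetic input is an equidistribution estimate for primes --- or for suitably sieve-weighted sequences --- in progressions to moduli somewhat beyond $x^{1/2}$. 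That input in turn rests on nontrivial bilinear-form and Kloosterman-sum estimates in the style of Fouvry--Iwaniec and Bombieri--Friedlander--Iwaniec (the dispersion method), which is precisely what allows the exponent to be pushed past $1/2$; weaker bounds of this shape, with smaller exponents, were obtained earlier by Goldfeld and by Fouvry.

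The main obstacle is step (ii): each increment of the exponent above $1/2$ requires a genuinely new equidistribution-to-large-moduli result, and the value $0.677$ is what emerges from optimizing the combinatorial decomposition in Harman's sieve against the strength of the available Kloosterman-sum bounds. This optimization, rather than any single clean inequality, is the technical heart of Baker and Harman's paper, and for our purposes we simply cite their theorem~\cite{BakHar-AA-98}.
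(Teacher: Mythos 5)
The paper gives no proof of this lemma at all---it is imported verbatim as a known theorem of Baker and Harman~\cite{BakHar-AA-98}---and your proposal likewise ends by citing that theorem as a black box, so you are taking essentially the same approach as the paper. Your surrounding road map (reduction to $\phi(p)=p-1$, counting primes $p\equiv 1\pmod q$ for prime moduli $q>x^{0.677}$, and the need for equidistribution beyond the Bombieri--Vinogradov range via Harman's sieve and dispersion/Kloosterman-sum estimates) is an accurate sketch of the literature, but it is expository framing rather than a proof, which is exactly the status the lemma has in the paper.
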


\Emph{Field theory.}
A \emph{field} is a system of values and arithmetic operations over them (addition, subtraction, multiplication, and division) obeying similar axioms to those of rational arithmetic, real number arithmetic, and complex number arithmetic: addition and multiplication are commutative and associative, multiplication distributes over addition, subtraction is inverse to addition, and division is inverse to multiplication by any value except zero. A field $K$ is an \emph{extension} of a field $F$, and $F$ is a \emph{subfield} of $K$ (the \emph{base field}), if the elements of $F$ are a subset of those of $K$ and the two fields' operations coincide for those values. $K$ can be viewed as a vector space over $F$ (values in $K$ can be added to each other and multiplied by values in~$F$) and the \emph{degree} $[K:F]$ of the extension is its dimension as a vector space.
For an element $\alpha$ of $K$ the notation $F(\alpha)$ represents the set of values that can be obtained from rational functions (ratios of univariate polynomials) with coefficients in $F$ by plugging in $\alpha$ as the value of the variable. $F(\alpha)$~is itself a field, intermediate between $F$ and $K$. In particular, we will frequently consider field extensions $\rationals(\alpha)$ where $\rationals$ is the field of rational numbers and $\alpha$ is an \emph{algebraic number}, the complex root of a polynomial with rational coefficients.

\begin{lemma}\label{lem:extensions}
If $\alpha$ can be computed by a root computation tree of degree $f(n)$, then $[\rationals(\alpha):\rationals]$ is $f(n)$-smooth, i.e., it has no prime factor  $>f(n)$. In particular, if $\alpha$ can be computed by a quadratic computation tree, then $[\rationals(\alpha):\rationals]$ is a power of two.
\end{lemma}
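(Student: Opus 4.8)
The plan is to run the classical field-degree argument behind the impossibility of doubling the cube, but adapted to bookkeep root-finding of higher degree and, more importantly, the complex-conjugation operation. Fix the input for which $\alpha$ is produced, and follow the single root-to-leaf path the computation takes on that input. Decision nodes only branch on the sign of an already-computed quantity and introduce nothing new, and the integer inputs and constants all lie in $\rationals$, so it suffices to consider the computation nodes $v_1,\dots,v_m$ along this path. Let $F_0=\rationals$ and let $F_i\subseteq\mathbf{C}$ be the field generated over $\rationals$ by the values computed at $v_1,\dots,v_i$, so that $\rationals=F_0\subseteq F_1\subseteq\cdots\subseteq F_m$ and $\alpha\in F_m$. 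By the tower law $[\rationals(\alpha):\rationals]$ divides $[F_m:\rationals]$, and a divisor of an $f(n)$-smooth number is $f(n)$-smooth, so it is enough to prove that $[F_m:\rationals]$ is $f(n)$-smooth.

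I would prove this by induction on $i$, using the strengthened hypothesis that $F_i$ is closed under complex conjugation and that $[F_i:\rationals]$ is $f(n)$-smooth; the closure part is exactly what makes conjugation nodes harmless. The base case $F_0=\rationals\subseteq\reals$ is immediate. For the inductive step there are three kinds of node. An arithmetic node (addition, subtraction, multiplication, division) produces a value already in $F_{i-1}$, so $F_i=F_{i-1}$. A complex-conjugation node conjugates some $\gamma\in F_{i-1}$, and by the closure hypothesis $\overline{\gamma}\in F_{i-1}$, so again $F_i=F_{i-1}$. A root-finding node adjoins some root $\beta$ of a polynomial $p$ whose coefficients lie in $F_{i-1}$ and whose degree is at most $f(n)$; here I would simultaneously adjoin $\overline{\beta}$, which is a root of the polynomial $\overline{p}$ obtained by conjugating the coefficients of $p$, a polynomial that again has coefficients in $F_{i-1}$ (by closure) and degree at most $f(n)$. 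Then
\[
[F_i:F_{i-1}]=[F_{i-1}(\beta,\overline{\beta}):F_{i-1}(\beta)]\cdot[F_{i-1}(\beta):F_{i-1}],
\]
and each factor is a positive integer at most $f(n)$, since the minimal polynomial of $\beta$ over $F_{i-1}$ divides $p$ and the minimal polynomial of $\overline{\beta}$ over $F_{i-1}(\beta)$ divides $\overline{p}$. A positive integer that is at most $f(n)$ has all its prime factors at most $f(n)$, and a product of two such numbers still does, so $[F_i:F_{i-1}]$ is $f(n)$-smooth; combined with the inductive hypothesis and the tower law, $[F_i:\rationals]=[F_i:F_{i-1}]\cdot[F_{i-1}:\rationals]$ is $f(n)$-smooth. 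Finally $F_i=F_{i-1}(\beta,\overline{\beta})$ is closed under conjugation, because conjugating a rational function of $\beta$ and $\overline{\beta}$ with coefficients in $F_{i-1}$ yields a rational function of $\overline{\beta}$ and $\beta$ with coefficients in the conjugation-closed field $F_{i-1}$.

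This completes the induction, and hence the lemma; the ``in particular'' claim follows because a quadratic computation tree is exactly a root computation tree of degree two, and a $2$-smooth positive integer is a power of two. The one place that needs care — the main obstacle — is complex conjugation, which is not a field operation: a naive argument would have to bound $[F_{i-1}(\overline{\gamma}):F_{i-1}]$ directly, and although $[\rationals(\overline{\gamma}):\rationals]=[\rationals(\gamma):\rationals]$ is controlled, the degree of $\overline{\gamma}$ over the larger field $F_{i-1}$ need not inherit that smoothness. Carrying conjugation-closure through the induction, at the modest cost of adjoining the conjugate root at each root-finding node (which only multiplies in another factor of size at most $f(n)$), sidesteps this difficulty entirely; everything else is the tower law together with the elementary stability of smoothness under products and divisors.
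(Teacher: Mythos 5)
Your proof is correct and follows essentially the same route as the paper's: annotate the computation path with a tower of field extensions, bound each step's degree by $f(n)$ via the root adjoined at that node, apply the tower theorem, and conclude for $\rationals(\alpha)$ as a subfield. Your only addition is the conjugation-closure invariant (adjoining $\overline{\beta}$ alongside $\beta$ at each root node), which carefully handles the complex-conjugation operation that the paper's proof treats only implicitly; this is a refinement of, not a departure from, the paper's argument.
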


\begin{proof}
See Appendix~\ref{app:proofs}.
\end{proof}

A \emph{primitive root of unity} $\zeta_n$ is a root of
\ifFull
the polynomial
\fi
$x^n-1$ whose powers give all  other roots of the same polynomial. As a complex number we can take
$\zeta_n=\exp(2i\pi/n)$.

\begin{lemma}[Corollary~9.1.10 of \cite{Cox2012}, p.~235]
\label{lem:roots-of-unity}
$[\rationals(\zeta_n):\rationals] = \phi(n)$.
\end{lemma}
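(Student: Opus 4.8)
The plan is to identify the minimal polynomial of $\zeta_n$ over $\rationals$ with the $n$-th cyclotomic polynomial $\Phi_n(x) = \prod_{\gcd(k,n)=1,\,1\le k\le n}(x-\zeta_n^k)$, a monic polynomial of degree $\phi(n)$; then $[\rationals(\zeta_n):\rationals]$ equals the degree of that minimal polynomial, which is $\phi(n)$. One inequality is immediate: $\zeta_n$ is a root of $x^n-1$, and successively dividing out the factors $x^d-1$ for proper divisors $d\mid n$ exhibits $\zeta_n$ as a root of a degree-$\phi(n)$ polynomial with rational (indeed integer) coefficients, so the minimal polynomial $f$ of $\zeta_n$ satisfies $\deg f\le\phi(n)$. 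All the real work is the reverse inequality, i.e.\ the irreducibility of $\Phi_n$ over $\rationals$.

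First I would show, by induction on $n$ using $x^n-1=\prod_{d\mid n}\Phi_d(x)$ together with Gauss's lemma, that every $\Phi_n$ lies in $\integers[x]$. Next, let $f\in\integers[x]$ be the monic integer minimal polynomial of $\zeta_n$, so $x^n-1=f(x)g(x)$ with $g\in\integers[x]$, and prove the key closure property: for every prime $p$ with $p\nmid n$, if $\zeta$ is any root of $f$ then $\zeta^p$ is also a root of $f$. The argument is the classical reduction-mod-$p$ trick: if not, then $\zeta^p$ is a root of $g$, so $\zeta$ is a root of $g(x^p)$, whence $f(x)\mid g(x^p)$ in $\rationals[x]$ and therefore, since $f$ is monic, in $\integers[x]$. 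Reducing modulo $p$ and using the Frobenius identity $\overline{g(x^p)}=\overline{g}(x)^p$ in $\mathbf{F}_p[x]$, we find that $\overline{f}$ and $\overline{g}$ share a common irreducible factor, so $x^n-1=\overline{f}\,\overline{g}$ has a repeated root over $\overline{\mathbf{F}_p}$; but $p\nmid n$ makes the derivative $nx^{n-1}$ coprime to $x^n-1$ modulo $p$, so $x^n-1$ is separable there — a contradiction.

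From the closure property, iterating and writing any integer $k$ coprime to $n$ as a product (modulo $n$) of primes each coprime to $n$, I conclude that every primitive $n$-th root of unity $\zeta_n^k$ is a root of $f$; hence $\deg f\ge\phi(n)$, which combined with $\deg f\le\phi(n)$ forces $f=\Phi_n$. In particular $\Phi_n$ is irreducible and $[\rationals(\zeta_n):\rationals]=\deg f=\phi(n)$. The main obstacle is precisely the closure step — controlling how the implicit Galois action $\zeta\mapsto\zeta^p$ interacts with the hypothetical factorization of $x^n-1$ over $\rationals$ — and the separability-mod-$p$ computation is the standard device that resolves it; the remaining ingredients are routine bookkeeping with divisors of $n$ and Gauss's lemma. (Equivalently, one may phrase the whole proof through the Galois correspondence: $\rationals(\zeta_n)/\rationals$ is Galois with $\Gal(\rationals(\zeta_n)/\rationals)$ injecting into $(\integers/n\integers)^\times$ via $\sigma\mapsto a$ where $\sigma(\zeta_n)=\zeta_n^a$, and the content is the surjectivity of this map, which is again equivalent to the irreducibility of $\Phi_n$.)
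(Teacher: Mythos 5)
Your proof is correct: it is the standard irreducibility argument for the cyclotomic polynomial $\Phi_n$ (the reduction-mod-$p$ closure property for primes $p\nmid n$, plus the inductive integrality of $\Phi_n$), which is exactly the content behind the result the paper does not prove itself but simply cites as Corollary~9.1.10 of Cox. So your argument matches the approach of the cited source; the only cosmetic quibble is that the phrase about ``successively dividing out the factors $x^d-1$'' should be read as dividing by the $\Phi_d$ for proper divisors $d\mid n$, which your later induction handles correctly.
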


\Emph{Galois theory.}
A \emph{group} is a system of values and a single operation (written as multiplication) that is associative and in which every element has an inverse. The set of permutations of the set $[n] = \{1,2,\ldots,n\}$, multiplied by function composition, is a standard example of a group and is denoted by $S_n$. A \emph{permutation group} is a subgroup of $S_n$; i.e., it is a set of permutations that is closed under the group operation.

A \emph{field automorphism} of the field $F$ is a bijection $\sigma :F \to F$ that respects the field operations, i.e., $\sigma(xy) = \sigma(x)\sigma(y)$ and $\sigma(x+y) = \sigma(x) + \sigma(y)$. The set of all field automorphism of a field $F$ forms a group denoted by $\Aut(F)$. Given a field extension $K$ of $F$, the subset of $\Aut(K)$ that leaves $F$ unchanged is itself a group, called the \emph{Galois group} of the extension, and is denoted
\[
\Gal(K/F) = \{\sigma \in \Aut(K) \mid \sigma(x) = x \text{ for all } x \in F \}.
\]
The \emph{splitting field} of a polynomial, $p$, with rational coefficients, denoted $\splitting (p)$ is the smallest subfield of the complex numbers that contains  all the roots of the polynomial. 
\ifFull
It is called the splitting field because it is the smallest field in which the polynomial can be factored into linear terms.
\fi
Each automorphism in $\Gal(\splitting(p)/\rationals)$ permutes the roots of the polynomial, no two automorphisms permute the roots in the same way, and these permutations form a group, so $\Gal(\splitting(p)/\rationals)$ can be thought of as a permutation group.

\begin{lemma}\label{lem:radical-tree}
If $\alpha$ can be computed by a radical computation tree and $K$ is the splitting field of an irreducible polynomial with $\alpha$ as one of its roots, then $\Gal(K/\rationals)$ does not contain $S_n$ as a subgroup for any $n \geq 5$.
\end{lemma}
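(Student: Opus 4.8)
The plan is to reduce this to the classical Galois-theoretic fact that a polynomial is solvable by radicals if and only if its Galois group is solvable. First I would establish a structural statement: if $\alpha$ is computable by a radical computation tree, then $\alpha$ lies in a field obtained from $\rationals$ by a finite tower of extensions, each adjoining either a $k$-th root of a previously obtained element, or a complex conjugate. Complex conjugation is itself an automorphism of order two, so adjoining a conjugate contributes only a solvable (degree $\le 2$) step. Hence $\alpha$ lies in some field $L$ that embeds in a radical extension of $\rationals$ in the classical sense — i.e.\ $L \subseteq R$ where $R/\rationals$ is obtained by successively adjoining radicals. The key point I would carry out carefully here is that the branching structure of the tree does not matter: along the single root-to-leaf path that the computation actually follows, only finitely many field elements are produced, and each is a radical (or conjugate, or rational function) of earlier ones, so the path yields one concrete radical tower containing $\alpha$.

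Next I would invoke the standard theorem (e.g.\ from Cox, which the paper already cites) that if $\alpha$ lies in a radical extension of $\rationals$, then the Galois group $\Gal(K/\rationals)$ of the splitting field $K$ of the minimal polynomial of $\alpha$ is a \emph{solvable} group. The mechanism: the radical tower can be refined and enlarged (adjoining the needed roots of unity first, which only adds abelian, hence solvable, pieces by Lemma~\ref{lem:roots-of-unity}) to a Galois tower over $\rationals$ with abelian quotients at each stage; $K$ sits inside this tower, so $\Gal(K/\rationals)$ is a quotient of a solvable group and therefore solvable. (Complex conjugation steps are harmless since order-two groups are solvable.)

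Finally I would use the group-theoretic fact that $S_n$ is not solvable for $n \ge 5$ — because it contains the simple non-abelian group $A_n$ — and that subgroups of solvable groups are solvable. If $\Gal(K/\rationals)$ contained $S_n$ as a subgroup for some $n \ge 5$, then that subgroup would be solvable, contradicting the non-solvability of $S_n$. Hence $\Gal(K/\rationals)$ contains no copy of $S_n$ for $n \ge 5$, which is exactly the claim.

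The main obstacle is the first step: making precise that a computation \emph{tree} (with conditional branching and values computed at arbitrary ancestor nodes) really does place $\alpha$ inside a single classical radical tower, and handling the complex-conjugation operation within the solvability framework. Once that bookkeeping is done, the rest is the textbook ``solvable Galois group $\iff$ solvable by radicals'' argument together with the insolvability of $S_n$; I would state those as cited black boxes rather than reprove them.
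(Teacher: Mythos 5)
Your proposal is correct and follows essentially the same route as the paper: express $\alpha$ as a nested-radical expression coming from the tree, cite the classical fact (Cox) that the splitting field of an irreducible polynomial with such a root has solvable Galois group, and conclude from the non-solvability of $S_n$ for $n\ge 5$ together with the fact that subgroups of solvable groups are solvable. Your extra bookkeeping about the single root-to-leaf path and the complex-conjugation steps is a reasonable elaboration of what the paper leaves implicit, but it does not change the argument.
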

\begin{proof}
If $\alpha$ is computable by a radical computation tree, it can be written as an expression using nested radicals. If $K$ is the splitting field of an irreducible polynomial with such an expression as a root, $\Gal(K/\rationals)$ is a solvable group (Def.~8.1.1 of \cite{Cox2012}, p. 191 and Theorem~8.3.3 of \cite{Cox2012}, p. 204). But $S_n$ is not solvable for $n \geq 5$ (Theorem 8.4.5 of \cite{Cox2012}, p. 213), and every subgroup of a solvable group is solvable (Proposition 8.1.3 of \cite{Cox2012}, p. 192). Thus, $\Gal(K/\rationals)$ cannot contain $S_n$ ($n\ge 5$) as a subgroup.\end{proof}

The next lemma allows us to infer properties of a Galois group from the coefficients of a \emph{monic} polynomial, that is, a polynomial with integer coefficients whose first coefficient is one. The \emph{discriminant} of a monic polynomial is (up to sign) the product of the squared differences of all pairs of its roots; it can also be computed as a polynomial function of the coefficients. The lemma is due to Dedekind and proven in \cite{Cox2012}\ifFull, as Theorem 13.4.5, p. 404\fi.

\begin{lemma}[Dedekind's theorem]
Let $f(x)$ be an irreducible monic polynomial in $\integers[x]$ and $p$ a prime not dividing the discriminant of $f$. If $f(x)$ factors into a product of irreducibles of degrees $d_0, d_1, \ldots d_r$ over $\integers/p\integers$, then $\Gal(\splitting(f)/\rationals)$ contains a permutation that is the composition of disjoint cycles of lengths $d_0, d_1, \ldots, d_r$.
\end{lemma}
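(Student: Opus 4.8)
The plan is to prove Dedekind's theorem the classical way, by realizing the claimed permutation as the Frobenius element attached to a prime of $\splitting(f)$ lying over $p$. Let $\alpha_1,\dots,\alpha_n$ be the roots of $f$ in $K=\splitting(f)$; these are algebraic integers because $f$ is monic with integer coefficients, so they lie in the ring of integers $\mathcal O_K$, on which $G=\Gal(K/\rationals)$ acts. Fix a maximal ideal $\mathfrak P$ of $\mathcal O_K$ containing $p$, let $k=\mathcal O_K/\mathfrak P$ be the resulting finite field of characteristic $p$, and write $x\mapsto\bar x$ for reduction. Reducing $f(x)=\prod_i(x-\alpha_i)$ modulo $\mathfrak P$ shows that the roots of $\bar f\in\mathbb F_p[x]$ are exactly the $\bar\alpha_i$. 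Now I use the hypothesis: the discriminant of $f$ is, up to sign, $\prod_{i<j}(\alpha_i-\alpha_j)^2$, an ordinary integer, and $p\nmid\disc$ makes it a unit modulo $\mathfrak P$; hence every $\alpha_i-\alpha_j$ is a nonzero element of $k$, the $\bar\alpha_i$ are pairwise distinct, $\bar f$ is separable, and its irreducible factorization over $\mathbb F_p$ consists of distinct factors of degrees $d_0,\dots,d_r$. In particular reduction modulo $\mathfrak P$ is a bijection between $\{\alpha_1,\dots,\alpha_n\}$ and $\{\bar\alpha_1,\dots,\bar\alpha_n\}$.

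The key input, and the main obstacle, is the existence of a Frobenius element. Let $D=\{\sigma\in G:\sigma(\mathfrak P)=\mathfrak P\}$ be the decomposition group of $\mathfrak P$; each $\sigma\in D$ descends to an automorphism of $k$ fixing $\mathbb F_p$, giving a homomorphism $D\to\Gal(k/\mathbb F_p)$, and the standard structure theory of primes in Galois extensions says this map is surjective (this is where separability of the residue extension is used; I would cite a standard reference on algebraic number theory, or reprove it by lifting a primitive element of $k$ and comparing its $\mathbb F_p$-conjugates with the reductions of its $G$-conjugates). Since $\Gal(k/\mathbb F_p)$ is cyclic, generated by the Frobenius $y\mapsto y^p$, I may choose $\Phi\in D\subseteq G$ with $\overline{\Phi(x)}=\bar x^{\,p}$ for all $x\in\mathcal O_K$, so in particular $\overline{\Phi(\alpha_i)}=\bar\alpha_i^{\,p}$ for every $i$.

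It remains to read off the cycle type of $\Phi$ as a permutation of the roots. Since $\Phi\in G$, it permutes $\{\alpha_1,\dots,\alpha_n\}$, say $\Phi(\alpha_i)=\alpha_{\pi(i)}$; because the $\bar\alpha_i$ are distinct, $\pi(i)$ is characterized by $\overline{\alpha_{\pi(i)}}=\bar\alpha_i^{\,p}$, i.e. $\pi$ is exactly the Frobenius permutation of the roots of $\bar f$ transported along the reduction bijection, hence has the same cycle type as that Frobenius permutation. Finally, the Frobenius acting on the roots of a separable polynomial over $\mathbb F_p$ has cycle type equal to the multiset of degrees of its irreducible factors: a degree-$d$ irreducible factor has the $d$ distinct roots $\theta,\theta^{p},\dots,\theta^{p^{d-1}}$, which form a single $y\mapsto y^p$ orbit, and these orbits partition all the roots. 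Therefore $\Phi$ is a product of disjoint cycles of lengths $d_0,d_1,\dots,d_r$, and $\Phi\in\Gal(\splitting(f)/\rationals)$ is the asserted permutation. Apart from the surjectivity of $D\to\Gal(k/\mathbb F_p)$, every step is routine bookkeeping with the reduction map.
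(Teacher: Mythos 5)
Your argument is correct, but note that the paper does not actually prove this lemma: it invokes Dedekind's theorem as a known result, citing Theorem~13.4.5 of Cox's \emph{Galois Theory}, so there is no in-paper proof to match. What you give is the classical algebraic-number-theoretic proof: reduce the roots modulo a maximal ideal $\mathfrak{P}$ of $\mathcal{O}_K$ above $p$, use $p \nmid \mathrm{disc}(f)$ to see the reduced roots are distinct and $\bar f$ is separable, lift the Frobenius of the residue field through the surjection from the decomposition group, and read off the cycle type from the Frobenius orbits on the roots of the irreducible factors mod $p$. All of these steps are sound; the one nontrivial external input, as you yourself flag, is the surjectivity of the decomposition group onto $\Gal(k/\mathbf{F}_p)$ (together with standard facts such as $\mathcal{O}_K$ being finitely generated over $\integers$, so that $k$ is a finite field), which you would cite from a standard number theory text. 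By contrast, Cox's proof --- the one the paper leans on --- is engineered to be self-contained within a Galois theory course: it works directly with the ring generated by the roots and constructs the required automorphism inducing Frobenius by hand, avoiding the general theory of rings of integers, ramification, and decomposition groups. So your route is shorter and conceptually cleaner if one is willing to import the standard splitting-of-primes machinery, while the cited proof trades length for elementarity; both ultimately rest on producing an element of $\Gal(\splitting(f)/\rationals)$ that induces the $p$-power Frobenius on a residue field. The irreducibility hypothesis on $f$ is not needed in your argument (separability mod $p$ suffices), which is consistent with the general form of Dedekind's theorem.
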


A permutation group is \emph{transitive} if, for every two elements $x$ and $y$ of the elements being permuted, the group includes a permutation that maps $x$ to $y$. If $K$ is the splitting field of an irreducible polynomial of degree $n$, then $\Gal(K/\rationals)$ (viewed as a permutation group on the roots) is necessarily transitive. The next lemma allows us to use Dedekind's theorem to prove that $\Gal(K/\rationals)$ equals $S_n$. It is a standard exercise in abstract algebra (e.g., \cite{Jac2012}, Exercise~3, p.~305).

\begin{lemma}
\label{lem:swap+cycle}
If a transitive subgroup  $G$ of $S_n$ contains a transposition and an $(n-1)$-cycle, then $G = S_n$.
\end{lemma}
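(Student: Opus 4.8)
The plan is to prove that $G$ contains \emph{every} transposition in $S_n$; since the transpositions generate $S_n$, this gives $G=S_n$. First I would set up notation: let $\sigma\in G$ be the given $(n-1)$-cycle and let $\tau\in G$ be the given transposition. After relabeling the ground set $[n]$ if necessary, I may assume $\sigma$ fixes the point $n$ and permutes $\{1,\dots,n-1\}$ as a single $(n-1)$-cycle. In particular the cyclic subgroup $\langle\sigma\rangle$ already acts transitively on $\{1,\dots,n-1\}$.

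The next step is to replace $\tau$ by a transposition in $G$ that moves the fixed point $n$. Write $\tau=(a\ b)$. If $n\in\{a,b\}$ there is nothing to do. Otherwise I use the transitivity of $G$ on all of $[n]$ to choose $g\in G$ with $g(a)=n$; then $g\tau g^{-1}=(g(a)\ g(b))=(n\ m)$, where $m=g(b)\in\{1,\dots,n-1\}$ because $g$ is a bijection and $a\neq b$. Thus $(n\ m)\in G$. This is the one place where the two hypotheses genuinely interact: I need the full transitivity of $G$, not merely that of $\langle\sigma\rangle$, to bring an arbitrary transposition into contact with the point fixed by $\sigma$.

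Then I would propagate $(n\ m)$ across the whole set $\{1,\dots,n-1\}$ by conjugating with powers of $\sigma$. Since $\sigma$ fixes $n$, we have $\sigma^k(n\ m)\sigma^{-k}=(n\ \sigma^k(m))$ for every $k$, and because $\langle\sigma\rangle$ acts transitively on $\{1,\dots,n-1\}$, the points $\sigma^k(m)$ for $k=0,1,\dots,n-2$ run through all of $\{1,\dots,n-1\}$. Hence $(n\ j)\in G$ for every $j\in\{1,\dots,n-1\}$.

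Finally, from the identity $(i\ j)=(n\ i)(n\ j)(n\ i)$ it follows that $G$ contains every transposition, so $G=S_n$. I do not anticipate a real obstacle in carrying this out; the argument is short, and the only subtlety is the correct invocation of transitivity in the reduction step described above.
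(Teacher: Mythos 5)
Your proof is correct: moving the given transposition by transitivity so that it involves the point fixed by the $(n-1)$-cycle, conjugating by powers of that cycle to obtain all transpositions $(n\ j)$, and then writing an arbitrary $(i\ j)$ as $(n\ i)(n\ j)(n\ i)$ is a complete argument, and you rightly identify the conjugation step as the place where full transitivity of $G$ (not just of $\langle\sigma\rangle$) is needed. The paper itself gives no proof of this lemma, citing it as a standard exercise (Jacobson, Exercise~3, p.~305), and your argument is exactly the standard solution to that exercise.
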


\section{Impossibility Results for Force Directed Graph Drawing}
In the Fruchterman and Reingold~\cite{FruRei-SPE-1991} 
force-directed model, each vertex is pulled toward its neighbors with an attractive force, $f_a(d) = d^2/k$, and pushed away from all vertices with a repulsive force, $f_r(d) = k^2/d$. The parameter $k$ is a constant that sets the scale of the drawing, and $d$ is the distance between vertices. We say that a drawing is a Fruchterman and Reingold equilibrium when the total force at each vertex is zero.

In the Kamada and Kawai~\cite{KamKaw-IPL-1989} 
force-directed model, every two vertices are connected by a spring with rest length and spring constant determined by the structure of the graph. The total energy of the graph is defined to be
\[
    E = \sum_i \sum_{j > i} \frac{1}{2} k_{ij} \big(\dist(p_i,p_j) - \ell_{ij} \big)^2,
\]
where
\ifFull
\begin{align*}
p_i &= \text{position of vertex $v_i$}\\
d_{ij} &= \text{graph theoretic distance between $v_i$ and $v_j$}\\
L &= \text{a scaling constant}\\
\ell_{ij} &= Ld_{ij}\\
K &= \text{a scaling constant}\\
k_{ij} &= K/d_{i,j}^2.
\end{align*}
\else
$p_i = \text{position of vertex $v_i$}$,
$d_{ij} = \text{graph theoretic distance between $v_i$ and $v_j$}$,
$L = \text{a scaling constant}$,
$\ell_{ij} = Ld_{ij}$,
$K = \text{a scaling constant}$,
and
$k_{ij} = K/d_{i,j}^2$.
\fi
We say that a drawing is a Kamada--Kawai equilibrium if $E$ is at a local minimum. The necessary conditions for such a local minimum are as follows:
\begin{align*}
\frac{\partial E}{\partial x_j} &= \sum_{i \neq j} k_{ji}
(x_j - x_i) \left(1 - \frac{\ell_{ji}}{\dist(p_j, p_i)}\right)=0 && 1 \leq j \leq n\\
\frac{\partial E}{\partial y_j} &= \sum_{i \neq j} k_{ji}
(y_j - y_i) \left(1 - \frac{\ell_{ji}}{\dist(p_j, p_i)}\right)=0 && 1 \leq j \leq n.
\end{align*}

For either of these approaches to
force-directed graph drawing, a graph can have multiple equilibria (Figure~\ref{fig:2fr-configs}).
In such cases, typically, one equilibrium is the ``expected'' drawing of the graph and others represent undesired drawings that are not likely to be found by the drawing algorithm. To make the positions of the vertices in this drawing concrete, we assume that the constants~$k$ (Fruchterman--Reingold), $L$, and $K$ (Kamada--Kawai) are all equal to 1.
As we will demonstrate, there exist graphs whose expected drawings cannot be constructed in our models of computation.
Interestingly, the graphs we use for these results
are not complicated configurations unlikely to 
arise in practice, but are instead graphs so simple 
that they might at first
be dismissed as insufficiently challenging even 
to be used for debugging purposes.

\begin{figure}[tb]
\begin{minipage}[t]{0.4\textwidth} 
\centering
\includegraphics[scale=0.4]{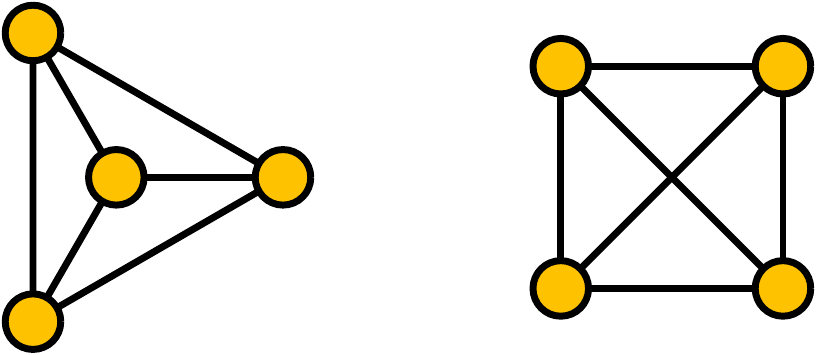}
\caption{Two stable drawings of $K_4$.}
\label{fig:2fr-configs}
\end{minipage}\hfill
\begin{minipage}[t]{0.4\textwidth} 
\centering
\includegraphics[scale=0.4]{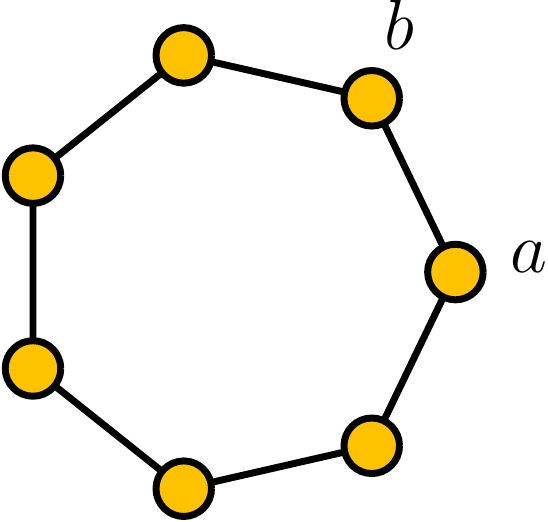}
\caption{A drawing whose coordinates cannot be computed by a quadratic computation tree.}
\label{fig:seven-cycle}
\end{minipage}
\end{figure}

\Emph{Root computation trees.}
Consider the cycle $C_n$ with $n$ vertices. When drawn with force directed algorithms, either Fruchterman and Reingold or Kamada and Kawai, the embedding typically places all vertices equally spaced on a circle, such that neighbors are placed next to each other, as shown in Figure~\ref{fig:seven-cycle}. As an easy warm-up to our main results, we observe that this is not always possible using a quadratic computation tree.

\begin{theorem}\label{thm:force-quad}
There exist a graph with seven vertices such that it is not possible in a quadratic computation tree to compute the coordinates of every possible Fruchterman and Reingold equilibrium or
% For this graph, it is also not possible on a quadratic computation tree to compute the coordinates of 
every possible Kamada and Kawai equilibrium.
\end{theorem}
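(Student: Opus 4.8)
The plan is to take $G$ to be the seven-cycle $C_7$ and to exploit the fact that its ``expected'' force-directed drawing is a regular heptagon, whose \emph{shape} (independent of position and scale) is controlled by a primitive seventh root of unity, which by Lemmas~\ref{lem:extensions} and~\ref{lem:roots-of-unity} is not constructible in a quadratic computation tree.

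First I would check that the regular heptagon is an equilibrium in both models. For Fruchterman--Reingold, restrict to configurations placing the vertices at the corners of a regular heptagon of circumradius $R$ about the origin, with cyclically adjacent vertices adjacent in $C_7$. By the dihedral symmetry $D_7$ the net force at each vertex is radial, and its inward magnitude has the form $aR^{2}-b/R$ for positive constants $a,b$ (the attractive contribution scales like the squared edge length, the repulsive one like the reciprocal of a distance); this is negative for small $R$ and positive for large $R$, so by the intermediate value theorem there is a radius $R^{\ast}>0$ at which every vertex is in force balance. For Kamada--Kawai, the $D_7$-invariance of $E$ forces the gradient at a symmetric heptagon to lie in the one-dimensional space of symmetric (``breathing'') displacements, so the minimizer of $E$ over the one-parameter family of regular heptagons (which exists, since $E\to\infty$ as $R\to\infty$) is a critical point of $E$; a Hessian computation, or a decomposition of the Hessian into $D_7$-isotypic blocks, then confirms that this critical point is a local minimum on the full configuration space.

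Next I would extract the obstruction. Let $p_1,p_2,p_3\in\mathbf{C}$ be the coordinates of three consecutive vertices of such a heptagonal equilibrium. Since successive edge vectors of a regular heptagon differ by rotation through the exterior angle $2\pi/7$, the ratio $\zeta:=(p_3-p_2)/(p_2-p_1)$ is a primitive seventh root of unity; moreover $\zeta$ is unchanged by any translation, rotation, or scaling of the drawing, and is replaced by $\overline{\zeta}$ (again a primitive seventh root of unity) under a reflection. Now suppose for contradiction that some quadratic computation tree outputs the coordinates of every Fruchterman--Reingold (resp.\ Kamada--Kawai) equilibrium of $C_7$; in particular it outputs $p_1,p_2,p_3$ for the heptagonal equilibrium. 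Appending the field operations forming $(p_3-p_2)/(p_2-p_1)$ gives a quadratic computation tree computing $\zeta_7$. But by Lemma~\ref{lem:roots-of-unity}, $[\rationals(\zeta_7):\rationals]=\phi(7)=6$, which is not a power of two, contradicting Lemma~\ref{lem:extensions}.

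I expect the main obstacle to be the verification in the second step that the regular heptagon is genuinely an equilibrium --- especially for Kamada--Kawai, where the first-order conditions stated in the excerpt only pin down a critical configuration, so one must separately rule out that the symmetric critical point is a saddle. Everything downstream (the similarity-invariance of $\zeta$ and the degree argument) is routine given Lemmas~\ref{lem:extensions} and~\ref{lem:roots-of-unity}.
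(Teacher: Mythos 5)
Your proposal is correct and follows essentially the same route as the paper: take $C_7$, use the regular-heptagon drawing as an equilibrium of both models, extract a primitive seventh root of unity, and conclude via Lemmas~\ref{lem:roots-of-unity} and~\ref{lem:extensions} since $\phi(7)=6$ is not a power of two. The only differences are minor: the paper simply asserts that the heptagon is an equilibrium rather than sketching a verification, and it obtains $\zeta_7$ as the ratio of the complex coordinates of two adjacent vertices in the origin-centered drawing, whereas you use the (translation-invariant) ratio of successive edge vectors.
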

\begin{proof}
Let $G$ be the cycle $C_7$ on seven vertices. Both algorithms have the embedding shown in Figure~\ref{fig:seven-cycle} (suitably scaled) as an equilibrium. In this embedding let $a$ and $b$ be two neighboring vertices and $\alpha$ and $\beta$ their corresponding complex coordinates. Then $\alpha / \beta$ is equal to $\pm \zeta_7$ the seventh root of unity. By Lemma~\ref{lem:roots-of-unity}
\[
[\rationals(\zeta_7) : \rationals] = \phi(7) = 6.
\]
Since $6$ is not a power of two, Lemma~\ref{lem:extensions} implies that $\zeta_7$ cannot be constructed by a quadratic computation tree. Therefore, neither can this embedding.
\end{proof}

\begin{theorem}\label{thm:force-root}
For arbitrarily large values of $n$, there are graphs on $n$ vertices such that constructing the coordinates of all Fruchterman and Reingold equilibria on a root computation tree requires degree $\Omega(n^{0.677})$. If there exists infinitely many Sophie Germain primes, then there are graphs for which computing the coordinates of any Fruchterman and Reingold equilibria requires degree $\Omega(n)$. The same results with the same graphs hold for Kamada and Kawai equilibria.
\end{theorem}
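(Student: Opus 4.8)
The plan is to reuse the cycle $C_n$ of Theorem~\ref{thm:force-quad}, but now to control how badly $\phi(n)$ fails to be smooth rather than merely that it is not a power of two. First I would record, exactly as in the warm-up, that the regular $n$-gon inscribed in a circle of a suitable radius $R$ is an equilibrium of $C_n$ in both models. For Fruchterman--Reingold this is a short symmetry argument: the dihedral symmetry $D_n$ of the configuration forces the net force at each vertex to be purely radial and of the same magnitude, and balancing the $n-1$ equal radial repulsions against the two equal radial attractions shows this common force vanishes at exactly one positive radius $R$. For Kamada--Kawai the same symmetry argument shows the regular $n$-gon is a critical point of the energy (its restriction to regular $n$-gons is a quadratic function of $R$ that is minimized at a unique positive value of $R$), and one verifies as for the seven-cycle that this critical point is a local minimum, so it is a Kamada--Kawai equilibrium.

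In this equilibrium the complex coordinates are $R\zeta_n^{\,j}$ up to rotation, so two cyclically adjacent vertices have coordinate ratio a primitive $n$-th root of unity $\zeta_n$. Consequently, if a root computation tree of degree $d$ produces all of these coordinates, then taking the two subtrees that output a pair of adjacent coordinates and dividing their outputs yields a root computation tree of degree at most $d$ that computes $\zeta_n$; Lemma~\ref{lem:extensions} then says $[\rationals(\zeta_n):\rationals]$ is $d$-smooth, while Lemma~\ref{lem:roots-of-unity} identifies this degree as $\phi(n)$. Hence $d$ is at least the largest prime factor of $\phi(n)$. As in Theorem~\ref{thm:force-quad}, this already certifies that the full set of equilibria cannot be produced by a low-degree root tree, since the regular $n$-gon is one of them; the same bound holds for each ``star polygon'' equilibrium $\{n/q\}$, since $\gcd(q,n)=1$ makes $\zeta_n^{\,q}$ a primitive $n$-th root of unity as well.

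It remains only to choose $n$. Letting $n$ range over the primes supplied by Lemma~\ref{lem:large-prime-factor-of-phi} gives $\phi(n)=n-1$ with a prime factor at least $n^{0.677}$, so the degree required is $\Omega(n^{0.677})$; this is the unconditional statement. If instead there are infinitely many Sophie Germain primes $p$, take $n=2p+1$, which is prime and has $\phi(n)=2p$, so the degree required is at least $p=(n-1)/2=\Omega(n)$. Both families of graphs are cycles, so nothing changes for Kamada--Kawai beyond the check, already noted, that the regular $n$-gon is an equilibrium in that model too.

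I expect the only real work to lie outside the Galois-theoretic core, which is a two-line deduction from Lemmas~\ref{lem:extensions}, \ref{lem:roots-of-unity}, and \ref{lem:large-prime-factor-of-phi} once the root-of-unity ratio is exposed. The delicate points are the analytic claims that the regular $n$-gon really is an equilibrium in each model --- in particular that it is a local minimum of the Kamada--Kawai energy modulo translations and rotations, which needs a Hessian computation --- and being precise about which equilibria the statement ranges over: the argument above prices the regular (and star) polygon equilibria, hence settles ``computing the set of all equilibria,'' whereas showing that \emph{every} equilibrium of $C_n$ has high-degree coordinates would require analyzing the full solution variety of the force equations and is, as I read the statement, not needed.
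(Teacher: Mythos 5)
Your proposal is correct and follows essentially the same route as the paper: take the cycle $C_n$ with its canonical regular-polygon equilibrium, observe that the ratio of two adjacent complex coordinates is a primitive $n$-th root of unity, apply Lemma~\ref{lem:extensions} and Lemma~\ref{lem:roots-of-unity} to force the tree degree to be at least the largest prime factor of $\phi(n)$, and then choose $n$ prime via Lemma~\ref{lem:large-prime-factor-of-phi} (unconditional $\Omega(n^{0.677})$) or $n=2q+1$ for a Sophie Germain prime $q$ ($\Omega(n)$). The extra material you include (the symmetry/equilibrium verification and the star-polygon remark) goes beyond what the paper writes out but does not change the argument.
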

\begin{proof}
As in the previous theorem we consider embedding cycles with their canonical embedding, which is an equilibrium for both algorithms. The same argument used in the previous theorem shows we can construct $\zeta_n$ from the coordinates of the canonical embedding of the cycle on $n$ vertices.

We consider cycles with $p$ vertices where $p$ is a prime number for which $\phi(p) = p - 1$ has a large prime factor $q$. If arbitrarily large Sophie Germain primes exist we let $q$ be such a prime and let $p=2q+1$. Otherwise, by Lemma~\ref{lem:large-prime-factor-of-phi} we choose $p$ in such a way that its largest prime factor $q$ is at least $p^{0.677}$. Now, by Lemma~\ref{lem:roots-of-unity} we have:
\[
[\rationals(\zeta_p) : \rationals] = \phi(p) = p - 1.
\]
This extension is not $D$-smooth for any $D$ smaller than $q$, and therefore every construction of it on a root computation tree requires degree at least $q$.
\end{proof}

Thus, such drawings are not possible on a bounded-degree root computation tree.

\Emph{Radical computation trees.}
To show that the coordinates of a Fruchterman and Reingold equilibrium are in general not computable with a radical computation tree we consider embedding the path with three edges, shown in Figure~\ref{fig:fr-path}. 
We assume that all of the vertices are embedded colinearly and without edge or vertex overlaps. These assumptions correspond to the equilibrium that is typically produced by the Fruchterman and Reingold algorithm.

\begin{figure}[tb]
\begin{minipage}[t]{0.4\textwidth} 
\centering
\includegraphics[scale=0.4]{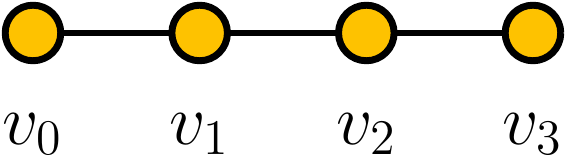}
\caption{A graph whose Fruchterman--Reingold coordinates cannot be computed by a radical computation tree.}
\label{fig:fr-path}
\end{minipage}\hfill
\begin{minipage}[t]{0.4\textwidth} 
\centering
\includegraphics[scale=0.4]{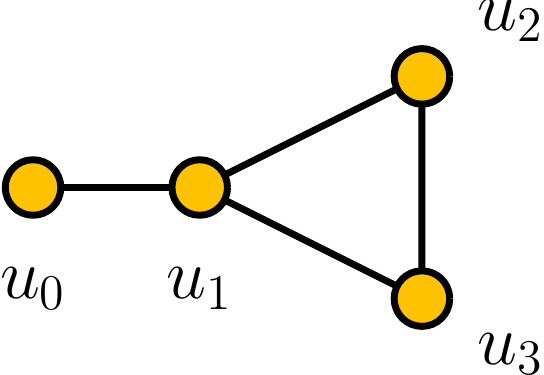}
\caption{A graph whose Kamada--Kawai coordinates cannot be computed by a radical computation tree.}
\label{fig:kk-radical}
\end{minipage}
\end{figure}

Let $a > 0$ be the distance from $v_0$ to $v_1$ (equal by symmetry to the distance from $v_2$ to $v_3$) and let $b > 0$ be the distance from $v_1$ to $v_2$. We can then express the sum of all the forces at vertex $v_0$ by the equation
\[% Force at v_0
F_0 = a^2
-\frac{1}{a}
-\frac{1}{a+b}
-\frac{1}{2a+b}
=
\frac{2a^5 + 3a^4b + a^3b^2 - 5a^2 - 5ab - b^2}{2a^3 + 3a^2b + ab^2},
\]
and the sum of all the forces at vertex $v_1$ by the equation
\[
F_1 = 
-a^2
+\frac{1}{a}
+b^2
-\frac{1}{b}
-\frac{1}{a+b}
=
\frac{-a^4b - a^3b^2 + a^2b^3 - a^2 + ab^4 - ab + b^2}{a^2b + ab^2}.
\]
In an equilibrium state we have $F_1 = F_2 = 0$. Equivalently, the numerator $p$ of $F_1$ and the numerator $q$ of $F_2$ are both zero, where
\begin{align*}
p(a,b) &= 2a^5 + 3a^4b + a^3b^2 - 5a^2 - 5ab - b^2 = 0 \\
q(a,b) &= -a^4b - a^3b^2 + a^2b^3 - a^2 + ab^4 - ab + b^2 = 0.
\end{align*}
To solve this system of two equations and two unknowns we can eliminate variable~$a$ and produce the following polynomial, shown as a product of irreducible polynomials, whose roots give the values of $b$ that lead to a solution.
\[
\frac{1}{3} b^2 (3b^{15} - 48b^{12} + 336b^9 - 1196b^6 + 1440b^3 + 144).
\]
The factor $b^2$ corresponds to degenerate drawings and may safely be eliminated.
Let $f$ be the degree-fifteen factor; then $f(x)=g(x^3)$ for a quintic polynomial~$g$. A radical computation tree can compute the roots of $f$ from the roots of $g$, so we need only show that the roots of $g$ cannot be computed in a radical computation tree. To do this, we convert $g$ to a monic polynomial $h$ with the same splitting field, via the transformation
\[
h(x) =\frac{x^5}{144} g(6/x) = x^5 + 60x^4 - 299x^3 + 504x^2 - 432x + 162.
\]
The polynomial $h$ can be shown to be irreducible by manually verifying that it has no linear or quadratic factors. Its discriminant is $-2^6 \cdot 3^9 \cdot 2341^2 \cdot 2749$, and $h$ factors modulo primes $5$ and $7$ (which do not divide the discriminant) into irreducibles:
\begin{align*}
h(x) \equiv (x + 1)  (x^4 + 3x^3 + 6x^2 + x + 1)&\pmod{7}\\
h(x) \equiv (x^2 + 3x + 4)  (x^3 + 2x^2 + x + 3) &\pmod{5}.
\end{align*}
By Dedekind's theorem, the factorization modulo 7 implies the existence of a $4$-cycle in $\Gal(\splitting(h)/\rationals)$, and the factorization modulo $5$ implies the existence of a permutation that is the composition of a transposition and a $3$-cycle.
 Raising the second permutation to the power $3$ yields a transposition. 
By Lemma~\ref{lem:swap+cycle}, $\Gal(\splitting(h)/\rationals) = S_5$. 
So by Lemma~\ref{lem:radical-tree} the value of $b$ cannot be computed by a radical computation tree. 
Thus, we cannot compute the equilibrium coordinates of the path with three edges under the assumptions that the vertices are collinear and there are no vertex or edge overlaps.

\begin{theorem}
There exists a graph on four vertices such that it is not possible on a radical computation tree to construct the coordinates of every possible Fruchterman and Reingold equilibrium.
\end{theorem}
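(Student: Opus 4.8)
The plan is to let $G$ be the path $P_4$ on the four vertices $v_0,v_1,v_2,v_3$ with edges $v_0v_1,v_1v_2,v_2v_3$ — exactly the graph dissected in the discussion above — and to upgrade that discussion into a proof that no radical computation tree can output the coordinates of all Fruchterman and Reingold equilibria of $G$.

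First I would pin down a concrete equilibrium to attack. Among the Fruchterman and Reingold equilibria of $P_4$ is the ``expected'' drawing in which the four vertices are collinear, occur in the order $v_0,v_1,v_2,v_3$ along a line, and have no vertex or edge overlap; by the reflection symmetry of the path such a configuration is determined by the two distances $a=\dist(v_0,v_1)=\dist(v_2,v_3)$ and $b=\dist(v_1,v_2)$, and requiring the net force to vanish at $v_0$ and at $v_1$ yields the displayed polynomial system $p(a,b)=q(a,b)=0$. Here I would insert the one step not already present above: checking that this system actually has a solution with $a,b>0$ realizing a non-degenerate collinear drawing — e.g.\ by exhibiting a sign change of the degree-fifteen factor $f$ of the resultant on the positive real axis together with a matching positive value of $a$, or by a direct continuity/compactness argument on the Fruchterman--Reingold energy restricted to ordered collinear configurations — so that the algebraic number $b$ we analyze really is the $v_1v_2$-distance of a genuine equilibrium and not a spurious root of the eliminant. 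I would then fix this equilibrium in canonical position, $v_0$ at the origin and all vertices on the real axis with $v_1$ at $a$, $v_2$ at $a+b$, $v_3$ at $2a+b$, so that its coordinate list determines $b$.

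Next I would invoke the algebra already carried out: eliminating $a$ from $p=q=0$ produces $\tfrac13 b^2 f(b)$ with $f(x)=g(x^3)$ for a quintic $g$, and the monic polynomial $h(x)=\tfrac{x^5}{144}\,g(6/x)=x^5+60x^4-299x^3+504x^2-432x+162$ has the same splitting field as $g$ (its roots are $6/\gamma$ for $\gamma$ a root of $g$, and $g(0)\neq 0$). Since $h$ has no linear or quadratic factor it is irreducible, so $\Gal(\splitting(h)/\rationals)$ is transitive; its discriminant $-2^6\cdot 3^9\cdot 2341^2\cdot 2749$ is coprime to $5$ and $7$; and by Dedekind's theorem the stated factorizations of $h$ modulo $7$ and modulo $5$ put a $4$-cycle and a product of a transposition with a disjoint $3$-cycle in the group, whose cube is a transposition. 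Lemma~\ref{lem:swap+cycle} then gives $\Gal(\splitting(h)/\rationals)=S_5$, and Lemma~\ref{lem:radical-tree} shows that no root of $h$ — hence, by the same splitting field, no root of $g$ — is computable by a radical computation tree.

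Finally I would close the loop. Suppose some radical computation tree outputs the coordinates of every Fruchterman and Reingold equilibrium of $P_4$; then in particular it outputs the coordinates of the canonical collinear equilibrium above, from which $b=\dist(p_1,p_2)$ is obtained by field operations and one square root, all available in a radical computation tree, so $b$ is radical-computable, and hence so is $b^3$ by one further multiplication. But $f(b)=g(b^3)=0$, so $b^3$ is a root of $g$, contradicting the previous paragraph. Therefore no such tree exists, proving the theorem. The only genuinely new obstacle beyond what precedes the statement is the realizability check in the second step — confirming that the collinear, overlap-free configuration is attained by an honest equilibrium with $a,b>0$ — and I expect that, rather than any additional Galois-theoretic input, to be the point that requires the most care.
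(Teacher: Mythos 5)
Your proposal follows essentially the same route as the paper's own proof: the collinear drawing of the three-edge path, elimination of $a$ to get the factor $f(b)=g(b^3)$, passage to the monic quintic $h$, Dedekind's theorem with the factorizations modulo $7$ and $5$, Lemma~\ref{lem:swap+cycle} and Lemma~\ref{lem:radical-tree} to conclude $b$ is not radical-computable. The only difference is that you explicitly flag (and sketch) the realizability check that the collinear overlap-free equilibrium with $a,b>0$ actually exists, a point the paper treats as an assumption about the ``typical'' equilibrium rather than proving; this is a reasonable strengthening but not a different approach.
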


To show that the coordinates of a Kamada and Kawai equilibrium are in general not computable with a radical computation tree we consider the graph depicted in Figure~\ref{fig:kk-radical}.

\begin{theorem}
\label{thm:kk-radical}
There exists a graph on four vertices such that it is not possible on a radical computation tree to construct the coordinates of every possible Kamada and Kawai equilibrium.
\end{theorem}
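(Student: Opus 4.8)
The plan is to follow the same template as the Fruchterman--Reingold argument just completed, now applied to the four-vertex graph $G$ of Figure~\ref{fig:kk-radical}. First I would exploit the reflective symmetry of $G$ --- its automorphism group contains an involution that fixes two vertices and exchanges the other two --- which the ``expected'' Kamada--Kawai drawing inherits. Placing the axis of symmetry on the $x$-axis and pinning one vertex at the origin to remove translational freedom leaves only a constant number of real unknowns: the $x$-coordinates of the two fixed vertices and the half-height of the exchanged pair, say $s$, $t$, $u$. As in the force-directed discussion, one should also note that this symmetric configuration is the equilibrium a Kamada--Kawai implementation actually converges to, i.e. that it is a local minimum of $E$ rather than merely a stationary point.

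Next I would write out the stationarity conditions $\partial E/\partial x_j = \partial E/\partial y_j = 0$ using $\ell_{ij} = d_{ij}$ and $k_{ij} = 1/d_{ij}^2$ for the graph distances $d_{ij}$ of $G$. Since the Euclidean distances $\dist(p_i,p_j)$ occur in the denominators, I would introduce one auxiliary variable $r_k$ per nontrivial pairwise distance together with its defining polynomial equation (of the form $r_k^2 = (\Delta x)^2 + (\Delta y)^2$), clear all denominators, and thereby replace the transcendental-looking conditions by a polynomial system in $s,t,u,r_1,r_2,\dots$. Eliminating all variables but one --- by iterated resultants or a Gr\"obner basis --- yields a univariate polynomial $P$ whose roots encode the admissible drawings.

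I would then factor $P$, discard the factors corresponding to degenerate drawings (coincident vertices or a collapsed configuration) and the spurious factors created by squaring away the radicals, and isolate the factor carrying the genuine equilibrium. As in the path example, this factor need not itself be a quintic: it may decompose as $g(x^m)$ for a quintic $g$, a substitution a radical computation tree can invert, so it suffices to reduce matters to such a $g$. After a scaling substitution $x \mapsto c/x$, replace $g$ by a monic integer polynomial $h$ with the same splitting field. Exactly as for the three-edge path I would then (i) check by hand that $h$ has no linear or quadratic factor, hence is irreducible; (ii) compute its discriminant $\Delta$; (iii) reduce $h$ modulo two primes not dividing $\Delta$, finding one reduction that splits as (linear)(irreducible quartic) and one that splits as (irreducible quadratic)(irreducible cubic). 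By Dedekind's theorem these give, respectively, a $4$-cycle and a product of a transposition with a $3$-cycle in $\Gal(\splitting(h)/\rationals)$; the cube of the latter is a transposition. Since an irreducible quintic forces the Galois group to be transitive, Lemma~\ref{lem:swap+cycle} upgrades this to $\Gal(\splitting(h)/\rationals) = S_5$, and Lemma~\ref{lem:radical-tree} then shows the relevant coordinate --- and hence the whole Kamada--Kawai drawing of $G$ --- cannot be produced by a radical computation tree.

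The main obstacle is the elimination step together with the control of the resulting polynomial. Introducing square-root variables inflates the system, and after elimination one must correctly separate the physically meaningful factor from the degenerate-drawing and sign-choice artifacts; there is no guarantee a priori that the surviving factor reduces to a quintic or that its Galois group is unsolvable, so some experimentation with the choice of the four-vertex graph and with the symmetry reduction may be needed to land on a clean instance (as the authors evidently did for Figure~\ref{fig:kk-radical}). Everything after the polynomial $h$ is in hand is then the same finite, mechanical verification --- irreducibility, discriminant, two modular factorizations, Dedekind plus Lemma~\ref{lem:swap+cycle} --- used in the Fruchterman--Reingold case.
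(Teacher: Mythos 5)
Your plan follows exactly the paper's route: reduce the expected symmetric equilibrium to a few unknowns, write the stationarity conditions as a polynomial system (the paper also uses auxiliary distance variables $a,b,c,d,e$ with constraints $d^2=b^2+c^2$ and $e^2=(a+b)^2+c^2$, under the assumption that the two lines through $u_0,u_1$ and $u_2,u_3$ meet at a right angle), eliminate to a univariate polynomial, discard the degenerate factor, pass to a monic integer polynomial with the same splitting field, and then apply Dedekind's theorem, Lemma~\ref{lem:swap+cycle}, and Lemma~\ref{lem:radical-tree}. So there is no methodological divergence.

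The genuine gap is that the step which actually constitutes the proof is never carried out. The theorem is an existence claim whose truth rests entirely on the explicit computation for the specific graph of Figure~\ref{fig:kk-radical}: producing the eliminant, isolating the nondegenerate factor, and certifying irreducibility, the discriminant, and two modular factorizations. You defer all of this (``some experimentation \dots may be needed''), and your anticipated shape of the answer is in fact wrong for this graph: the Gr\"obner-basis elimination gives a degree-18 polynomial in the half-height $c$ that factors as $c^4$ times an irreducible degree-14 polynomial --- there is no reduction to a quintic or to $g(x^m)$ --- and the Galois group obtained is $S_{14}$, via a 13-cycle from the factorization modulo 67 and a transposition-times-11-cycle modulo 113 (raised to the 11th power), not the $S_5$ pattern you describe. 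None of this breaks your framework, since Lemma~\ref{lem:swap+cycle} and Lemma~\ref{lem:radical-tree} apply for any $n\ge 5$, but until the polynomial is exhibited and these facts are verified (by hand or by a computer algebra system, as the paper does), what you have is a correct plan rather than a proof.
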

\begin{proof}
See Appendix~\ref{app:proofs}.
\end{proof}

\ifFull
In this section, we proved that for both algorithms at least one of the equilibria is impossible to compute in a root computation tree or a radical computation tree. 
It is important to notice that the impossible equilibrium was the expected and desired outcome of the algorithm.
\fi

\section{Impossibility Results for Spectral Graph Drawing}
\Emph{Root computation trees.}
We begin with the following result for root computation trees.

\begin{theorem}
For arbitrarily large values of $n$, there are graphs on $n$ vertices such that constructing spectral graph drawings based on the adjacency, Laplacian, relaxed Laplacian, or transition matrix requires a root computation tree of degree $\Omega(n^{0.677})$. If there exist infinitely many Sophie Germain primes, then there are graphs for which computing these drawings requires degree $\Omega(n)$.
\end{theorem}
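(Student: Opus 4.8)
The plan is to reuse the cycle graphs $C_p$ from the proof of Theorem~\ref{thm:force-root}, where $p$ is a prime chosen so that $p-1$ has a large prime factor $q$: either $q$ a Sophie Germain prime with $p = 2q+1$ (assuming the conjecture), or else $q \ge p^{0.677}$ by Lemma~\ref{lem:large-prime-factor-of-phi}. The idea is that the standard spectral drawing of $C_p$ is a regular $p$-gon (up to rotation, reflection, and scaling), from whose vertex coordinates one can extract a primitive $p$-th root of unity; Lemmas~\ref{lem:roots-of-unity} and~\ref{lem:extensions} then force the degree of any root computation tree producing the drawing to be at least $q$, and taking $n = p$ gives the stated bounds.

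In more detail, I would first invoke Lemma~\ref{lem:regular-eigenvalues} — together with the observation that the relaxed Laplacian of a regular graph is likewise a linear combination of $I_n$ and $A$, hence has the same eigenvectors — to reduce all four matrix variants to a single eigenvector computation. By Lemma~\ref{lem:cycle-eigenvalues} the eigenvalues of $\adjacency(C_p)$ are $2\cos(2\pi k/p)$, $0 \le k < p$, with eigenvectors the discrete-Fourier vectors $u_k = (\zeta_p^{kj})_{0\le j<p}$. For $p$ odd and prime, the eigenvalue used by the standard two-dimensional spectral drawing (the second-smallest Laplacian eigenvalue, equivalently the second-largest eigenvalue of $A$ or $T$) has multiplicity exactly $2$, with eigenspace spanned by $u_1$ and $u_{p-1}$, i.e.\ by the real vectors $(\cos(2\pi j/p))_j$ and $(\sin(2\pi j/p))_j$. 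Since these two vectors are already orthogonal with equal norm, any orthogonal balanced basis of the eigenspace chosen for the two coordinate axes is a rotation or reflection of them, so vertex $j$ lands at a point proportional to $e^{\pm i(2\pi j/p - \theta)}$ for some angle $\theta$; thus the drawing is a regular $p$-gon up to similarity.

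Next, given the coordinates produced by the tree, I would form the complex numbers $\alpha = x_a + i y_a$ and $\beta = x_b + i y_b$ for two adjacent vertices $a,b$ (the constant $i$ being computable as a root of $x^2+1$), and observe that $\alpha/\beta = \zeta_p^{\pm 1}$, a primitive $p$-th root of unity since $\gcd(1,p)=1$. Hence the tree computes a generator of $\rationals(\zeta_p)$, and by Lemma~\ref{lem:roots-of-unity} we have $[\rationals(\zeta_p):\rationals] = \phi(p) = p-1$. By Lemma~\ref{lem:extensions} the degree $f(n)$ of the tree must make $p-1$ be $f(n)$-smooth, forcing $f(n) \ge q$. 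With $n = p$ this yields the $\Omega(n^{0.677})$ bound unconditionally and the $\Omega(n)$ bound under the Sophie Germain conjecture, for all four matrix types simultaneously.

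The main obstacle I anticipate is pinning down exactly what ``the spectral drawing'' is so that the argument is robust: the relevant eigenvalue has multiplicity two, so the drawing is determined only up to an orthogonal change of basis within the eigenspace, and conventions differ on normalization and on whether one ranks the eigenvalues of $A$, $L$, $T$, or $L_\rho$ from the top or the bottom. I would handle this by noting that every such choice still places the vertices at a regular $p$-gon, possibly ``twisted'' by using $u_k$ with $k\neq 1$; because $p$ is prime, any such twist still produces adjacent-vertex coordinate ratios $\zeta_p^{\pm k}$ that generate $\rationals(\zeta_p)$, leaving the degree bound intact. The remaining steps — the reduction across matrix types, the root-of-unity extraction, and the number-theoretic bookkeeping — are routine given the lemmas already established.
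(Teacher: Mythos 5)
Your proposal is correct and follows essentially the same route as the paper: reduce all four matrices to the adjacency matrix of a prime cycle $C_p$ via Lemma~\ref{lem:regular-eigenvalues}, recover $\zeta_p$, and apply Lemmas~\ref{lem:roots-of-unity}, \ref{lem:extensions}, and \ref{lem:large-prime-factor-of-phi} (or the Sophie Germain construction) exactly as in Theorem~\ref{thm:force-root}. The only cosmetic difference is that you extract $\zeta_p$ from ratios of adjacent vertex coordinates after arguing the drawing is a regular $p$-gon, whereas the paper sidesteps the multiplicity-two/normalization discussion by noting that any computed eigenvector of the rational matrix yields its eigenvalue $2\cos(2\pi k/p)$, from which $\zeta_p$ follows by one quadratic extension.
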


\begin{proof}
Since all of the referenced matrices have rational entries, it suffices to consider the computability of their eigenvalues. 
Further, if we restrict our attention to regular graphs it suffices to consider the eigenvalues of just the adjacency matrix, 
$M = \adjacency(G)$, by Lemma~\ref{lem:regular-eigenvalues}. Let $p$ be a prime and $G$ the cycle on $p$ vertices. By Lemma~\ref{lem:cycle-eigenvalues} the eigenvalues of $A = \adjacency(G)$ are given by $2\cos(2\pi k/p)$ for $0 \leq k \leq p-1$.
In a root computation tree of degree at least $2$ the primitive root of unity $\zeta_p = \exp(2i\pi/p)$ can be computed from $2\cos(2\pi k/p)$ for all $k \neq 0$. 
Therefore, from the proof of Theorem~\ref{thm:force-root},
for arbitrarily large~$n$, 
there are graphs on $n$ vertices such that $M$ has one rational eigenvector (for $k=0$) and 
the computation of any other eigenvector on a root computation tree requires degree $\Omega(n^{0.667})$.
If infinitely many Sophie Germain primes exist, there are graphs for which computing these eigenvectors requires degree $\Omega(n)$.
\end{proof}

Thus, such drawings are not possible on a bounded-degree root computation tree.

\Emph{Radical computation trees.} % Spectral
To show that in general the eigenvectors associated with a graph are not constructible with a radical tree we consider the graph, $Y$, on nine vertices in Figure~\ref{fig:y-graph} for the Laplacian and relaxed Laplacian matrices, and the graph, $H$, on twelve vertices in Figure~\ref{fig:h-graph} for the adjacency and transition matrices.

\begin{figure}[htb]
\vspace*{-12pt}
\centering
\includegraphics[scale=0.4]{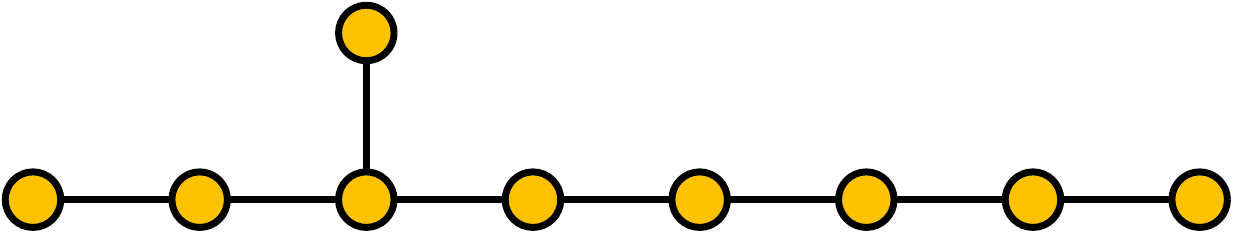}
\caption{A graph $Y$ whose Laplacian eigenvectors are uncomputable by a radical tree.}
\label{fig:y-graph}
\end{figure}

\ifFull
The Laplacian matrix for $Y$ is given by
\[
\laplacian(Y) =
\begin{pmatrix}
1&-1&  &  &  &  &  &  &  \\
-1& 2&-1&  &  &  &  &  &  \\
 &-1& 3&-1&  &  &  &  &-1\\
 &  &-1& 2&-1&  &  &  &  \\
 &  &  &-1& 2&-1&  &  &  \\
 &  &  &  &-1& 2&-1&  &  \\
 &  &  &  &  &-1& 2&-1&  \\
 &  &  &  &  &  &-1& 1&  \\
 &  &-1&  &  &  &  &  & 1\\
\end{pmatrix} 
\]
and its characteristic polynomial, $ p(x) = \det(M - xI)$,
can be computed to be
\else
The characteristic polynomial, $ p(x) = \det(M - xI)$,
for the Laplacian matrix for $Y$,
can be computed to be
\fi
\begin{multline*}
p(x) = \characteristic(\laplacian(Y))\\ = x   (x^8 - 16 x^7 + 104 x^6 - 354 x^5 + 678 x^4 - 730 x^3 + 417 x^2 - 110 x + 9).
\end{multline*}

\begin{lemma}[St\"{a}ckel \cite{Sta1918}]\label{lem:2n+1}
If $f(x)$ is a polynomial of degree $n$ with integer coefficients and $\vert f(k) \vert$ is prime for $2n+1$ values of $k$, then $f(x)$ is irreducible.
\end{lemma}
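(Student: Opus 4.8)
The plan is a short proof by contradiction built around a root count for a single auxiliary polynomial. Suppose $f$ is reducible over $\rationals$. By Gauss's lemma it then factors as $f = gh$ with $g,h \in \integers[x]$, and we may take both $g$ and $h$ to be nonconstant: if one factor were a constant $c$, then $c$ would divide $f(k)$ for every integer $k$, and since $|f(k)|$ is prime for at least one $k$ we would get $c = \pm 1$, a unit, contradicting the nontriviality of the factorization. Write $\deg g = d$ and $\deg h = e$, so that $d,e \ge 1$ and $d + e = n$. Here I would also make explicit that the $2n+1$ values of $k$ in the hypothesis are distinct integers, which is the intended reading.

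The key observation is that at each integer $k$ with $|f(k)|$ prime we have $|g(k)|\cdot|h(k)| = |f(k)|$, a prime, so exactly one of $|g(k)|$, $|h(k)|$ equals $1$. Consequently every such $k$ is a root of
\[
P(x) = \bigl(g(x)-1\bigr)\bigl(g(x)+1\bigr)\bigl(h(x)-1\bigr)\bigl(h(x)+1\bigr) = \bigl(g(x)^2-1\bigr)\bigl(h(x)^2-1\bigr).
\]
Since $g$ and $h$ are nonconstant, neither $g(x)^2-1$ nor $h(x)^2-1$ is the zero polynomial, and comparing leading coefficients shows $P$ is a nonzero polynomial of degree exactly $2d + 2e = 2n$. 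But a nonzero polynomial of degree $2n$ has at most $2n$ distinct roots, whereas we have just exhibited $2n+1$ distinct roots of $P$. This contradiction shows that no nontrivial factorization of $f$ exists, i.e., $f$ is irreducible.

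I do not expect a genuine obstacle here; the only points needing care are the two reductions at the start — invoking Gauss's lemma to move from a factorization over $\rationals$ to one over $\integers$, and eliminating the unit/constant-factor case so that both $g$ and $h$ have degree at least one. These are precisely what guarantee that $P$ is nonzero and of degree $2n$, which is the crux of the counting step.
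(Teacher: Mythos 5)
The paper does not prove this lemma at all---it is quoted from St\"ackel's 1918 paper as a known result---so there is no internal proof to compare against; judged on its own, your argument is correct and is in fact the standard proof of this statement. The root-counting step is exactly right: at each of the $2n+1$ distinct integers $k$ the primality of $|f(k)|=|g(k)|\,|h(k)|$ forces one of $|g(k)|,|h(k)|$ to equal $1$, so each such $k$ is a root of $(g^2-1)(h^2-1)$, a nonzero polynomial of degree $2n$, a contradiction. One small remark: your detour about a possible constant factor is both unnecessary and, as written, slightly off. Unnecessary, because the relevant corollary of Gauss's lemma already states that a factorization of $f\in\integers[x]$ over $\rationals$ into factors of degrees $d,e\ge 1$ can be replaced by a factorization over $\integers$ into factors of the \emph{same} degrees, so nonconstancy comes for free. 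Slightly off, because from $c\mid f(k)$ with $|f(k)|$ prime one can only conclude $c\in\{\pm 1,\pm|f(k)|\}$, not $c=\pm 1$ outright; ruling out $c=\pm p$ takes an extra line (e.g., then $f/c$ would take values $\pm1$ at all $2n+1$ points, again contradicting a degree count). Since the paper applies the lemma to deduce irreducibility over $\rationals$, dropping that aside and citing the degree-preserving form of Gauss's lemma makes your proof airtight.
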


Let $q = p(x) / x$. The polynomial $q$ is irreducible by Lemma~\ref{lem:2n+1}, as it produces a prime number for 17 integer inputs from $0$ to $90$. The discriminant of $q$ is $2^8 \cdot 9931583$ and we have the following factorizations of $q$ modulo the primes $31$ and $41$.
\begin{align*}
p_1(x) \equiv (x + 27)  (x^7 + 19x^6 + 25x^5 + 25x^4 + 3x^3 + 26x^2 + 25x + 21) &\pmod{31} \\
p_1(x) \equiv (x + 1)  (x^2 + 15x + 39)  (x^5 + 9x^4 + 29x^3 + 10x^2 + 36x + 16) &\pmod{41}.
\end{align*}
By Dedekind's theorem, the factorization modulo $31$ implies the existence of a $7$-cycle, and the factorization modulo $41$ implies the existence of a permutation that is the composition of a transposition and a $5$-cycle. The second permutation raised to the fifth power produces a transposition. Thus, Lemma~\ref{lem:swap+cycle} implies $\Gal(\splitting(p_1)/\rationals) = S_8$. So by Lemma~\ref{lem:radical-tree} the only eigenvalue of $\laplacian(Y)$ computable in a radical computation tree is 0.
For the relaxed Laplacian we consider the two variable polynomial $f(x,\rho) = \characteristic(\rlaplacian(Y))$. Since setting $\rho$ equal to $1$ produces a polynomial with Galois group $S_8$, Hilbert's irreducibility theorem tells us that the set of $\rho$ for which the Galois group of $f(x,\rho)$ is $S_8$ is dense in $\rationals$.

\begin{theorem}
There exists a graph on nine vertices such that it is not possible to construct a spectral graph drawing based on the Laplacian matrix in a radical computation tree. For this graph there exists a dense subset $A$ of $\rationals$ such that it is not possible to construct a spectral graph drawing based on the relaxed Laplacian with $\rho \in A$ in a radical computation tree.
\end{theorem}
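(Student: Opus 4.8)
The difficulty is concentrated entirely in the degree-eight factor $q(x) = p(x)/x$ of $\characteristic(\laplacian(Y))$ computed above; my plan is to (i) identify $\Gal(\splitting(q)/\rationals)$ with $S_8$, (ii) convert the resulting non-computability of roots of $q$ into non-computability of the drawing, and (iii) transfer this to the relaxed Laplacian by a specialization argument. Step (i) proceeds exactly as foreshadowed above: $q$ is irreducible over $\rationals$ by St\"ackel's criterion (Lemma~\ref{lem:2n+1}), since $q$ has degree eight and $|q(k)|$ is prime for seventeen of the integers $k$ between $0$ and $90$; its discriminant $2^8\cdot 9931583$ is coprime to $31$ and to $41$, so Dedekind's theorem applied to the two displayed reductions produces a $7$-cycle (from the mod-$31$ factorization into a linear and an irreducible degree-seven factor) and a permutation equal to a transposition times a $5$-cycle (from the mod-$41$ factorization), the cube of the latter being a transposition; since the Galois group of an irreducible polynomial is transitive, Lemma~\ref{lem:swap+cycle} gives $\Gal(\splitting(q)/\rationals) = S_8$. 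By Lemma~\ref{lem:radical-tree}, none of the eight roots of $q$---equivalently, none of the nonzero eigenvalues of $\laplacian(Y)$---is computable by a radical computation tree.

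For step (ii) I would argue that a drawing determines its eigenvalues. Because $q$ is irreducible with nonzero discriminant, each nonzero eigenvalue of $\laplacian(Y)$ is simple, so any eigenvector used by a spectral drawing (for the Laplacian, those belonging to the smallest positive eigenvalues) is determined up to a nonzero scalar. If such a drawing were computable by a radical computation tree, then the whole coordinate vector---a nonzero multiple $u$ of an eigenvector whose eigenvalue $\lambda$ is a root of $q$---would be radical-computable; choosing an index $i$ with $u_i \neq 0$, the relation $\laplacian(Y)\,u = \lambda u$ gives $\lambda = (\laplacian(Y)\,u)_i/u_i$, obtained from $u$ and the rational matrix $\laplacian(Y)$ using only additions, multiplications, and a single division, all available on a radical computation tree. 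Hence $\lambda$ would be radical-computable, contradicting step (i). This establishes the first assertion of the theorem.

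For step (iii) I would pass to the bivariate polynomial $f(x,\rho) = \characteristic(\rlaplacian(Y)) \in \rationals[\rho][x]$, whose specialization at $\rho=1$ is $x\,q(x)$. The Galois group of a good specialization is a subgroup of the Galois group over $\rationals(\rho)$, so the latter contains a copy of $S_8$, and a degree count then pins it down; Hilbert's irreducibility theorem, in its Galois-group-preserving form, supplies a subset $A\subseteq\rationals$ with thin complement---in particular dense---of values $\rho$ for which the specialized polynomial again has non-solvable Galois group with all of its nonzero roots forming a single Galois orbit. For every $\rho\in A$, repeating the eigenvector-to-eigenvalue reduction of step (ii) shows the relaxed-Laplacian drawing is not computable by a radical computation tree. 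The point I expect to need genuine care is in step (iii): since $Y$ is not regular, $\rlaplacian(Y)$ carries no obvious rational eigenvalue for $\rho\neq 1$, so one must inspect how $f$ factors over $\rationals(\rho)$---either $f$ is irreducible of degree nine, or it has an irreducible factor of degree eight specializing to $q$---and check in each case that the $\rho=1$ specialization forces a copy of $S_8$ (respectively $S_9$ acting on the nine roots through a point stabilizer, which is again non-solvable) into the generic Galois group before invoking Hilbert's theorem; once the generic group is known, the density of $A$ and the transfer of the impossibility are routine.
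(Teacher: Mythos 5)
Your proposal is essentially the paper's own proof: the same degree-eight factor $q$ of $\characteristic(\laplacian(Y))$, irreducibility via Lemma~\ref{lem:2n+1}, Dedekind's theorem at the primes $31$ and $41$, Lemma~\ref{lem:swap+cycle} to conclude $\Gal(\splitting(q)/\rationals)=S_8$, Lemma~\ref{lem:radical-tree} to rule out radical computation of the nonzero eigenvalues, and Hilbert's irreducibility theorem for the relaxed Laplacian. Your step (ii), recovering $\lambda=(\laplacian(Y)\,u)_i/u_i$ from a computed drawing, is exactly the reduction the paper makes explicit only in its multidimensional-scaling appendix, and your step (iii) is if anything more careful than the paper's one-sentence appeal to Hilbert irreducibility, since you insist on first identifying the factorization and Galois group of $f(x,\rho)$ over $\rationals(\rho)$ before specializing.

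The one concrete slip is in step (i): the permutation supplied by the mod-$41$ factorization is a transposition times a $5$-cycle, and its cube is again a transposition times a $5$-cycle, not a transposition (cubing does not kill an element of order five). You must raise it to the fifth power, as the paper does; with that exponent corrected, Lemma~\ref{lem:swap+cycle} applies and the rest of your argument goes through. A cosmetic remark: with the convention $\rlaplacian(Y)=\laplacian(Y)-\rho\degree(Y)$ used in the paper's appendix, the specialization that recovers $x\,q(x)$ is $\rho=0$ rather than $\rho=1$; this does not affect the structure of your step (iii).
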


In Appendix~\ref{app:spectral} we similarly prove that spectral drawings based on the adjacency matrix and the transition matrix cannot be constructed by a radical computation tree.  In Appendix~\ref{app:mds} we similarly prove that drawings produced by classical multidimensional scaling cannot be constructed by a radical computation tree.
\section{Impossibility Results for Circle Packings}

\Emph{Root computation trees.} % for circle packing
A given graph may be represented by infinitely many circle packings, related to each other by M\"obius transformations. But as we now show, if one particular packing cannot be constructed in our model, then there is no other packing for the same graph that the model can construct.

\begin{lemma}
\label{lem:concentric}
Suppose that a circle packing $P$ contains two concentric circles. Suppose also that at least one radius of a circle or distance between two circle centers, at least one center of a circle, and the slope of at least one line connecting two centers of circles in $P$ can all be constructed by one of our computation models, but that $P$ itself cannot be constructed. Then the same model cannot construct any circle packing that represents the same underlying graph as $P$.
\end{lemma}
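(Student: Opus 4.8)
\noindent The plan is to argue by contradiction. Fix the computation model $\mathcal{M}$ in question, and suppose that, contrary to the conclusion, $\mathcal{M}$ can construct some circle packing $Q$ representing the same graph as $P$; I will show that $\mathcal{M}$ can then also construct $P$, contradicting the hypothesis. By the uniqueness part of the circle packing theorem, two circle packings representing the same graph are related by a single M\"obius transformation, possibly composed with complex conjugation; write $\phi$ for one taking $Q$ to $P$, so $P=\phi(Q)$. The construction of $Q$ gives us the center and radius of every circle of $Q$, and the two concentric circles of $P$ correspond under $\phi$ to two particular disjoint circles $C_1,C_2$ of $Q$; so by Lemma~\ref{lem:mob-trans} we can construct a M\"obius transformation $\nu$ taking $C_1,C_2$ to a concentric pair. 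Applying $\nu$ to a circle is a matter of rational arithmetic in the circle's data together with square roots and complex conjugation, all of which are available in every one of our models, so $\nu(Q)$ is $\mathcal{M}$-constructible; it represents the same graph and has two of its circles concentric, say with common center at the origin (after a constructible translation) and radii $r_1<r_2$. Setting $\psi=\phi\circ\nu^{-1}$, we have $\psi(\nu(Q))=P$, and $\psi$ carries $\{|z|=r_1\}\cup\{|z|=r_2\}$ onto the concentric pair of $P$.

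The heart of the argument is that carrying one concentric pair to another forces $\psi$ into a very restricted shape. The family $\{|z|=r:r>0\}$ together with $0$ and $\infty$ is the hyperbolic pencil with limit points $0$ and $\infty$, and a M\"obius transformation sends it to the hyperbolic pencil with limit points $\psi(0),\psi(\infty)$; two members of the image pencil are Euclidean circles with a common center precisely when one of $\psi(0),\psi(\infty)$ equals $\infty$. Hence either $\psi(\infty)=\infty$, so that $\psi$ is an affine similarity $z\mapsto\alpha z+\beta$, or $\psi$ has its pole at $0$, so that $\psi(z)=\beta+\gamma/z$ is the map $z\mapsto 1/z$ followed by a similarity; in either case $\psi$ may additionally be post-composed with complex conjugation. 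So, apart from one of boundedly many discrete choices (similarity versus pole at the origin, conjugation or not), $\psi$ has only four real degrees of freedom, namely the pair $(\alpha,\beta)$ or the pair $(\beta,\gamma)$.

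These four parameters are exactly what the constructible data of $P$ pin down. The known center of some circle of $P$ yields the equation ``the center of $\psi(D)$ equals this known point,'' where $D$ is the corresponding, fully known, circle of $\nu(Q)$; using the elementary formula $\overline{z_0}/(|z_0|^2-s^2)$ for the center of the image under $z\mapsto 1/z$ of the circle with center $z_0$ and radius $s$, this equation is \emph{linear} in $(\alpha,\beta)$, respectively $(\beta,\gamma)$, and so eliminates two of the four real unknowns. The known length --- a radius of some circle of $P$, or a distance between two of its centers --- then fixes the remaining modulus $|\alpha|$, resp.\ $|\gamma|$, by one equation of degree at most two, and the known slope of a line through two centers of $P$ fixes the remaining argument up to the ambiguity of a half-turn. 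Thus $\psi$ is determined up to boundedly many candidates, each obtained from the constructible data of $\nu(Q)$ and $P$ by arithmetic and square roots, hence $\mathcal{M}$-constructible; any candidate other than $\psi$ differs from it only by a quadratic-computation-tree M\"obius map (a half-turn about a constructible point, or a conjugation), so it carries $\nu(Q)$ to an $\mathcal{M}$-constructible packing sharing every algebraic obstruction of $P$. In particular $P=\psi(\nu(Q))$ is $\mathcal{M}$-constructible, the desired contradiction.

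I expect the structural step of the second paragraph --- that mapping a concentric pair to a concentric pair forces $\psi$ to be a similarity or the composite of $z\mapsto 1/z$ with a similarity --- to be the main obstacle, together with a careful treatment of its degenerate cases (for instance the case in which the common center lies inside another circle of $\nu(Q)$, which cannot occur on the second branch if $P$ is a genuine packing of the plane). Once the normal form is in hand, the degree-of-freedom count and the bounded-degree algebra used to solve for $\psi$ are routine and stay within each of our computation models.
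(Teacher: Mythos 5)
Your proof is correct and follows essentially the same route as the paper's: assume a constructible packing $Q$, apply Lemma~\ref{lem:mob-trans} to concentrify the corresponding circle pair, observe that the residual transformation must be a similarity or an inversion through the common center composed with a similarity (possibly with conjugation), and then use the known slope, length, and center to pin down $P$ itself, contradicting its non-constructibility. The extra detail you supply (the pencil/limiting-point classification of $\psi$ and the explicit parameter solving) is just an expanded version of the paper's ``similar to $P$ or its inversion, then rotate, scale, translate'' step.
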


\begin{proof}
Suppose for a contradiction that the model could construct a circle packing $Q$ representing the same graph as $P$. By Lemma~\ref{lem:mob-trans} it could transform $Q$ to make the two circles concentric, giving a packing that is similar either to $P$ or to the inversion of $P$ through the center of the concentric circles. By one more transformation it can be made similar to $P$. The model could then rotate the packing so the slope of the line connecting two centers matches the corresponding slope in $P$, scale it so the radius of one of its circles matches the corresponding radius in $P$, and translate  the center of one of its circles to the corresponding center in $P$, resulting in $P$ itself. This gives a construction of $P$, contradicting the assumption.
\end{proof}

We define $\Bipyr (k)$ to be the graph formed by the vertices and edges of a $(k+2)$-vertex bipyramid (a polyhedron formed from two pyramids over a $k$-gon by gluing them together on their bases). In graph-theoretic terms, it consists of a $k$-cycle and two additional vertices, with both of these vertices connected by edges to every vertex of the $k$-cycle.  The example of $\Bipyr  (7)$ can be seen in Figure~\ref{fig:seven-pack}, left.

\begin{figure}[t]
\centering
\vspace*{-2pt}
\includegraphics[scale=0.35]{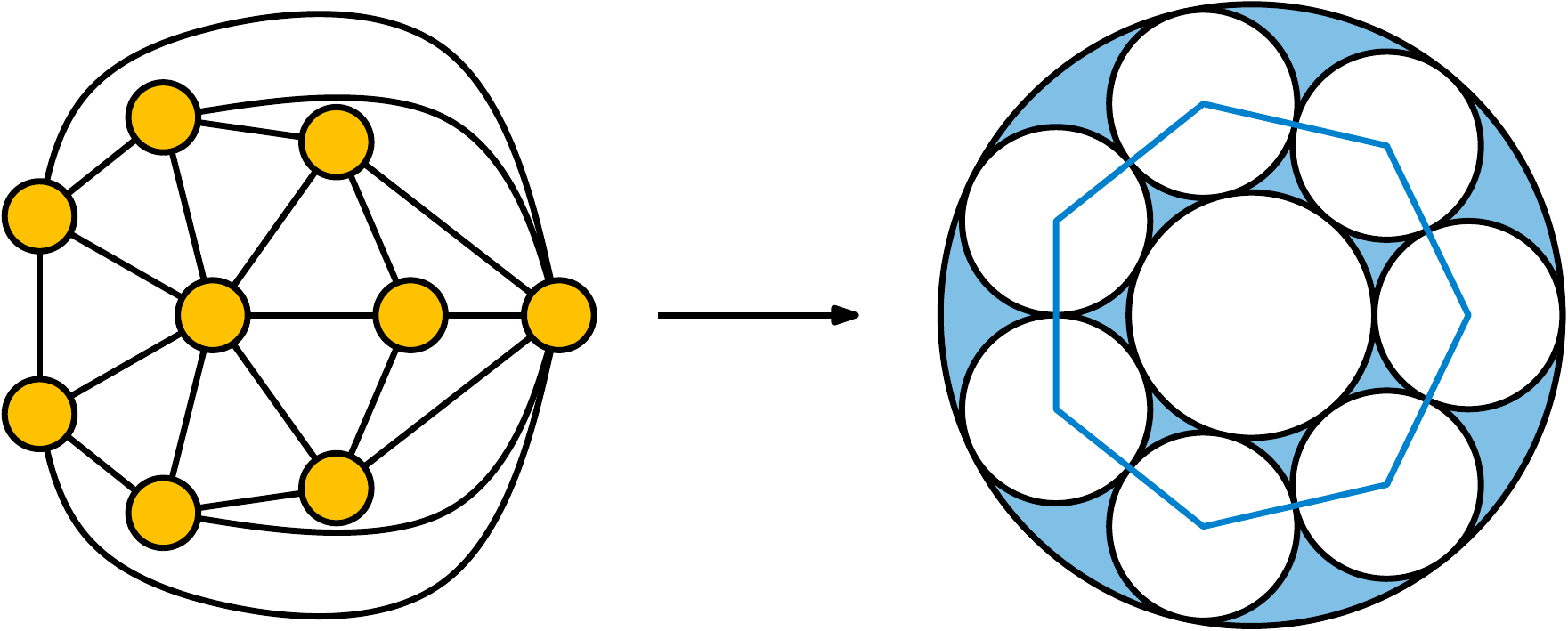}
\vspace*{-4pt}
\caption{\label{fig:seven-pack} The graph $\Bipyr (7)$ and its associated concentric circle packing.}
\end{figure}

\begin{theorem}
\label{thm:no-quadratic}
There exists a graph whose circle packings cannot be constructed by a quadratic computation tree.
\end{theorem}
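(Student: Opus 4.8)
The plan is to take $G = \Bipyr(7)$ and exhibit its essentially unique circle packing in a symmetric ``concentric'' normal form, then to show that this normal form contains a circle whose radius generates a field extension of $\rationals$ of degree divisible by~$3$. Lemma~\ref{lem:concentric} will then promote the non-constructibility of this one packing to the non-constructibility of \emph{every} circle packing of $\Bipyr(7)$ by a quadratic computation tree.

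First I would describe the canonical packing $P$. Since $\Bipyr(7)$ is a triangulation of the sphere, the circle packing theorem (in the form applying to triangulations) gives it a circle packing that is unique up to M\"obius transformations, and the graph automorphisms are realized by M\"obius symmetries of the packing. The order-$7$ rotation among these symmetries is an elliptic M\"obius transformation; placing its two fixed points at $0$ and $\infty$ turns it into $z \mapsto \zeta_7 z$. A circle invariant under this rotation must be centered at the origin, so the two apex circles are concentric, of radii $1$ and $b > 1$ after scaling; the seven cycle-circles form a single orbit, hence are congruent and equally spaced around the annulus between the apex circles. Let $P$ be this packing, with one cycle-circle centered on the positive real axis; this is the packing I will feed to Lemma~\ref{lem:concentric}.

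Next I would pin down $b$ by elementary geometry. Each cycle-circle is internally tangent to the outer apex circle and externally tangent to the inner one, so it has radius $(b-1)/2$ and its center lies at distance $(b+1)/2$ from the origin; consecutive cycle-circles are externally tangent and their centers subtend the angle $2\pi/7$ at the origin, giving $(b+1)\sin(\pi/7) = b-1$, hence
\[
b \;=\; \frac{1+\sin(\pi/7)}{1-\sin(\pi/7)}.
\]
This relation inverts to $\sin(\pi/7) = (b-1)/(b+1)$, so $\rationals(b) = \rationals(\sin(\pi/7))$. Now $\rationals(\cos(2\pi/7))$ is the real subfield of $\rationals(\zeta_7)$, of index $2$ in it, so by Lemma~\ref{lem:roots-of-unity} it has degree $\phi(7)/2 = 3$ over $\rationals$; since $\sin^2(\pi/7) = (1-\cos(2\pi/7))/2$ generates the same field, $[\rationals(\sin(\pi/7)):\rationals]$ is $3$ or $6$, and in particular divisible by $3$. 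Thus $[\rationals(b):\rationals]$ is not a power of two, so by Lemma~\ref{lem:extensions} the radius $b$ --- and therefore the packing $P$ --- cannot be constructed by a quadratic computation tree. On the other hand, the center of the inner apex circle of $P$ is the origin, its radius is $1$, and the line joining that center to the center of the chosen cycle-circle has slope $0$; a quadratic computation tree constructs all three. Lemma~\ref{lem:concentric} then applies to $P$ and gives the theorem: no circle packing representing $\Bipyr(7)$ can be constructed by a quadratic computation tree.

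The step I expect to require the most care is the first one --- verifying that the canonical packing genuinely assumes the symmetric concentric form (so that the three ``easy'' quantities demanded by Lemma~\ref{lem:concentric} really do occur alongside the single ``hard'' radius $b$), including the bookkeeping of which tangencies are internal and which external. Once the concentric picture is in hand, what remains is a one-line trigonometric identity together with the classical fact that $\cos(2\pi/7)$ has degree $3$ over $\rationals$.
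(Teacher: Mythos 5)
Your proposal is correct and follows essentially the paper's own proof: the same graph $\Bipyr(7)$, the same canonical concentric packing, non-constructibility of that packing via Lemma~\ref{lem:extensions}, and Lemma~\ref{lem:concentric} to rule out every other packing of the graph. The only difference is the witness quantity: the paper normalizes so that a rim-circle center is literally $\zeta_7$, of degree $\phi(7)=6$ by Lemma~\ref{lem:roots-of-unity}, whereas you derive the outer radius $b=(1+\sin(\pi/7))/(1-\sin(\pi/7))$ and use that $[\rationals(\sin(\pi/7)):\rationals]$ is divisible by $3$ --- a slightly longer route to the same non-power-of-two degree.
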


\begin{proof}
Consider the circle packing of $\Bipyr (7)$ in which the two hubs are represented by concentric circles,
centered at the origin, with the other circle centers all on the unit circle and with one of them on the $x$ axis. One of the centers of this packing is at the root of unity $\zeta_7$. By Lemma~\ref{lem:roots-of-unity},
$ [\rationals (\zeta_7) : \rationals] = \phi(7) = 6$.
$6$ is not a power of two, so by Lemma~\ref{lem:extensions} $\zeta_7$ cannot be constructed by a quadratic computation tree. By Lemma~\ref{lem:concentric}, neither can any other packing for the same graph.
\end{proof}

In Appendix~\ref{app:circles}, we prove that certain circle packings also cannot be constructed by radical computation trees
nor by
bounded-degree root computation trees.

\section{Conclusion}
We have shown that several types of graph drawing cannot be constructed by models of computation that allow computation of arbitrary-degree radicals, nor by models that allow computation of the roots of bounded-degree polynomials. Whether the degree of these polynomials must grow linearly as a function of the input size, or only proportionally to a sublinear power, remains subject to an open number-theoretic conjecture.

It is natural to ask whether these drawings might be computable in a model of computation that allows both arbitrary-degree radicals and bounded-degree roots.
\ifFull
(E.g., extending our computation tree model to allow by operations, a \emph{root radical tree}).
To prove that it is not, we must show that the Galois group of a graph drawing can contain a high-degree unsolvable group, either by exhibiting a family of graph drawing with explicit Galois groups
or by showing that many or all polynomials have roots that can be associated with a graph drawing. 
\fi
We leave this as open for future research.
\ifFull
However, according to our preliminary calculations in Appendix~\ref{app:gallery}, many circle packings resembling the ones in our main results indeed have high degree unsolvable Galois groups.

A famous family of circle packings, the Apollonian gaskets, can be constructed by compass and straightedge. 
These are the packings that start from three mutually tangent circles and then repeatedly add one more circle in the triangular gap between three mutually tangent circles.
(See Figure~\ref{fig:apollonian4}.) 
The maximal planar graphs they are dual to, the Apollonian networks, are exactly the maximal planar graphs of treewidth three. It is natural to hope from this example that the planar graphs of bounded treewidth lead to packings that can be computed using roots of bounded degree. However, our bipyramid examples dash this hope, as they have treewidth (and pathwidth) four and require unbounded degree. Our examples of graphs whose circle packings cannot be expressed by nested radicals also have treewidth and pathwidth four. It remains of interest to find and characterize a broader class of maximal planar graphs whose packings have low Galois complexity.

\begin{figure}[hbt]
\centering\includegraphics[height=2.3in]{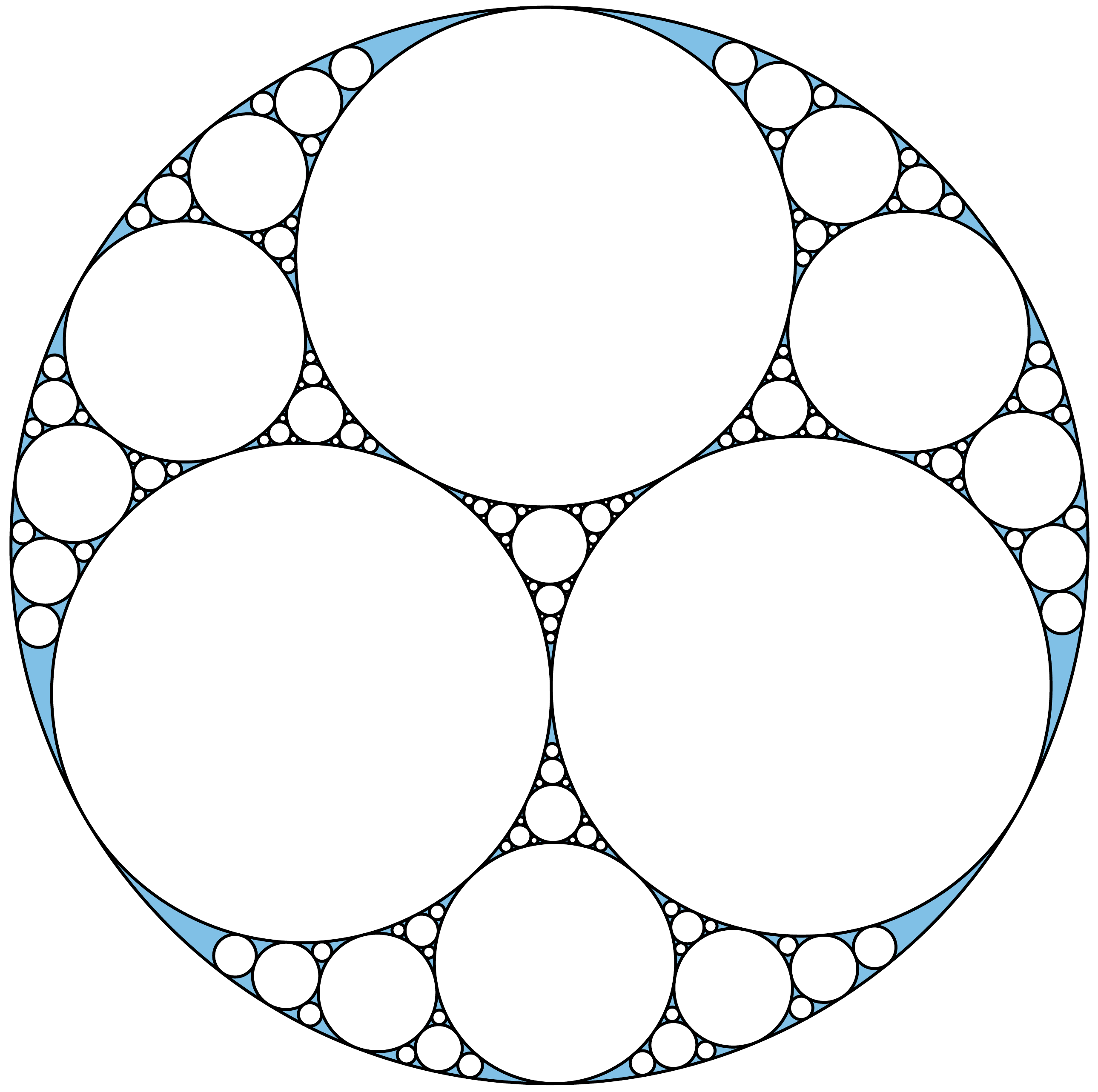}
\caption{A complicated circle packing that is constructible by compass and straightedge}
\label{fig:apollonian4}
\end{figure}
\fi

\subsection*{Acknowledgements}

\ifFull
We used the Sage software package~\cite{Sage}
\else
We used the Sage software package
\fi
to perform preliminary calculations of the Galois groups of many drawings. 
\ifFull
Although we eventually simplified our proofs of unsolvability to a form in which they could (in principle) be verified manually, we also used Sage to perform many of the steps in this simplification.
\fi
Additionally, we thank Ricky Demer on MathOverflow for
\ifFull
pointing us to reference~\cite{Har-MC-05}, from which we also found reference~\cite{BakHar-AA-98}.
\else
guiding us to research on large factors of $\phi(n)$.
\fi

{\raggedright
\ifFull
\bibliographystyle{abuser}
\else
\bibliographystyle{splncs}
\fi
\bibliography{paper}}

\clearpage
\begin{appendix}
\section{A Brief Review of Algebraic Graph Drawing}
\label{app:gd}
In this appendix, we provide a brief review of algebraic graph drawing, for
readers unfamiliar with these topics.

\subsection{Force-Directed Graph Drawing}
Force-directed algorithms are among the most popular and flexible general purpose graph drawing algorithms. They work by setting up a system of forces between vertices in the graph and then performing an iterative algorithm to attempt to reach an equilibrium state. By choosing an appropriate balance of forces, these algorithms can readily produce aesthetically pleasing drawings that exhibit the structure of the graph being drawn. More details on such algorithms can be found in several surveys on the subject~\citeappendix{GD-Handbook-13,BatEadTam-98,KauWag-01}.

In this paper, we focus our attention on the two most popular force-based drawing algorithms. We consider the Fruchterman and Reingold algorithm, which views the vertices as repelling charged particles connected by springs of rest length zero, and the Kamada and Kawai algorithm, which views graphs as a system is which every pair of vertices is connected by a spring whose spring constant and rest length is based on the graph theoretic distance between the vertices. The Fruchterman and Reingold algorithm computes a local minimum by simulating the motion induced by the forces, i.e., at each step the vertices are moved in a direction based on the current total force at the vertex. On the other hand, the Kamada and Kawai algorithm defines a total energy function for the system and then attempts to minimize this function by moving one vertex at a time, using a two-dimensional Newton--Raphson method. In both algorithms, computing an equilibrium may be viewed as solving a large system of polynomial equations in many variables.~\cite{FruRei-SPE-1991,KamKaw-IPL-1989}

A straightforward implementation of a force directed drawing algorithm would require $\Omega(n^2+m)$ work per iteration, as the pairwise forces must be computed between every pair of vertices in addition to the edge forces. This slow runtime would limit the size of graphs on which this method can be used. Researchers have found that by using the multipole method of $n$-body simulation \citeappendix{BarHut-Nat-1986,Greengard-1988} the work per iteration can be reduced to $O(n\log n)$. These fast algorithms, in combination with the parallelism of modern GPUs \citeappendix{GodHobJar-GD-2009,HacJun-GD-2005} allow force directed algorithms to be run on graphs with a hundred thousand nodes in under ten seconds.

\subsection{Spectral Graph Drawing}
Another family of general purpose graph drawing method are the spectral methods. 
To produce a spectral graph drawing of a graph $G$, we define an 
associated matrix, $M$, and, from the eigenvectors,
$u_1, u_2, u_3, \cdots, u_n$ of $M$ (ordered by eigenvalues), we choose two vectors $u_r$ and $u_s$. The coordinates in $\reals^2$ of a vertex $i$ in the drawing of $G$ are given by $(u_r[i], u_s[i])$. The choice of $M$, $r$, and $s$ determine the aesthetics of the drawing, and can be motivated by viewing the eigenvectors as solutions to optimization problems~\cite{Kor-CMA-2005}.

In 1970, Hall was the first to propose such a method for graph drawing, using the Laplacian matrix and the eigenvectors $u_2$ and $u_3$~\citeappendix{Hal-MS-70}. Later Manolopoulos and Fowler used the adjacency matrix to draw molecular graphs with the eigenvectors chosen based on the molecule being drawn~\citeappendix{ManFow-JCP-92}. Brandes and Willhalm used the $u_2$ and $u_3$ eigenvectors of a \emph{relaxed Laplacian}, $\rlaplacian(G) = \laplacian(G) - \rho\degree(G)$~\citeappendix{BraWil-SDV-02}. More recently, Koren has used the transition matrix and the eigenvectors $u_{n-2}$ and $u_{n-1}$~\cite{Kor-CMA-2005}.

Fast iterative algorithms for the numerical computation of the eigenvectors useful for graph drawing have been developed, which make spectral graph drawing practical for graphs with tens of millions of vertices and edges \cite{Kor-CMA-2005} \citeappendix{Puppe08, KorCarHar-IV-2002}. Unlike many force directed methods, these methods can be guaranteed to converge to a unique solution, rather than getting stuck in local optima. Based on this property, it has been claimed that in spectral drawing an exact solution may be computed, as opposed to typical NP-hard graph drawing formulations~\cite{Kor-CMA-2005}. In light of the results in this paper, we feel that it is more appropriate to say that there exists a single solution which may be efficiently approximated. 

\subsection{Multidimensional Scaling}

Multidimensional scaling algorithms are a family of graph drawing methods originally presented by Kruskal and Seery~\cite{KruSee-CSG-80}.  These techniques attempt to place vertices into $d$-dimensional space (where $d$ is usually two or three) such that the geometric distance of each pair of vertices is approximately close to some measure of their graph-theoretic distance. 
It can be seen as closely related both to Kamada--Kawai drawing (which attempts to fit the drawing to target distances between vertices) and spectral drawing (which uses matrices and their eigenvectors to construct drawings).

We encapsulate the target measure of distance in a matrix $D$ of pairwise squared vertex distances in $G$.  Formally we wish to find a matrix of vertex positions, $X = [x_1,\dots,x_n]^T$, so that for each pair of vertices $i$ and $j$, $||x_i - x_j||^2 \approx D_{i,j}$.  When the matrix $D$ is the matrix of the squared graph theoretic distances for all vertex pairs, this is classical multidimensional scaling~\citeappendix{BraPic-GD-07}.  In this case, to compute $X$, we first compute a derived matrix $B$ by \emph{double centering} the matrix $D$; this is an operation that combines $D$ linearly with its row averages, column averages, and overall average, in a way that (if $D$ were truly a matrix of squared Euclidean distances) would produce a matrix of dot products  $B= XX^T$. Then, to recover the matrix $X$ of vertex positions, we factor $B$ as $B=V\Lambda V^T$ where $V$ is the orthonormal matrix of the eigenvectors of $B$, sorted by eigenvalue, and $\Lambda$ is the matrix of corresponding eigenvalues of $B$.  The final vertex positions are obtained by taking the first $d$ columns of the matrix $V\Lambda^{1/2}$.

The computation of the matrix $D$ of squared graph theoretic distances, by an all pairs shortest path algorithm, and the computation of the eigenvectors of~$B$, by repeated multiplication and orthogonalization, are polynomial but can be expensive for large graphs.  Alternative variants of multidimensional scaling that reduce its computation time at the cost of solution quality have been proposed~\citeappendix{BraPic-GD-07}.  These other variants, including landmark and pivot multidimensional scaling, construct $D$ as the squared distance matrix of a smaller sample of vertex pairs and use slightly different techniques to compute $X$. These methods only approximate the $X$ from classical multidimensional scaling, but can be feasibly run on much larger graphs.

Multidimensional scaling can also be computed by a technique called majorization \citeappendix{GanKorNor-GD-05}.  The energy or stress of a drawing is defined to be a weighted sum of squared differences between vertex pair distances in the graph and their real positions.  A solution can be computed iteratively by bounding the energy from above with a convex function, based on the current solution, and setting each successive solution to be the minimizer of this convex function.

\subsection{Circle Packings}
The famous circle packing theorem of Koebe, Andreev, and Thurston states that every planar graph can be represented by a collection of interior-disjoint circles in the Euclidean plane, so that each vertex of the graph is represented by one circle and each edge is represented by a tangency between two circles~\cite{Koe-BSAWL-36} \citeappendix{Ste-ICP-05}. If the given graph is a maximal planar graph, the circle packing is unique up to M\"obius transformations; it can be made completely unique by packing the circles on a sphere rather than on the plane and by choosing a M\"obius transformation that maximizes the minimum radius of the circles~\citeappendix{BerEpp-WADS-01}.
Circle packings have many algorithmic applications, detailed below.

Efficient numerical algorithms for computing a circle packing representing a given graph are known, in time polynomial in the number of circles and the desired numerical precision~\citeappendix{ColSte-CGTA-03,Moh-DM-93,Moh-EJC-97}. These algorithms work with a system of radii of the circles, leaving their geometric placement for later. Starting from an inaccurate initial system of radii, they repeatedly improve this system by choosing one of the circles and replacing its radius by a new number that would allow the circle to be precisely surrounded by a ring of circles with its neighbors' radii. Each such replacement can be performed by a simple calculation using trigonometric functions, and the system of radii rapidly converges to values corresponding to a valid circle packing. Once the radii have been accurately approximated, the locations of the circle centers of the circles can be calculated by a process of triangulation. These algorithms have been implemented by multiple researchers---our figures are the output of a Python implementation initially developed for a graph drawing application~\citeappendix{Epp-GD-12}---and they work well in practice.

Although the known algorithms for circle packing use trigonometry, the circle packings themselves are algebraic: it is straightforward to write out a system of quadratic equations for variables describing the centers and radii of the circles, with each equation constraining two circles to be tangent.  The real-valued solutions to these equations necessarily include the desired circle packings, although they may also include other configurations of circles that have the proper tangencies but are not interior-disjoint.

Nevertheless, despite being an algebraic problem with many applications and with algorithms that are efficient in practice, we do not know of a strongly polynomial algorithm for circle packing, one that computes the solution exactly rather than numerically, and uses a number of computational steps that depends polynomially on the size of the input graphs but does not depend on the desired numerical precision of the output. The absence of such an algorithm cannot be explained solely by the high degree of the polynomials describing the solution, because the system of polynomials for a circle packing only has degree two, and because there are other problems (such as the construction of regular $2^n$-gons) that  have high degree and yet are easily solvable (for instance as an explicit formula or by compass and straightedge). In this paper, we use more subtle properties of the polynomials describing circle packings, based on Galois theory, to explain why an efficient exact circle packing algorithm does not exist.

\Emph{Applications of circle packing.}
The circle packing theorem, and algorithms based on it for transforming arbitrary planar graphs into tangent circle representations, have become a standard tool in graph drawing. Graph drawing results proved using circle packing include the fact that every planar graph of bounded degree can be drawn without crossings with a constant lower bound on its angular resolution (the minimum angle between incident edges)~\citeappendix{MalPap-SJDM-94} and with edges that have a constant number of distinct slopes~\citeappendix{KesPacPal-GD-10}. Circle packings have also been used to draw graphs on the hyperbolic plane~\citeappendix{Moh-GD-99}, to draw planar graphs on spheres in a way that realizes all of the symmetries of the planar embedding~\citeappendix{BerEpp-WADS-01}, to construct convex polyhedra that represent planar graphs~\citeappendix{Rot-GD-11},  to find \emph{Lombardi drawings} of planar graphs of degree at most three, drawings in which the edges are represented as circular arcs that surround each vertex by angles with equal areas~\citeappendix{Epp-GD-12}, to represent 4-regular planar graphs as the arcs and intersection points of an arrangement of circles that may cross each other~\citeappendix{BekRaf-GD-12}, to construct drawings in which each vertex is incident to a large angle~\citeappendix{AicRotSch-CGTA-12}, and to construct \emph{confluent drawings}, drawings in which edges are represented as smooth paths through a system of tracks and junctions~\citeappendix{EppHolLof-GD-13}.

Beyond graph drawing, additional applications of circle packing include algorithmic versions of the Riemann mapping theorem on the existence of conformal maps between planar domains~\citeappendix{Ste-CMFT-97}, unfolding human brain surfaces onto a plane for more convenient visualization of their structures~\citeappendix{HurBowSte-MICCAI-99},
finding planar separators~\citeappendix{MilTenThuVav-JACM-97,EppMilTen-FI-95},
approximation of dessins d'enfant (a type of graph embedding used in algebraic geometry)~\citeappendix{BowSte-MAMS-04},
and the geometric realization of soap bubbles from their combinatorial structure~\citeappendix{Epp-SCG-13}.

\clearpage
\section{Omitted Proofs of Lemmas and Theorems}
\label{app:proofs}
In this appendix, we provide details for omitted proofs of lemmas used in our
paper.

\subsection{Proof of Lemma~\ref{lem:mob-trans}}

Recall that Lemma~\ref{lem:mob-trans} states that, given any two disjoint circles, a M\"{o}bius transformation mapping them to two concentric circles can be constructed using a quadratic computation tree.

\begin{proof}
Such a transformation can be achieved by an inversion centered at one of the two \emph{limiting points} of the two circles. These two points lie on the line connecting the circle centers, at equal distances from the point $x$ where the radical axis of the two circles (the bisector of their power diagram) crosses this line. The distance from $x$ to the limiting points equals the power distance from $x$ to the two circles (the length of a tangent line segment from $x$ to either circle)~\citeappendix{Joh-CAGD-93}. From these facts it is straightforward to compute a limiting point, and hence the transformation, using only arithmetic and square root operations. Weisstein~\citeappendix{Mathworld-LimitingPoint} provides an explicit formula.
\end{proof}

\subsection{Proof of Lemma~\ref{lem:extensions}}
Recall that Lemma~\ref{lem:extensions} states,
if $\alpha$ can be computed by a root computation tree of degree $f(n)$, then $[\rationals(\alpha):\rationals]$ is $f(n)$-smooth, i.e., none of its prime factors are greater than $f(n)$. In particular, if $\alpha$ can be computed by a quadratic computation tree, then $[\rationals(\alpha):\rationals]$ is a power of two.

\begin{proof}
Annotate each node of the given root computation tree with a minimal extension of the rational number field containing all of the values computed along the path to that node. This field is an extension of  the field for the parent node in the tree by the root of a polynomial of degree at most $f(n)$, so as a field extension it has degree at most $f(n)$ (Proposition~4.3.4 of \cite{Cox2012}, p.~89).
Therefore, the field for each node can be constructed by a sequence of extensions of the rational numbers, each of degree at most $f(n)$. Since $\rationals(\alpha)$ is a subfield of this field, it can also be constructed in the same way. The degree of a sequence of extensions is the product of the degrees of each extension (the ``tower theorem'', Theorem~4.3.8 of \cite{Cox2012}, p.~91). Since each of these extensions is $f(n)$-smooth, so is their product.
\end{proof}

\subsection{Proof of Theorem~\ref{thm:kk-radical}}
Recall Theorem~\ref{thm:kk-radical}, which states that
there exists a graph on four vertices such that it is not possible on a radical computation tree to construct the coordinates of every possible Kamada and Kawai equilibrium.

\begin{proof}
With respect to the four-vertex graph of Figure~\ref{fig:kk-radical},
we define the following variables:
\begin{align*}
a &=\text{the distance from $u_0$ to $u_1$}\\
b &=\text{the horizontal distance from $u_1$ to $u_2$}\\
c &=\text{the vertical distance from $u_1$ to $u_2$}\\
d &=\text{the distance from $u_1$ to $u_2$}\\
e &=\text{the distance from $u_0$ to $u_2$}
\end{align*}
We make some assumptions on the positions of the vertices. We assume that the line defined by the positions of $u_0$ and $u_1$ meet the line defined by the positions of $u_2$ and $u_3$ at a right angle, the vertices are ordered as in the figure, and that there are no vertex or edge overlaps. These assumption correspond to the equilibrium that is typically produced by the Kamada and Kawai algorithm.
With these variables, the local optimum conditions for Kamada and Kawai are
as follows:
\begin{align*}
\frac{\partial E}{\partial x_0} &= \frac{-3/2ae + a - 1/2be + b + e}{e} =0\\
\frac{\partial E}{\partial y_0} &= 0\\
\frac{\partial E}{\partial x_1} &= \frac{ad - 2bd + 2b - d}{d} =0\\
\frac{\partial E}{\partial y_1} &= 0\\
\frac{\partial E}{\partial x_2} &= \frac{1/4ade - 1/2ad + 5/4bde - 1/2bd - be}{de} =0\\
\frac{\partial E}{\partial y_2} &= \frac{13/4cde - 1/2cd - ce - de}{de} =0\\
\frac{\partial E}{\partial x_3} &= \frac{1/4ade - 1/2ad + 5/4bde - 1/2bd - be}{de} =0\\
\frac{\partial E}{\partial y_3} &= \frac{-13/4cde + 1/2cd + ce + de}{de} = 0.
\end{align*}
We also have the additional constraints,
\[
d^2 - b^2 - c^2 = 0 \quad\quad\text{and}\quad\quad e^2 - (a+b)^2 - c^2,
\]
which follow from our choice of variables. From this system of equations, we can, with the aid of a computer algebra system, compute the Groebner Basis of the system and extract a single polynomial, denoted $p$, that $c$ must satisfy:
\begin{eqnarray*}
p(c) &=& 365580800000000c^{18}
- 2065812736000000c^{17}
+ 5257074184960000c^{16}\\
&&- 7950536566252800c^{15}
+ 7939897360159392c^{14}
- 5501379135910008c^{13}\\
&&+ 2703932242407045c^{12}
- 947252378063088c^{11}
+ 234371204926092c^{10}\\
&&- 40028929618536c^9
+ 4535144373717c^8
- 317453745456c^7\\
&&+ 11493047016c^6
- 83177280c^5
+ 167184c^4.
\end{eqnarray*}
The polynomial $p$ factors into $c^4$ and an irreducible factor of degree $14$. The $c^4$ factor corresponds to degenerate drawings with $c=0$ and with $u_2$ and $u_3$ drawn at the same point of the plane; since the expected drawing is of a different type, we can ignore this factor.
Let $f(c)$ be the factor of degree $14$. We can convert $f$ to a monic polynomial using the same techniques as before, producing the following polynomial:
\begin{eqnarray*}
g(x) &=& x^{14}f(258/x)/167184\\
&=& x^{14} - 128360x^{13}\\
&&+ 4575935386x^{12}\\
&&- 32609554186008x^{11}\\
&&+ 120191907907039173x^{10}\\
&&- 273701889217560990672x^9\\
&&+ 413454551042624579937072x^8\\
&&- 431130685015107552530542464x^7\\
&&+ 317510974076480215971285088080x^6\\
&&- 166668765204034179394613907054336x^5\\
&&+ 62060780922813932272692806330099712x^4\\
&&- 16033136614269762618278694793639526400x^3\\
&&+ 2735179704826314422602131722817699840000x^2\\
&&- 277301626082465808611849917345431552000000x\\
&&+ 12660899181603462048518168020372684800000000.
\end{eqnarray*}
The polynomial $g$ can be algorithmically verified to be irreducible via a computer algebra system. Its discriminant is
\begin{multline*}
2^{188} \cdot 3^{90} \cdot 5^{26} \cdot 7^{25} \cdot 13^{12} \cdot 43^{156} \cdot 987004987^2 \cdot 142654761797^2\\ \cdot 20040994351453^2 \cdot 1908270249053041126780511430661^2\\ \cdot 2068784364712376186850628387585613,
\end{multline*}
and we have the following factorizations of $g$ into irreducible polynomials modulo the primes $67$ and $113$, which do not divide the discriminant.
\begin{align*}
g(x) &\equiv (x + 25)  (x^{13} + 54x^{12} + 62x^{11} + 40x^{10} + 48x^9 + 52x^8 + 10x^7\\
&+ 38x^6 + 24x^5 + 14x^4 + 30x^3 + 17x^2 + 65x + 34)
&\hspace{-2em}\pmod{67}\\
g(x) &\equiv (x + 50)  (x^2 + 15x + 49)  (x^{11} + 56x^{10} + 15x^9 + 94x^8 + 60x^7\\
&+ 61x^6 + 13x^5 + 103x^4 + 53x^3 + 11x^2 + 6x + 13)
&\hspace{-2em}\pmod{113}
\end{align*}
By Dedekind's theorem, the factorization modulo $67$ implies the existence of a $13$-cycle in $\Gal(\splitting(g)/\rationals)$, and the factorization modulo $113$ implies the existence of a permutation that is the composition of a transposition and a $11$-cycle. The second permutation produces a transposition when raised to the eleventh power. Now, by Lemma~\ref{lem:swap+cycle} $\Gal(\splitting(g)/\rationals) = S_{14}$. So by Lemma~\ref{lem:radical-tree} the value of $c$ cannot be computed by a radical computation tree.
\end{proof}

\clearpage
\section{Additional Impossibility Results for Spectral Graph Drawing}
\label{app:spectral}
In this appendix, we provide additional impossibility results for spectral
graph drawing, based on the 12-vertex graph, $H$, shown in
Figure~\ref{fig:h-graph}.

\begin{figure}[hbt]
% \vspace*{-12pt}
\centering
\includegraphics[scale=0.7]{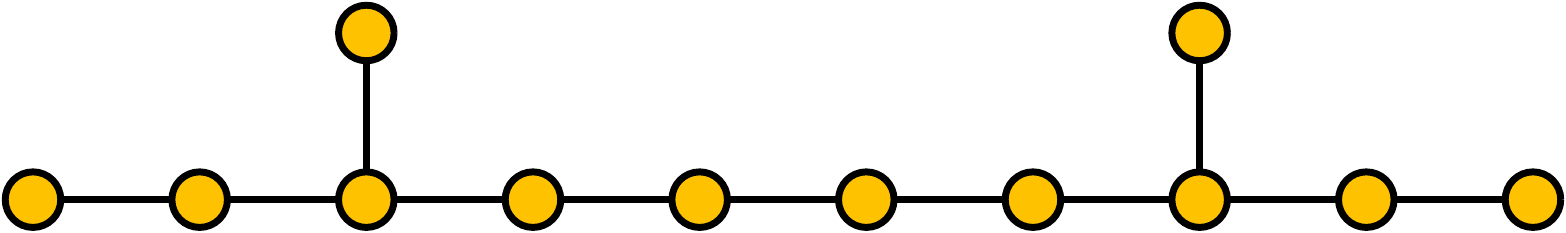}
\caption{A graph, $H$, whose adjacency and transition eigenvectors are uncomputable in a radical tree.}
\label{fig:h-graph}
\end{figure}

The adjacency matrix of $H$ is given by
\[
\adjacency(H) =
\begin{pmatrix}
 & 1&  &  &  &  &  &  &  &  &  &  \\
1&  & 1&  &  &  &  &  &  &  &  &  \\
 & 1&  & 1&  &  &  &  &  &  & 1&  \\
 &  & 1&  & 1&  &  &  &  &  &  &  \\
 &  &  & 1&  & 1&  &  &  &  &  &  \\
 &  &  &  & 1&  & 1&  &  &  &  &  \\
 &  &  &  &  & 1&  & 1&  &  &  &  \\
 &  &  &  &  &  & 1&  & 1&  &  & 1\\
 &  &  &  &  &  &  & 1&  & 1&  &  \\
 &  &  &  &  &  &  &  & 1&  &  &  \\
 &  & 1&  &  &  &  &  &  &  &  &  \\
 &  &  &  &  &  &  & 1&  &  &  &  \\
\end{pmatrix}
\]
and its characteristic polynomial can be computed to be
\begin{multline*}
q(x) = \characteristic(\adjacency(H))\\ = (x^6 - x^5 - 5 x^4 + 4 x^3 + 5 x^2 - 2 x - 1)\\   (x^6 + x^5 - 5 x^4 - 4 x^3 + 5 x^2 + 2 x - 1).
\end{multline*}
Let $q_0$ and $q_1$ be the factors of $q$ in the order given above. First, observe that $q_0(x) = q_1(-x)$, which implies that we need only compute the Galois group of $q_0$. The polynomial $q_0$ is irreducible by Lemma~\ref{lem:2n+1}, as it produces a prime for 13 integer inputs in the range from 0 to $50$. The discriminant of $q_0$ is $592661$ (a prime), and we have the following factorizations of $q_0$ into irreducible polynomials modulo the primes $13$ and $7$, which do not divide the discriminant:
\begin{align*}
q_0(x) &\equiv  (x + 9)  (x^5 + 3x^4 + 7x^3 + 6x^2 + 3x + 10) &\pmod{13}\\
q_0(x) &\equiv  (x + 2)  (x^2 + 5x + 5)  (x^3 + 6x^2 + x + 2) &\pmod{7}.
\end{align*}
By Dedekind's theorem, the factorization modulo 13 implies the existence of a $5$-cycle in $\Gal(\splitting(q_0)/\rationals)$ and factorization modulo $7$ implies the existence of a permutation that is the composition of a transposition and a $3$-cycle in $\Gal(\splitting(q_0)/\rationals)$. The second permutation when cubed yields a transposition. Therefore, Lemma~\ref{lem:swap+cycle} implies $\Gal(\splitting(q_0)/\rationals) = S_8$. So by Lemma~\ref{lem:radical-tree} the eigenvalues of $\adjacency(H)$ are not computable in a radical computation tree.

\begin{theorem}
There exists a graph on 12 vertices such that it is not possible to construct a spectral graph drawing based on the adjacency matrix in a radical computation tree.
\end{theorem}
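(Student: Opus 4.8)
The plan is to exhibit an explicit small graph whose adjacency matrix has an eigenvalue generating a degree-$d$ extension of $\rationals$ whose splitting field has Galois group $S_d$ with $d \geq 5$, and then invoke Lemma~\ref{lem:radical-tree} together with the observation that a spectral drawing records (up to rational scaling) the coordinates of vertices, which are entries of an eigenvector and hence live in a field containing the corresponding eigenvalue. Concretely I would take the $12$-vertex graph $H$ of Figure~\ref{fig:h-graph}: two disjoint length-ten paths, cross-linked so that the characteristic polynomial of $\adjacency(H)$ factors into two sextics $q_0,q_1$ with $q_0(x)=q_1(-x)$, so it suffices to analyze $q_0$.

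The key steps, in order, are: (i) compute $q(x)=\characteristic(\adjacency(H))$ and verify it splits as the product of the two given sextics (a determinant expansion, routine); (ii) prove $q_0$ is irreducible over $\rationals$ by St\"ackel's criterion (Lemma~\ref{lem:2n+1}), checking that $|q_0(k)|$ is prime for $2\cdot 6+1=13$ integer values of $k$ in a small range; (iii) compute the discriminant of $q_0$ (which comes out to the prime $592661$) so that we may apply Dedekind's theorem at primes not dividing it; (iv) factor $q_0$ modulo two such primes — modulo $13$ it factors as a linear times an irreducible quintic, giving a $5$-cycle in $\Gal(\splitting(q_0)/\rationals)$, and modulo $7$ it factors as linear times irreducible quadratic times irreducible cubic, giving a permutation of cycle type $(2,3)$ whose cube is a transposition; (v) since $q_0$ is irreducible of degree $6$, its Galois group is a transitive subgroup of $S_6$ containing a transposition and a $5$-cycle, i.e.\ an $(n-1)$-cycle for $n=6$, so by Lemma~\ref{lem:swap+cycle} the group is all of $S_6$; (vi) conclude via Lemma~\ref{lem:radical-tree} that the roots of $q_0$ — and hence the nonzero eigenvalues of $\adjacency(H)$ and the associated eigenvectors — are not computable by a radical computation tree, so neither is the spectral drawing based on the adjacency matrix.

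The main obstacle is the same one faced throughout this paper: there is no systematic way to \emph{find} a small graph whose spectrum realizes a full symmetric Galois group, so step (i) really hides a search (presumably computer-assisted) over candidate graphs, after which the verification is mechanical. Once $H$ is in hand, the remaining risk is purely arithmetic bookkeeping — choosing primes that avoid the discriminant and yield usefully different factorization types, and getting the St\"ackel primality checks right — all of which can be and was confirmed by a computer algebra system but is in principle hand-checkable. I would also note the minor subtlety that the drawing uses eigenvectors rather than eigenvalues directly, but since $\adjacency(H)$ has rational entries, any eigenvector for an eigenvalue $\lambda$ can be taken with entries in $\rationals(\lambda)$, and conversely a nonconstant such eigenvector lets one recover $\lambda$ by a rational computation, so hardness for eigenvalues transfers to the drawing; the same $q_0(x)=q_1(-x)$ symmetry plus $D^{-1}A=\tfrac12 A$ (as $H$ is not regular one instead argues directly) handles the transition-matrix case in the companion statement.
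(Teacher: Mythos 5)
Your proposal follows essentially the same route as the paper's own argument: the same graph $H$, the same factorization of $\characteristic(\adjacency(H))$ into sextics with $q_0(x)=q_1(-x)$, irreducibility via Lemma~\ref{lem:2n+1}, the discriminant $592661$, the factorizations modulo $13$ and $7$ giving a $5$-cycle and a transposition, and the conclusion via Lemmas~\ref{lem:swap+cycle} and~\ref{lem:radical-tree}. You even correctly identify the Galois group as $S_6$ (the paper's text misprints it as $S_8$), and your remark on passing from eigenvalue hardness to eigenvector/drawing hardness is a sound, if implicit in the paper, clarification; only your verbal description of $H$'s structure is slightly off, which does not affect the argument.
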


The transition matrix of $H$ is given by
\[
\transition(H)=
\begin{pmatrix}
   &   1&    &    &    &    &    &    &    &    &    &    \\
1/2&    & 1/2&    &    &    &    &    &    &    &    &    \\
   & 1/3&    & 1/3&    &    &    &    &    &    & 1/3&    \\
   &    & 1/2&    & 1/2&    &    &    &    &    &    &    \\
   &    &    & 1/2&    & 1/2&    &    &    &    &    &    \\
   &    &    &    & 1/2&    & 1/2&    &    &    &    &    \\
   &    &    &    &    & 1/2&    & 1/2&    &    &    &    \\
   &    &    &    &    &    & 1/3&    & 1/3&    &    & 1/3\\
   &    &    &    &    &    &    & 1/2&    & 1/2&    &    \\
   &    &    &    &    &    &    &    &   1&    &    &    \\
   &    &   1&    &    &    &    &    &    &    &    &    \\
   &    &    &    &    &    &    &   1&    &    &    &    \\
\end{pmatrix}
\]
and its characteristic polynomial can be computed to be
\begin{multline*}
r(x) = \characteristic(\transition(H))\\ = (x - 1)   (x + 1)   (x^5 - 1/2 x^4 - 11/12 x^3 + 1/3 x^2 + 1/6 x - 1/24)\\   (x^5 + 1/2 x^4 - 11/12 x^3 - 1/3 x^2 + 1/6 x + 1/24)
\end{multline*}
Let $r_1$, $r_2$, $r_3$ and $r_4$ be the factors of $r$ in the order given above.
As before, we have a relation between $r_2$ and $r_3$, $r_2(x) = -r_3(-x)$, which means that we need only compute the Galois group for $r_2$. First, we covert $r_2$ into a monic polynomial with integer coefficients,
\[
s(x) = -24x^5r_2(1/x) = x^5 - 4x^4 - 8x^3 + 22x^2 +12x - 24.
\]
The discriminant of $s$ is $2^8\cdot 3\cdot 97 \cdot 6947$, and we have the following factorization of $s$ into irreducible polynomials modulo the primes $11$ and $5$, which do not divide the discriminant:
\begin{align*}
s(x) &\equiv (x + 1)  (x^4 + 6x^3 + 8x^2 + 3x + 9) &\pmod{11}\\
s(x) &\equiv (x^2 + x + 1)  (x^3 + x + 1) &\pmod{5}
\end{align*}
By Dedekind's theorem, the factorization modulo $11$ implies the existence of a $4$-cycle in $\Gal(\splitting(s)/\rationals)$ and the factorization modulo $5$ implies the existence of a permutation that is the composition of a transposition and a $3$-cycle in $\Gal(\splitting(s)/\rationals)$. When cubed the second permutation produces a transposition. Therefore, Lemma~\ref{lem:swap+cycle} implies $\Gal(\splitting(s)/\rationals) = S_5$. So by Lemma~\ref{lem:radical-tree} the only eigenvalues of $\transition(G)$ that are computation in a radical computation tree are $1$ and $-1$. Since the roots of $r$ are in the interval $[-1,1]$, the only computable eigenvectors correspond to the largest and smallest eigenvalues, whose eigenvectors are given below.
\begin{align*}
u_1 &= (1, -1, 1, -1, 1, -1, 1, -1, 1, -1, -1, 1) \\
u_{12} &= (1, 1, 1, 1, 1, 1, 1, 1, 1, 1, 1, 1)
\end{align*}

\begin{theorem}
There exists a graph on twelve vertices such that the only spectral drawing based on the transition matrix that is computable in a radical computation tree uses the largest and smallest eigenvectors, and produces a drawing in which many vertices have coinciding positions.
\end{theorem}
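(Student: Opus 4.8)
The plan is to use the twelve-vertex graph $H$ of Figure~\ref{fig:h-graph} and to analyze the Galois structure of the characteristic polynomial of its transition matrix $\transition(H) = \degree(H)^{-1}\adjacency(H)$. First I would write $\transition(H)$ out explicitly and compute $r(x) = \characteristic(\transition(H))$. Because $H$ is connected and bipartite (indeed a tree), I expect $r$ to split off the two trivial rational factors $(x-1)$ and $(x+1)$ — the Perron eigenvalue $1$, and the eigenvalue $-1$ whose eigenvector is $+1$ on one color class and $-1$ on the other — leaving two quintic factors $r_2$ and $r_3$. Since the spectrum of the transition matrix of a bipartite graph is symmetric about $0$, these quintics must satisfy $r_3(x) = -r_2(-x)$, so it suffices to determine $\Gal(\splitting(r_2)/\rationals)$.

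The central step is to show this Galois group is the full symmetric group $S_5$. As $r_2$ has non-integer rational coefficients, I would first clear denominators and reverse the polynomial, replacing $r_2$ by a monic integer polynomial $s(x) = -24\,x^5 r_2(1/x)$ with the same splitting field. Then I would (i) verify $s$ is irreducible over $\rationals$, which for degree five reduces to checking it has no linear or quadratic factors (or, alternatively, appeal to Lemma~\ref{lem:2n+1}); (ii) compute the discriminant of $s$; and (iii) factor $s$ modulo two small primes not dividing the discriminant, arranging one factorization of shape (linear)(irreducible quartic) and one of shape (irreducible quadratic)(irreducible cubic). Dedekind's theorem then produces, respectively, a $4$-cycle and a permutation of cycle type $(2,3)$ in $\Gal(\splitting(s)/\rationals)$; cubing the latter leaves a transposition. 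Since the group is transitive ($s$ irreducible) and contains both a transposition and an $(n-1)$-cycle on $n=5$ points, Lemma~\ref{lem:swap+cycle} forces it to equal $S_5$.

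With $\Gal(\splitting(r_2)/\rationals) = S_5$ in hand, Lemma~\ref{lem:radical-tree} shows no root of $r_2$ — hence no root of $r_3$, by the $x\mapsto-x$ relation — is computable by a radical computation tree, so the only eigenvalues of $\transition(H)$ a radical tree can reach are $1$ and $-1$. Every eigenvalue of a transition matrix lies in $[-1,1]$, and the spectral-drawing framework selects eigenvectors ordered by eigenvalue, so the only admissible pair is $(u_{12},u_1)$, the eigenvectors for $1$ and $-1$; I would exhibit these explicitly as the all-ones vector and an alternating $\pm 1$ vector, and note that together they place the twelve vertices at only a handful of distinct points of the plane, yielding a degenerate drawing. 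The main obstacle is purely computational bookkeeping: one must choose $H$ and the two auxiliary primes so that the discriminant is small enough to factor, the primes avoid it, and the reductions of $s$ realize exactly the cycle types demanded by Lemma~\ref{lem:swap+cycle}. Once those are arranged, the argument is a mechanical chaining of Lemmas~\ref{lem:2n+1},~\ref{lem:swap+cycle}, and~\ref{lem:radical-tree}.
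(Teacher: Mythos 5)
Your proposal matches the paper's own argument essentially step for step: the same graph $H$, the same factorization of $\characteristic(\transition(H))$ into $(x-1)$, $(x+1)$, and two quintics related by $r_2(x) = -r_3(-x)$, the same monic conversion $s(x) = -24x^5r_2(1/x)$, the same Dedekind factorizations into a linear--quartic and a quadratic--cubic shape (the paper uses the primes $11$ and $5$), and the same concluding use of Lemmas~\ref{lem:swap+cycle} and~\ref{lem:radical-tree} together with the fact that the transition spectrum lies in $[-1,1]$, so only the eigenvectors for $\pm 1$ (all-ones and alternating $\pm 1$) remain, giving coinciding vertex positions. The proof is correct and is the paper's approach.
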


\clearpage

\section{Additional Impossibility Results for Multidimensional Scaling}
\label{app:mds}

In this appendix, we provide additional impossibility results for the classical multidimensional scaling method.

\begin{figure}[hbt]
% \vspace*{-12pt}
\centering
\includegraphics[scale=0.7]{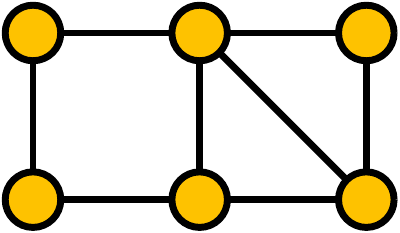}
\caption{A graph, $H$, whose classical multidimensional scaling coordinates are uncomputable by a radical tree.}
\label{fig:2-by-3-graph}
\end{figure}

The squared distance matrix for the graph in Figure~\ref{fig:2-by-3-graph} is 

\[
\begin{pmatrix}
0 & 1 & 1 & 4 & 4 & 9\\
1 & 0 & 1 & 1 & 1 & 4\\
1 & 1 & 0 & 1 & 4 & 4\\
4 & 1 & 1 & 0 & 4 & 1\\
4 & 1 & 4 & 4 & 0 & 1\\
9 & 4 & 4 & 1 & 1 & 0
\end{pmatrix}.
\]

After double centering it becomes

\[
\begin{pmatrix}
-73/18  &  -11/9  & -31/18  &  23/18  &    7/9  &  89/18\\
 -11/9  &  -7/18  &    1/9  &    1/9  &  -7/18  &   16/9\\
-31/18  &    1/9  & -25/18  &  -7/18  &   19/9  &  23/18\\
 23/18  &    1/9  &  -7/18  & -25/18  &   19/9  & -31/18\\
   7/9  &  -7/18  &   19/9  &   19/9  & -43/18  &  -20/9\\
 89/18  &   16/9  &  23/18  & -31/18  &  -20/9  & -73/18
\end{pmatrix}
\]

and the characteristic polynomial of this double centered matrix is:

\[ p(x) = x(x^5 + 41/3x^4 + 19x^3 - 125x^2 - 88/3x^1 + 48). \]

Let $p_0$ and $p_1$ be the two irreducible factors of $p$ in the above order.  Because $p_1(2) < 0$ and $p_1(3) > 0$, the factor $p_1$ has a root greater than zero; therefore, $p_1$ is the factor with the largest root. This root is the largest eigenvalue.  We now convert $p_1$ into a monic polynomial with integer coefficients,

\[q(x) = x^5 - 88x^4 - 54000x^3 + 1181952x^2 + 122425344x + 1289945088.\]

The discriminant of $q$ is $2^{61} \cdot 3^{31} \cdot 12421 \cdot 3039011$.  Factoring $q$ into irreducible polynomials modulo $7$ and $11$ gives:

\begin{align*}
q(x) &\equiv (x + 4)  (x^4 + 7x^3 + 4x^2 + 8x + 9) &\pmod{11}\\
q(x) &\equiv (x^2 + 2x + 5)  (x^3 + x^2 + 5x + 1) &\pmod{7}
\end{align*}

Because neither $11$ nor $7$ divides the discriminant of $q$, Dedekind's theorem implies the existence of a $4$-cycle and the composition of a transposition with a $3$-cycle in $\Gal(\splitting(q)/\rationals)$. Taking the third power of the latter element gives an element that is just a transposition. The existence of these two elements in the Galois group implies, by Lemma~\ref{lem:swap+cycle}, that $\Gal(\splitting(q)/\rationals) = S_5$.

In multidimensional scaling the vertex positions are determined by multiplying a matrix of the first few eigenvectors by a matrix of the square roots of the corresponding eigenvalues.  Lemma~\ref{lem:radical-tree} implies that these eigenvalues cannot be computed in a radical computation tree, but we must still show that the vertex positions themselves also cannot be computed in this model. However, the columns of the matrix of vertex positions are themselves multiples of eigenvectors. If we could compute the vertex positions, we could use these eigenvectors to recover their corresponding eigenvalues. Since the eigenvalues cannot be computed, it follows that the vertex positions also cannot be computed.

\begin{theorem}
There exists a graph on six vertices such that the drawing produced by classical multidimensional scaling is not computable in a radical computation tree.
\end{theorem}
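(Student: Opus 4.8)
The plan is to exhibit a concrete six-vertex graph — the $2\times 3$ grid graph of Figure~\ref{fig:2-by-3-graph} — and run the classical multidimensional scaling pipeline symbolically, tracking the algebraic degree of the vertex coordinates with Galois theory. First I would write down the matrix $D$ of squared graph-theoretic distances between all pairs of the six vertices, and then apply the \emph{double centering} operation (subtract row and column means, add back the overall mean, negate and halve) to obtain the Gram matrix $B$ whose eigenvectors and eigenvalues determine the drawing. Since $B$ has rational entries, its characteristic polynomial $p(x)=\characteristic(B)$ has rational coefficients; the plan is to compute it and to observe that it factors as $x$ times an irreducible quintic $p_1$, where the linear factor reflects the translational degree of freedom (the all-ones vector lies in the kernel of any double-centered matrix) and may be discarded.

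The core of the argument is to show that $\Gal(\splitting(p_1)/\rationals)=S_5$, so that the nonzero eigenvalues of $B$ are not expressible by nested radicals. To make Dedekind's theorem applicable I would first clear denominators, replacing $p_1$ by a monic integer polynomial $q$ with the same splitting field. Then I would compute the discriminant of $q$ and choose two small primes not dividing it; factoring $q$ modulo these primes should give, on the one hand, a factorization into an irreducible linear times an irreducible quartic — yielding a $4$-cycle, i.e. an $(n-1)$-cycle for $n=5$, in the Galois group — and, on the other hand, a factorization into an irreducible quadratic times an irreducible cubic, yielding a permutation that is a transposition composed with a $3$-cycle, whose cube is a transposition. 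Since $q$ is irreducible its Galois group is transitive, so by Lemma~\ref{lem:swap+cycle} the presence of a transposition and a $4$-cycle forces $\Gal(\splitting(q)/\rationals)=S_5$. Irreducibility of $q$ (equivalently of $p_1$) can be certified either by St\"{a}ckel's criterion (Lemma~\ref{lem:2n+1}), after checking enough prime values, or by a direct search ruling out linear and quadratic factors. Lemma~\ref{lem:radical-tree} then says that no root of $q$ is computable by a radical computation tree.

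Finally I would translate this eigenvalue statement into one about the drawing. The MDS coordinates are the first two columns of $V\Lambda^{1/2}$, so each coordinate column of the drawing is a scalar multiple of an eigenvector of $B$. Hence, if a radical computation tree could produce the vertex coordinates, it could produce a nonzero eigenvector $u$ of $B$, and from $Bu=\lambda u$ recover the corresponding eigenvalue $\lambda$ by a single rational operation — contradicting Lemma~\ref{lem:radical-tree}. One subtlety to handle is that the eigenvalue used in the drawing must actually be a root of the bad factor $p_1$ rather than of the discarded factor $x$: since MDS selects the largest eigenvalues and $p_1$ is the factor possessing the largest real root — which can be verified by a sign change of $p_1$ between two consecutive integers — the relevant eigenvalue is indeed a root of $p_1$, completing the argument. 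I expect the main obstacle to be purely computational bookkeeping — extracting a clean characteristic polynomial from the double-centered matrix and locating primes that realize the required cycle types — rather than any conceptual difficulty.
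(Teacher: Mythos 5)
Your proposal follows the paper's own argument essentially step for step: the same $2\times 3$ grid graph, the same double-centering and characteristic-polynomial computation yielding $x$ times an irreducible quintic, the same Dedekind/transposition-plus-cycle route to $\Gal = S_5$ via Lemma~\ref{lem:swap+cycle}, the same sign-change check that the largest eigenvalue is a root of the quintic factor, and the same final step of recovering an eigenvalue from a computed eigenvector to reach a contradiction. The approach is correct and matches the paper's proof.
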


We leave as open for future research the problem of proving degree lower bounds for multidimensional scaling in the root computation tree. The technique that we used for the corresponding problem for other graph drawing techniques was to express the coordinates of drawings of highly-symmetric graphs using high-degree cyclotomic polynomials or (almost equivalently) high degree Chebyshev polynomials, but that does not seem to work in this case. For instance, the characteristic polynomial for the multidimensional scaling drawing of an $n$-vertex cycle graph cannot be a cyclotomic polynomial of order~$n$, because the characteristic polynomial for multidimensional scaling always includes zero as a root whereas the cyclotomic polynomial has all its roots nonzero.

\clearpage
\section{Additional Impossibility Results for Circle Packings}
\label{app:circles}
In this appendix, we provide additional impossibility results for circle packings.
\begin{theorem}
For arbitrarily large values of $n$,
there are graphs on $n$ vertices such that constructing a circle packing for the graph on a root computation tree requires degree $\Omega(n^{0.677})$. If there exist infinitely many Sophie Germain primes, then there are graphs for which constructing a circle packing requires degree $\Omega(n)$.
\end{theorem}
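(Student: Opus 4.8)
The plan is to lift Theorem~\ref{thm:no-quadratic} from quadratic computation trees to bounded-degree root computation trees in exactly the way Theorem~\ref{thm:force-root} lifts Theorem~\ref{thm:force-quad}: replace the ``$\phi(7)=6$ is not a power of two'' step by a number-theoretic choice of cycle length that plants a large prime factor inside $\phi$, and then combine Lemma~\ref{lem:extensions} with Lemma~\ref{lem:concentric}.

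First I would fix the graph family. Following the proof of Theorem~\ref{thm:force-root}, pick a prime $p$ so that $\phi(p)=p-1$ has a large prime factor $q$: if infinitely many Sophie Germain primes exist, take $q$ to be such a prime and $p=2q+1$, so $q=(p-1)/2$; otherwise, by Lemma~\ref{lem:large-prime-factor-of-phi}, choose $p$ with largest prime factor $q$ of $p-1$ satisfying $q\ge p^{0.677}$. Let $G=\Bipyr(p)$, which has $n=p+2$ vertices, and let $P$ be its canonical concentric circle packing, as in Figure~\ref{fig:seven-pack}: the two hubs are represented by circles concentric at the origin, and the $p$ ring circles have their centers on the unit circle with one of them at $(1,0)$, so the ring centers are precisely $\zeta_p^0,\zeta_p^1,\dots,\zeta_p^{p-1}$.

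Next I would run the nonconstructibility argument. One center of $P$ is $\zeta_p$, and by Lemma~\ref{lem:roots-of-unity} we have $[\rationals(\zeta_p):\rationals]=\phi(p)=p-1$, a number divisible by the prime $q$ and hence not $D$-smooth for any $D<q$; by Lemma~\ref{lem:extensions}, $P$ cannot be built by a root computation tree of degree less than $q$. On the other hand $P$ supplies all the computable data required by Lemma~\ref{lem:concentric}: the origin is a (rational) center of a circle, the distance $1$ between the origin and $(1,0)$ is a distance between two circle centers, and the line through those two centers has the (rational) slope $0$. Since the reductions inside the proof of Lemma~\ref{lem:concentric} only use M\"obius transformations together with scalings, rotations and translations, which by Lemma~\ref{lem:mob-trans} need nothing beyond square roots, that lemma applies to the model of root computation trees of any fixed degree $D$ with $2\le D<q$, and it yields: no circle packing representing $\Bipyr(p)$ can be constructed by such a tree. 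The lone remaining case, a degree-one tree, produces only rational coordinates; but an all-rational packing of $\Bipyr(p)$ would, again via the reduction of Lemma~\ref{lem:concentric}, give a quadratic-tree construction of $\zeta_p$, which is impossible because $p-1$ is not a power of two. Hence every root computation tree constructing a packing of $\Bipyr(p)$ has degree at least $q$, and since $q=\Omega(p^{0.677})=\Omega(n^{0.677})$ in general and $q=(p-1)/2=\Omega(n)$ in the Sophie Germain case, the theorem follows.

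The main obstacle is not conceptual but a matter of careful bookkeeping: confirming that the canonical concentric packing of $\Bipyr(p)$ really does place the $p$ ring centers at the primitive $p$-th roots of unity (so that $[\rationals(\zeta_p):\rationals]$ enters verbatim), and checking that Lemma~\ref{lem:concentric}, stated for ``one of our computation models,'' is legitimately instantiated at the degree-bounded root tree model --- which it is, since every primitive used in its proof is available to any root tree of degree at least two. Everything else is a direct transcription of the Theorem~\ref{thm:force-root} argument with $\phi(p)$ in place of $\phi(n)$ and Lemma~\ref{lem:extensions} in place of the power-of-two test.
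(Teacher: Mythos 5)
Your proposal is correct and follows essentially the same route as the paper's own proof: choose $p$ prime with a large prime factor $q$ of $\phi(p)=p-1$ (Sophie Germain or Lemma~\ref{lem:large-prime-factor-of-phi}), take the concentric packing of the bipyramid over a $p$-cycle so that $\zeta_p$ appears as a circle center, apply Lemmas~\ref{lem:roots-of-unity} and~\ref{lem:extensions} to get the degree-$q$ lower bound, and transfer it to all packings of the graph via Lemma~\ref{lem:concentric}. Your extra bookkeeping (checking the rational data required by Lemma~\ref{lem:concentric} and the degree-one case) only makes explicit what the paper leaves implicit.
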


\begin{proof}
As in Theorem~\ref{thm:no-quadratic},
we consider packings of $\Bipyr (n-2)$ that have two concentric circles centered at the origin and all remaining circle centers on the unit circle. We choose $n=p+2$ where $p$ is a prime number for which $\phi(p)=p-1$ has a large prime factor $q$. If arbitrarily large Sophie Germain primes exist we let $q$ be such a prime and let $p=2q+1$. Otherwise, by Lemma~\ref{lem:large-prime-factor-of-phi} we choose $p$ in such a way that its largest prime factor $q$ is at least $p^{0.677}$.

By Lemma~\ref{lem:roots-of-unity}, we have:
\[ [\rationals (\zeta_{p}) : \rationals] = \phi(p) = p-1. \]
Thus, this extension is not $D$-smooth for any $D$ smaller than $q$, and every construction of it on a root computation tree requires degree at least $q$.
By Lemma~\ref{lem:concentric}, the same degree is necessary for constructing any packing of the same graph.
\end{proof}

\begin{figure}
\centering
\includegraphics[width=0.9\textwidth]{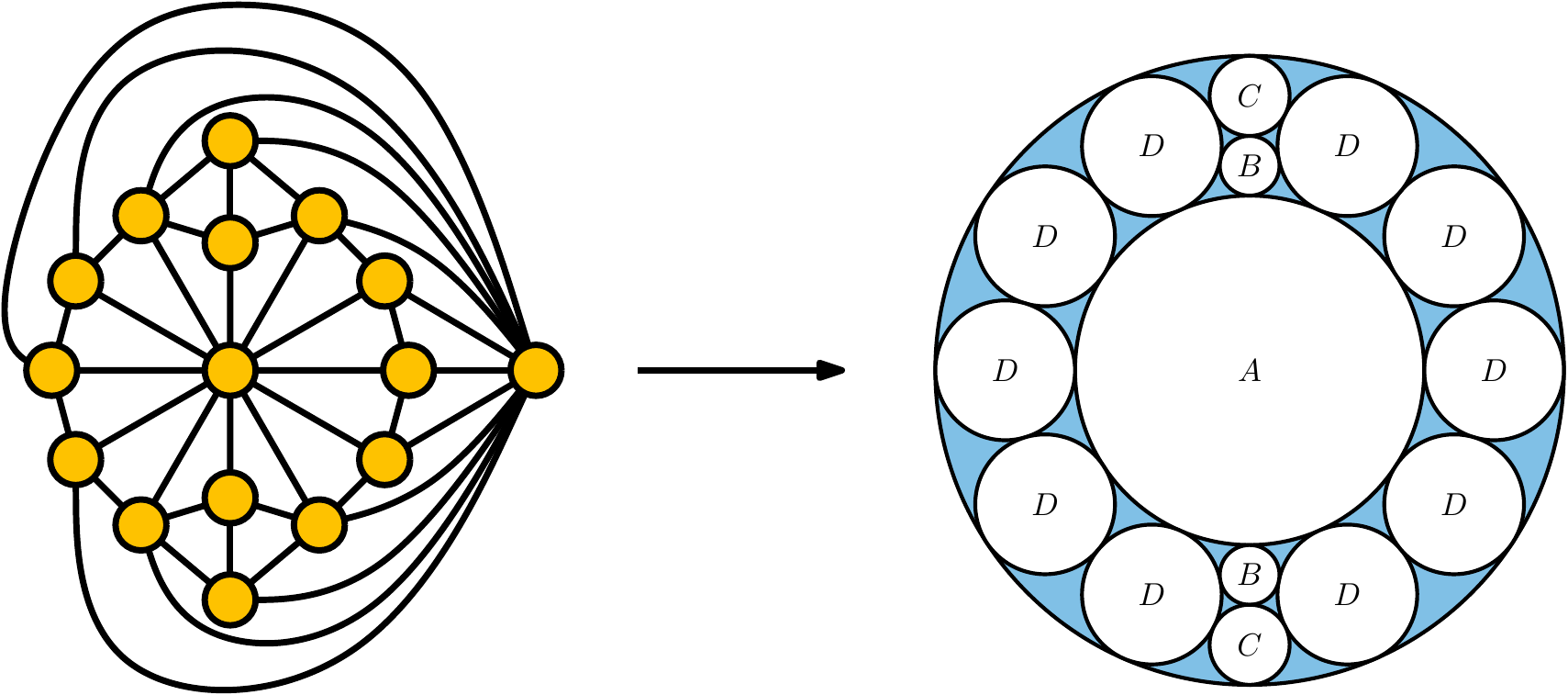}
\caption{An example of a graph (left) and its corresponding circle packing (right) where the circle packing is not constructible in a radical computation tree.}
\label{fig:two-five-graph}
\end{figure}

\Emph{Radical computation trees.} % for circle packing
To show that circle packings are in general not constructible with a radical computation tree we consider the input graph shown in Figure~\ref{fig:two-five-graph},  together with a circle packing in which the circles corresponding to the two degree twelve vertices are concentric. We assume that this packing has been scaled so that the circles tangent to both concentric circles (the circles labeled with $D$) have radius equal to one, and so that (as in the figure) the smaller circle centers lie on the $y$ axis and two of the unit circle centers lie on the $x$ axis. The placement and radii of the circles in this packing may be determined from two values, namely the radius $a$ of the circle labeled $A$ and the radius $b$ of the circle $B$.

We use the following two simple trigonometric lemmas:

\begin{lemma}\label{lem:twin-circles}
If $\arccos(X) + \arccos(Y) = \pi$, then $X+Y=0$.
\end{lemma}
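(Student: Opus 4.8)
The plan is to reduce the statement to the elementary identity $\cos(\pi - \theta) = -\cos\theta$ together with the fact that $\arccos$ is a right inverse of $\cos$ on its range. First I would set $\alpha = \arccos(X)$ and $\beta = \arccos(Y)$; these are well defined precisely because, for the identity $\arccos(X)+\arccos(Y)=\pi$ to make sense, $X$ and $Y$ must lie in $[-1,1]$, and then $\alpha,\beta \in [0,\pi]$.

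Next, from the hypothesis $\alpha + \beta = \pi$ I would rewrite $\beta = \pi - \alpha$ and take the cosine of both sides. Since $\cos(\arccos(X)) = X$ and $\cos(\arccos(Y)) = Y$, this gives $Y = \cos(\pi - \alpha) = -\cos(\alpha) = -X$, from which $X + Y = 0$ follows immediately. The only ``step'' with any content is invoking the cofunction/supplementary-angle identity for cosine, which is standard; everything else is substitution.

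There is essentially no obstacle here: the lemma is a one-line trigonometric fact, included only to streamline the subsequent circle-packing computations (presumably to relate the inversive distances, or the angles subtended at a concentric circle's center, of the two families of circles tangent to both concentric circles). If anything, the only point worth a word of care is checking that the arguments of $\arccos$ are in the valid domain, so that the function composition $\cos \circ \arccos$ is the identity; this is guaranteed by the hypothesis being meaningful, so no case analysis is needed.
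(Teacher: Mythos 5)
Your proof is correct, and it is actually simpler and more direct than the one in the paper. You observe that $\beta = \pi - \alpha$ and apply $\cos(\pi-\alpha) = -\cos\alpha$ directly, giving $Y = -X$ in one line. The paper instead expands $\cos(\arccos X + \arccos Y) = \cos\pi$ via the angle-addition formula, obtaining $(XY+1) - \sqrt{1-X^2}\sqrt{1-Y^2} = 0$, and then squares to eliminate the radicals, arriving at $(X+Y)^2 = 0$. The paper's route is more laborious here, but it is the same template it must use for the companion Lemma~\ref{lem:k5}, where the relation $\arccos(U) + 2\arccos(V) = \pi/2$ does not reduce to a single supplementary-angle identity and the expand-and-square manipulation is genuinely needed to produce a polynomial relation; your shortcut buys brevity for this lemma but does not generalize to that one. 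Your domain remark (that $X,Y \in [-1,1]$ so that $\cos\circ\arccos$ is the identity) is the right point of care and is implicitly assumed in the paper as well.
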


\begin{proof}
Let $X' = \arccos(X)$ and $Y' = \arccos(Y)$. Then we have
% This equation is not overfull, but it is close.
\begin{eqnarray*}
0 &=& \cos(X'+Y') - \cos(\pi) = \cos(X')\cos(Y')-\sin(X')\sin(Y') + 1 \\
  &=& (XY+1) - \sqrt{1-X^2}\sqrt{1-Y^2},
\end{eqnarray*}
which implies
\[
0 = (XY+1)^2 - (1-X^2)(1-Y^2) = X^2 + 2XY + Y^2 = (X+Y)^2.
\]
Thus, $X+Y = 0$.
\end{proof}

\begin{lemma}\label{lem:k5}
If $\arccos(U) + 2\arccos(V) = \pi/2$, then $4V^4 - 4V^2 + U^2 = 0$.
\end{lemma}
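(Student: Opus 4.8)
The plan is to convert the stated trigonometric identity into a polynomial identity by the same device used in the proof of Lemma~\ref{lem:twin-circles}: introduce the angles, eliminate the inverse trig functions using standard addition and double-angle formulas, and then square to clear the remaining square root. Concretely, set $u = \arccos(U)$ and $v = \arccos(V)$, so that $\cos u = U$ and $\cos v = V$ with both angles in $[0,\pi]$. The hypothesis says $u + 2v = \pi/2$, equivalently $u = \pi/2 - 2v$.

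The key computation is then a one-line cofunction identity: applying $\cos(\pi/2 - \theta) = \sin\theta$ gives
\[
U = \cos u = \cos\!\left(\tfrac{\pi}{2} - 2v\right) = \sin(2v) = 2\sin v\cos v = 2V\sin v .
\]
Squaring both sides and using $\sin^2 v = 1 - \cos^2 v = 1 - V^2$ yields
\[
U^2 = 4V^2\sin^2 v = 4V^2\,(1 - V^2) = 4V^2 - 4V^4,
\]
which rearranges to $4V^4 - 4V^2 + U^2 = 0$, exactly the claimed relation.

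There is essentially no obstacle here; the only point requiring a word of care is the sign when replacing $\sin v$ by $\sqrt{1 - V^2}$, but since we square immediately this sign ambiguity disappears, so the derivation is valid regardless of the quadrant of $v$. (In the geometric setting of the circle packing, the relevant angles are in $[0,\pi/2]$ anyway, so $\sin v \ge 0$.) I would present the argument in the same compact, display-equation style as Lemma~\ref{lem:twin-circles} so that the two trigonometric lemmas read uniformly.
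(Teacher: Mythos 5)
Your proof is correct and follows essentially the same strategy as the paper's: eliminate the inverse cosines via trigonometric identities and then square away the remaining radical. Your use of the cofunction identity $U=\cos(\pi/2-2v)=\sin(2v)=2V\sin v$ is a slight streamlining of the paper's version, which instead expands $\cos(U'+2V')-\cos(\pi/2)=0$ by the addition and double-angle formulas and must then square an expression containing the product $\sqrt{1-U^2}\sqrt{1-V^2}$; both routes land on the same polynomial $4V^4-4V^2+U^2=0$.
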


\begin{proof}
Let $U' = \arccos(U)$ and $V' = \arccos(V)$. Then we have
\begin{align*}
0 &= \cos(U' + 2V') - \cos(\pi/2)\\
&= \cos(U')(2\cos^2(V') - 1) - \sin(U')(2\sin(V')\cos(V'))\\
&= 2UV^2 - U - 2\sqrt{1-U^2}\sqrt{1-V^2}V
\end{align*}
which implies
\[
0 = (2UV^2 - U)^2 -4(1-U^2)(1-V^2)V^2 = 4V^4 - 4V^2 + U^2.
\]
Thus, $4V^4 - 4V^2 + U^2=0$.
\end{proof}

We now derive polynomial equations that these two radii must satisfy.
If we consider the triangle formed by the centers of circles $C$, $B$ and $D$, then the angle at the center of $B$ is given by $\arccos(X)$, where $X$ is given below. Similarly, if we consider the triangle formed by the centers of circles $B$, $D$ and $A$, then the angle at the center of $B$ is given by $\arccos(Y)$, where $Y$ is given below. These formulas follow from a direct application of the law of cosines.
\begin{align*}
X &= \frac{(b+(1-b))^2 + (b + 1)^2 - ((1-b)+1)^2}{2(b+(1-b))(b+1)}\\
Y &= \frac{(b+1)^2 + (b+a)^2 - (a+1)^2}{2(b+1)(b+a)}
\end{align*}
Now we have the relation $\arccos(X) + \arccos(Y) = \pi$, as the angles around $B$ sum to $2\pi$. This fact together with Lemma~\ref{lem:twin-circles} implies $a = 2b^2 / (1-2b)$. Thus, we can remove the variable $a$ from consideration as it can be computed from $b$ in a radical computation tree.

To find a polynomial with $b$ as its root we consider the angles around circle $A$. The angle at the center of circle $A$ in the triangle through the centers of the circle $A$, $B$ and $D$ is given by $\arccos(U)$, where $U$ is given below. Similarly, the angle at the center of circle $A$ in the triangle through the centers of $A$, $D$ and an adjacent $D$ is given by $\arccos(V)$, where $V$ is given below. Again, these formulas follow from the law of cosines.
\[
U = \frac{(a+b)^2 + (a+1)^2 - (b+1)^2}{2(a+b)(a+1)} = \frac{-6b^2 + 6b -1}{2b^2-2b+1}
\]
\[
V = \frac{(a+1)^2 + (a+1)^2 - (1+1)^2}{2 (a+1)(a+1)} = \frac{(2b^2 - 4b +1)(2b^2-1)}{(2b^2 - 2b +1)^2}
\]

Since the angles around the circle $A$ sum to $2\pi$ we have the relation $\arccos(U) + 2\arccos(V) = \pi/2$. Plugging the computed values of $U$ and $V$ into the formula of Lemma~\ref{lem:k5} yields the polynomial $f(b)$ as its numerator, where:
\begin{align*}
f(b) =\ &2304b^{16} - 18432b^{15} + 68096b^{14} - 154112b^{13} + 254720b^{12}\\ &- 363520b^{11} + 471424b^{10} - 501376b^9 + 390112b^8 - 208000b^7\\ &+ 73440b^6 - 17504b^5 + 3568b^4 - 896b^3 + 200b^2 - 24b + 1
\end{align*}

The polynomial $f(b)$ factors as the product of two irreducible eighth degree polynomials $f_0(b)$ and $f_1(b)$, below. We have the identity $f_0(b) = f_1(1-b)$ which appears to come from the symmetry between the outer and inner circles of the packing. For this reason the splitting field of $f(b)$ is equal to the splitting field of $f_0(b)$. Since the polynomial $f_0$ is not monic we will instead consider the monic polynomial $g(b) = b^8f_0(1/b)$ (this corresponds to reversing the order of the coefficients), which has the same splitting field:
\begin{align*}
f_0(b) &=48b^8 - 256b^7 + 592b^6 - 656b^5 + 336b^4 - 64b^3 + 4b^2 - 4b + 1\\
f_1(b) &= 48b^8 - 128b^7 + 144b^6 - 208b^5 + 336b^4 - 288b^3 + 116b^2 - 20b + 1\\
g(b) &= b^8 - 4b^7 + 4b^6 - 64b^5+336b^4-656b^3+592b^2-256b +48
\end{align*}

The polynomial $g(b)$ is irreducible by Lemma~\ref{lem:2n+1}, as it produces a prime for seventeen integer inputs in the range from $-119$ to $101$.
%as $g(k)$ is prime when $k$ is equal to each of the following seventeen values: $1,2,5,7,11,13,-25, -41, -43, 53, 65, 67, -73, -91, 97, 101$, and $ -119$.
The discriminant of $g$ is $2^{52} \cdot 81637$, and we have the following factorization of $g$ into irreducible polynomials modulo the primes $3$ and $29$:
\begin{align*}
g(b) &\equiv b  (b^7 + 2b^6 + b^5 + 2b^4 + b^2 + b + 2) &&\pmod{3}\\
g(b) &\equiv (b + 9)  (b^2 + 23b + 12)  (b^5 + 22b^4 + 9b^3 + 20b + 23) &&\pmod{29}
\end{align*}
By Dedekind's theorem, the factorization modulo $3$ implies the existence of a $7$-cycle in $\Gal(\splitting(g)/\rationals)$, and the factorization modulo $29$ implies there is also a permutation that is the composition of a transposition and a $5$-cycle. Taking this second permutation to the fifth power eliminates the $5$-cycle, producing a transposition. By Lemma~\ref{lem:swap+cycle}, $\Gal(\splitting(g)/\rationals) = S_8$. So by Lemma~\ref{lem:radical-tree} the value of $b$ cannot be computed by a radical computation tree. Thus, by Lemma~\ref{lem:concentric}  a radical computation tree cannot construct any circle packing of the graph in Figure~\ref{fig:two-five-graph}, proving the following theorem.

\begin{theorem}
There exists a graph on sixteen vertices such that constructing a circle packing for the graph on a radical computation tree is not possible.
\end{theorem}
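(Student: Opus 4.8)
The plan is to reduce the theorem to a single Galois-group computation for one carefully chosen circle packing, using the M\"obius-normalization machinery already developed. Concretely, I would take the sixteen-vertex graph of Figure~\ref{fig:two-five-graph}---two hub vertices of degree twelve, together with a symmetric ring of lower-degree vertices---and consider only the canonical packing in which the two hub circles are concentric at the origin, scaled so that the circles tangent to both hubs have unit radius, with the small circle centers on the $y$ axis and two of the unit-circle centers on the $x$ axis. Since a radius, a center, and the slope of a line through two centers of this packing are all trivially constructible ($0$, $1$, rational points, rational slopes), Lemma~\ref{lem:concentric} together with Lemma~\ref{lem:mob-trans} reduces the theorem to showing that \emph{this particular} packing is not constructible by a radical tree. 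The whole packing is pinned down by just two real parameters: the radius $a$ of circle $A$ and the radius $b$ of circle $B$.

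The next step is to convert the tangency constraints into polynomial equations. Applying the law of cosines to the two triangles of circle centers meeting at circle $B$ produces cosines $X(b)$ and $Y(a,b)$ of the two angles at $B$, whose sum is $\pi$; Lemma~\ref{lem:twin-circles} then gives the rational relation $a = 2b^2/(1-2b)$, so $a$ is eliminated. Doing the same around circle $A$, and substituting this value of $a$, yields cosines $U(b)$ and $V(b)$ with $\arccos(U) + 2\arccos(V) = \pi/2$; Lemma~\ref{lem:k5} turns this into $4V^4 - 4V^2 + U^2 = 0$, and clearing denominators gives a degree-sixteen polynomial $f(b)$ having the desired radius $b$ among its roots. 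I expect $f$ to factor into two irreducible octics $f_0,f_1$ interchanged by $b \mapsto 1-b$ (the inner/outer symmetry of the packing), so that $\splitting(f) = \splitting(f_0)$; reversing the coefficients of $f_0$ then produces a monic integer octic $g$ with the same splitting field, which is the object on which the Galois computation is carried out.

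Finally I would show $\Gal(\splitting(g)/\rationals) = S_8$. Irreducibility of $g$ follows from St\"{a}ckel's criterion (Lemma~\ref{lem:2n+1}) by exhibiting enough integers $k$ with $|g(k)|$ prime---seventeen suffice since $\deg g = 8$. Then, computing the discriminant of $g$ and choosing two primes not dividing it, I would factor $g$ modulo each: a factorization of cycle type $1+7$ gives, by Dedekind's theorem, a $7$-cycle in the Galois group, and a factorization of type $1+2+5$ gives an element that is a transposition times a $5$-cycle, whose fifth power is a bare transposition. Transitivity is automatic from irreducibility of $g$, so Lemma~\ref{lem:swap+cycle} forces the group to be all of $S_8$, which is unsolvable; hence by Lemma~\ref{lem:radical-tree} the parameter $b$ cannot be computed by a radical tree, and by Lemma~\ref{lem:concentric} neither can any circle packing of this graph. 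The main obstacle---more bookkeeping than conceptual difficulty---is the resultant-style elimination that yields $f(b)$ and, especially, locating primes whose mod-$p$ factorizations of $g$ realize exactly the cycle types needed to invoke Lemma~\ref{lem:swap+cycle}; both are done with a computer algebra system, but must be arranged so that the final certificate (irreducibility, one discriminant, two modular factorizations) remains hand-checkable.
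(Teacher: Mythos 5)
Your proposal matches the paper's proof essentially step for step: the same normalized concentric packing of the sixteen-vertex graph, elimination of $a$ via the law-of-cosines relations and Lemmas~\ref{lem:twin-circles} and~\ref{lem:k5}, the degree-sixteen polynomial splitting into two octics swapped by $b\mapsto 1-b$, Stäckel irreducibility, Dedekind cycle types ($1+7$ and $1+2+5$) forcing $S_8$, and the transfer to all packings via Lemmas~\ref{lem:radical-tree} and~\ref{lem:concentric}. This is the paper's argument, correctly reproduced.
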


\Emph{Additional circle packings and their groups.}\label{app:gallery}
The Galois groups in this section were calculated using Sage. We identified the low-degree groups by using the PARI library, and the high-degree symmetric groups by using a brute force search for primes satisfying the conditions of Dedekind's theorem and Lemma~\ref{lem:swap+cycle}. To factor the polynomials arising in these computations we used the FLINT library.

\begin{figure}[t]
\centering
\includegraphics[height=1.5in]{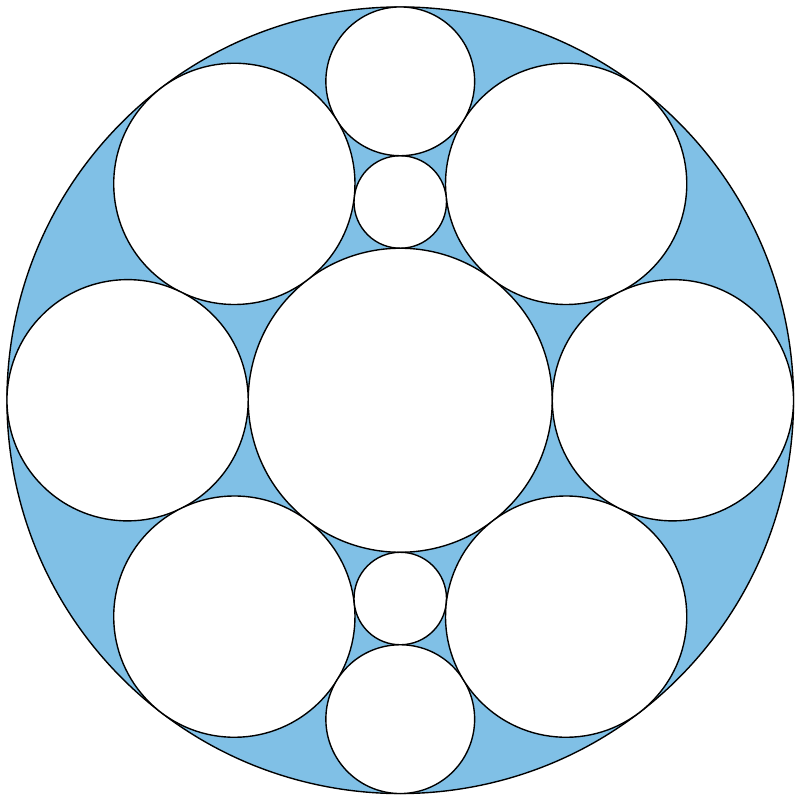}\hspace{1em}
\includegraphics[height=1.5in]{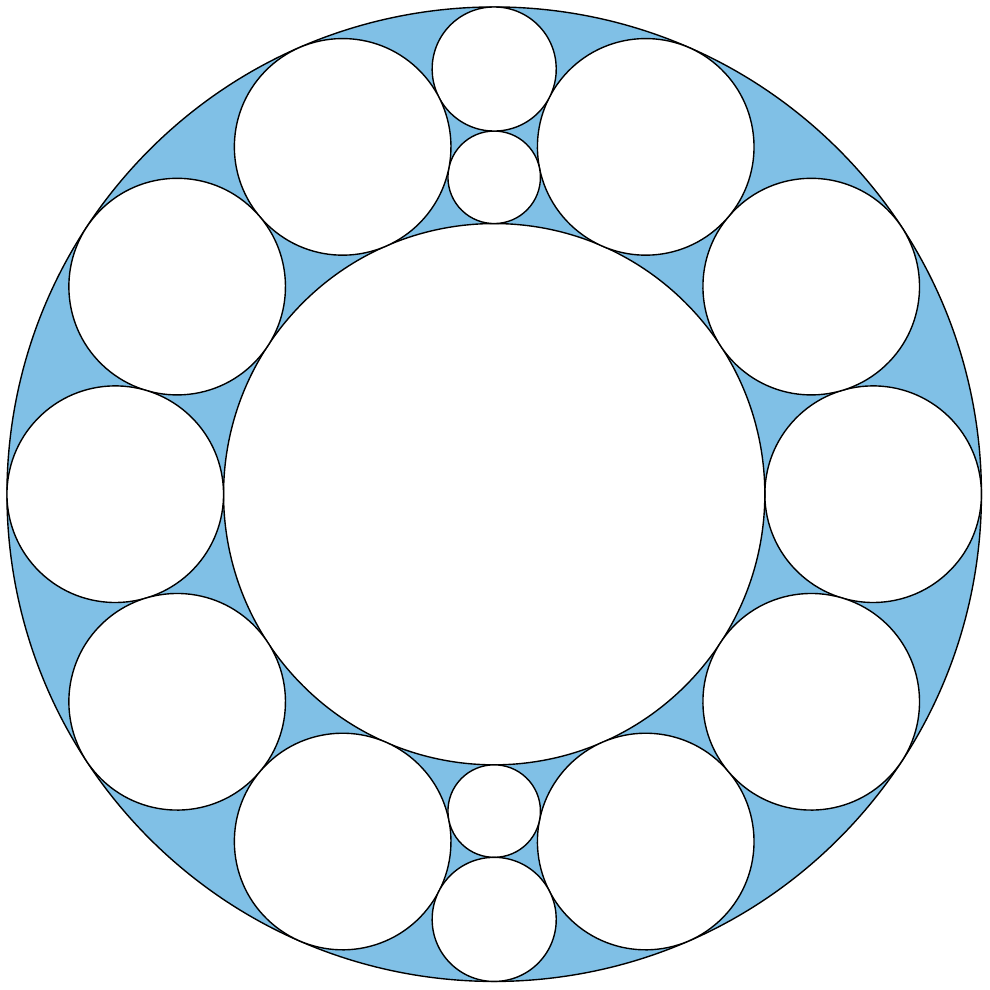}\\
\vspace{1em}
\includegraphics[height=1.5in]{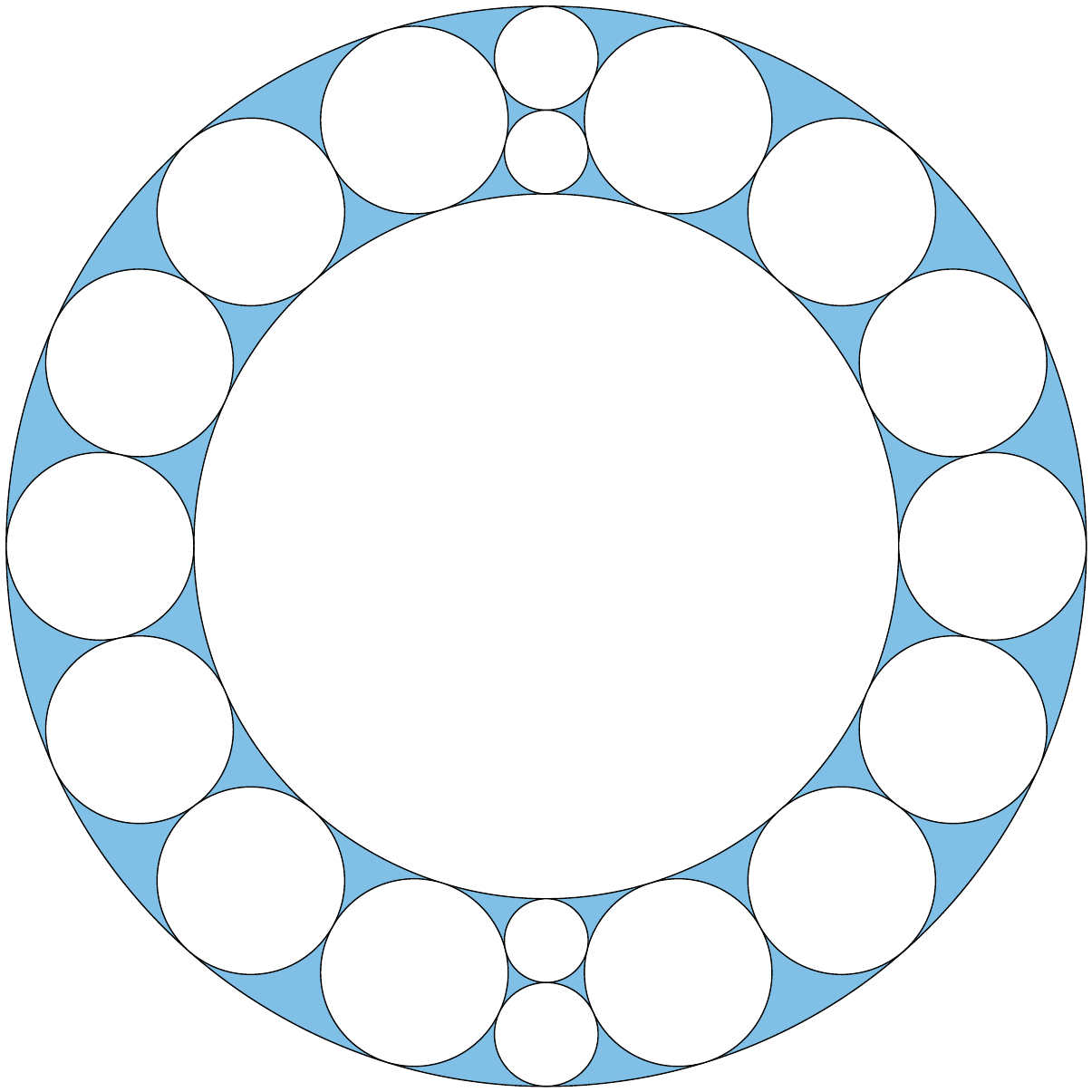}\hspace{1em}
\includegraphics[height=1.5in]{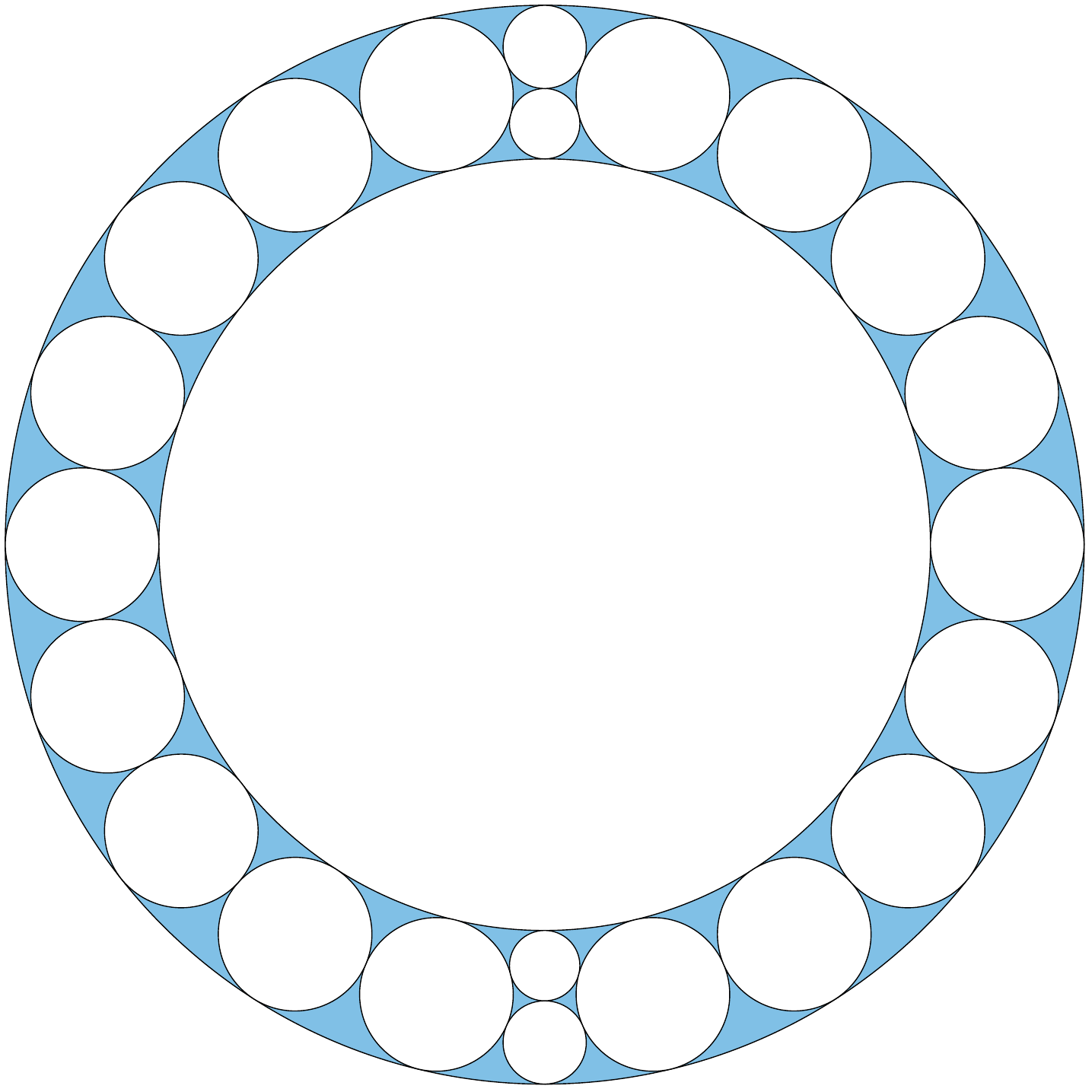}
\caption{Concentric circle packings $\Pack(2,n)$ for $n = 5,7,9,11$.}
\label{fig:pack2n}
\end{figure}

The circle packings we consider in this section are variants of the bipyramid. The graph $\Pack(k,n)$ is constructed from a cycle of length $kn + k$ by replacing every \nth{n+1} vertex with a pair of adjacent vertices. Then we add two additional vertices $u$ and $v$. As in the bipyramid, each of the vertices in the initial cycle is adjacent to both $u$ and $v$. For each pair that replaced one of the vertices in the cycle we connect one vertex in the pair to $u$ and one to~$v$. This creates a maximal planar graph with $kn+2k + 2$ vertices. The graph in Figure~\ref{fig:two-five-graph} is $\Pack(2,5)$, and Figure~\ref{fig:pack2n} depicts several additional graphs of the form $\Pack(2,n)$.

The next conjecture concerns the graphs of the form $\Pack(2,n)$. We tested its correctness for all the graphs of this form up to $n = 120$.
\begin{conjecture}
For the concentric packing of $\Pack(2,n)$, the value of $b$ satisfies a polynomial such that when
\begin{itemize}
\item $n \equiv 3,5 \pmod{6}$ its irreducible factors have Galois group: $S_{2n-2}$;
\item $n \equiv 1 \pmod{6}$ its irreducible factors have Galois groups: $S_2, S_{2n-4}$;
\item $n \equiv 0,2 \pmod{6}$ its irreducible factors have Galois groups: $S_1,S_{n-2}, S_{n-1}$;
\item $n \equiv 4 \pmod{6}$ its irreducible factors have Galois groups: $S_1$, $S_2$, $S_{n-3}$, $S_{n-2}$, $S_{n-2}$.
\end{itemize}
\end{conjecture}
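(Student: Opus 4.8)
The plan is to first make the polynomial satisfied by $b$ explicit for every $n$, then to account for its factorization combinatorially, and finally to pin down the Galois group of the essential factor(s). Exactly as in the $\Pack(2,5)$ computation, I would place the two hub circles concentrically at the origin, scale so that the ring circles tangent to both hubs have radius one, and write the turning condition around each hub (the angles subtended at its center sum to $2\pi$) together with the angle conditions around the two doubled circles. By the law of cosines each of these angles is $\arccos$ of a rational function of $b$ and of the radii of the doubled circles, so, after using Lemma~\ref{lem:twin-circles}-type identities to eliminate the auxiliary radii in terms of $b$, a single Chebyshev-type elimination turns the angle-sum relations into a polynomial $P_n(b)$ with rational coefficients whose degree grows linearly in $n$, consistent with the $S_{2n-2}$, $S_{n-1}$, \dots{} claims. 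The first concrete task is to carry out this elimination in closed form, ideally exhibiting $P_n$ as a resultant of two Chebyshev-like polynomials or via a recursion in $n$.

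Second, I would explain the factorization pattern. The reflection $b \mapsto 1-b$ exchanging the inner and outer hubs is a symmetry of the packing; as in the $\Pack(2,5)$ case it carries one irreducible factor to another and so pairs factors up, which already predicts the two-factor behavior for $n\equiv 3,5 \pmod 6$. The extra small factors (the $S_1$ and $S_2$ pieces appearing for the other residues) should correspond to special sub-configurations forced by the combinatorics when $n$ has those residues mod $6$ — for instance configurations where the two doubled circles are congruent, or where a symmetric cluster of ring circles closes up — and I would identify these rational or quadratic sub-solutions explicitly and show they divide $P_n$, leaving a generic cofactor of the degree predicted in the conjecture.

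Third, and this is where the real work lies, I would show that the Galois group of each generic factor $f$ (of degree $m$) is $S_m$. The mechanism is the one used throughout the paper: $f$ is irreducible, so $\Gal(\splitting(f)/\rationals)$ is transitive, and by Lemma~\ref{lem:swap+cycle} it suffices to produce a transposition and an $(m-1)$-cycle, which by Dedekind's theorem follows from a prime $p$ modulo which $f$ splits as (linear)$\times$(irreducible of degree $m-1$) and a prime $q$ modulo which $f$ splits as (quadratic)$\times$(linears), neither dividing the discriminant. For a single $n$ this is a finite search, but the conjecture asserts it for all $n$ simultaneously, so what is needed is a \emph{uniform} supply of such primes. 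I would attack this either (a) analytically, via a Chebotarev-type statement for the family $\{f_n\}$ guaranteeing the required factorization types occur with density bounded below independently of $n$, together with a bound on the discriminant that keeps the bad primes away; or (b) explicitly, by exhibiting $p=p(n)$ and $q=q(n)$ directly, perhaps exploiting that $P_n \bmod p$ degenerates to a Chebyshev or cyclotomic polynomial whose mod-$p$ factorization is classically known.

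The main obstacle is exactly this uniformity in $n$. The paper's degree lower bounds get uniformity for free from cyclotomic fields, but here the groups are full symmetric groups rather than abelian, so there is no cyclotomic shortcut and one must genuinely control the mod-$p$ behavior of an entire family. I expect route (a) to require an effective large-sieve or Chebotarev-in-families input and route (b) to hinge on a clean closed form (a recursion or resultant identity) for $P_n$; this is why the statement is posed as a conjecture, and a complete proof likely needs either a new number-theoretic ingredient or a fortunate structural identity for $P_n$.
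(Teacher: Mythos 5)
The first thing to note is that the paper does not prove this statement at all: it is posed as a conjecture, supported only by machine computation (Sage, PARI, and FLINT) of the relevant Galois groups for all $n \le 120$. So there is no paper proof to compare against; your proposal is a research program, and you correctly present it as such. Your per-instance machinery --- concentric normalization, law-of-cosines angle conditions, elimination down to a polynomial $P_n(b)$, the $b \mapsto 1-b$ symmetry pairing the two degree-$(2n-2)$ factors, and Dedekind's theorem plus Lemma~\ref{lem:swap+cycle} to certify a full symmetric group --- is exactly what the paper carries out for the single case $\Pack(2,5)$, and the obstacle you name (a supply of suitable Dedekind primes uniform in $n$) is indeed a central reason the statement remains open.

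Two concrete cautions about your plan, beyond the uniformity issue you already flag. First, your third step tacitly assumes that the generic factor of $P_n$ has full symmetric Galois group and that only the prime-searching certification is missing; but the paper's companion conjecture for $\Pack(1,n)$ asserts hyperoctahedral groups $2 \wr S_d$, which are proper imprimitive subgroups of the symmetric group. In a structurally almost identical family, then, the generic factor is expected \emph{not} to be $S_m$, so any proof for $\Pack(2,n)$ must first explain what destroys the imprimitivity there --- a structural question about $P_n$ that precedes, and is independent of, any Chebotarev-in-families input. Second, the factorization bookkeeping in the conjecture is irregular (the case $n \equiv 4 \pmod 6$ has five factors, including a repeated $S_{n-2}$), which suggests the small factors do not all arise from one clean geometric degeneration; identifying them residue by residue mod $6$, and proving irreducibility of the large cofactors for every $n$, are themselves unresolved steps that your outline only gestures at. Neither objection invalidates your approach, but together they mean the gap is wider than ``find uniform primes'': even with unlimited primes in hand, irreducibility and the exclusion of imprimitive groups would still have to be established for all $n$ simultaneously.
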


\begin{figure}[t]
\centering
\includegraphics[height=1.5in]{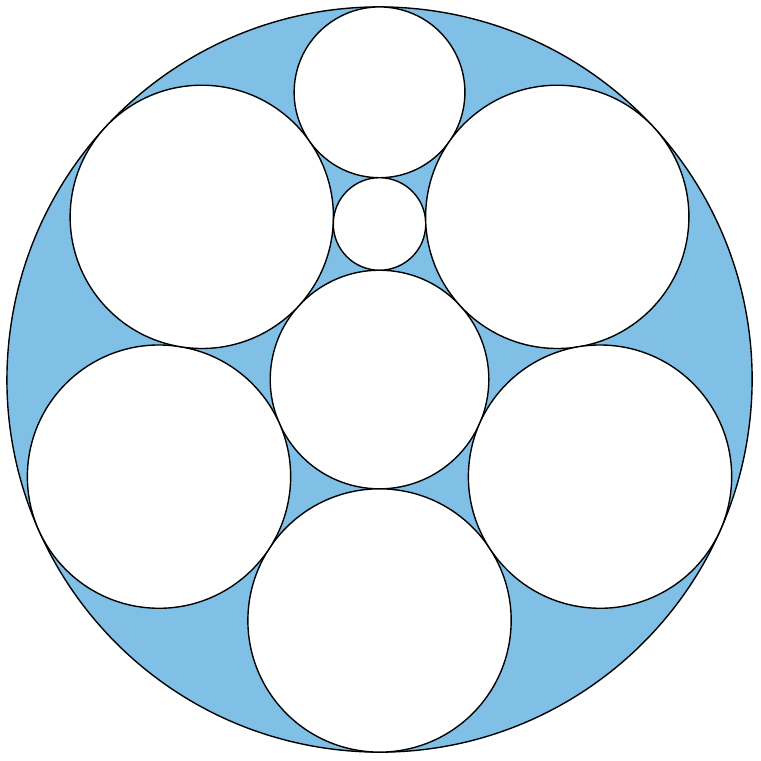}\hspace{1em}
\includegraphics[height=1.5in]{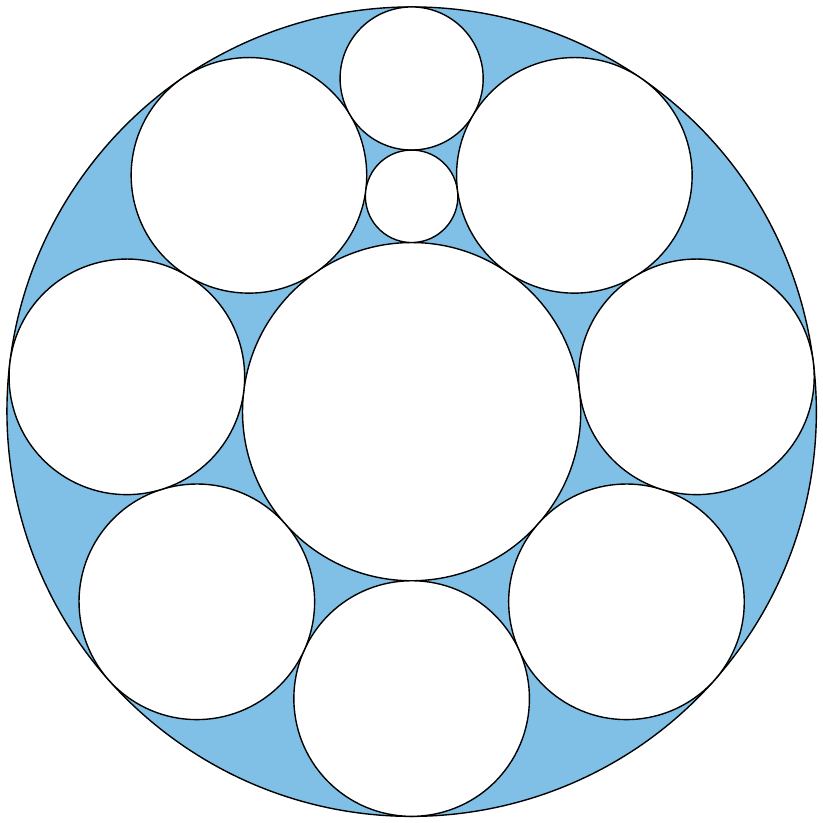}\\
\vspace{1em}
\includegraphics[height=1.5in]{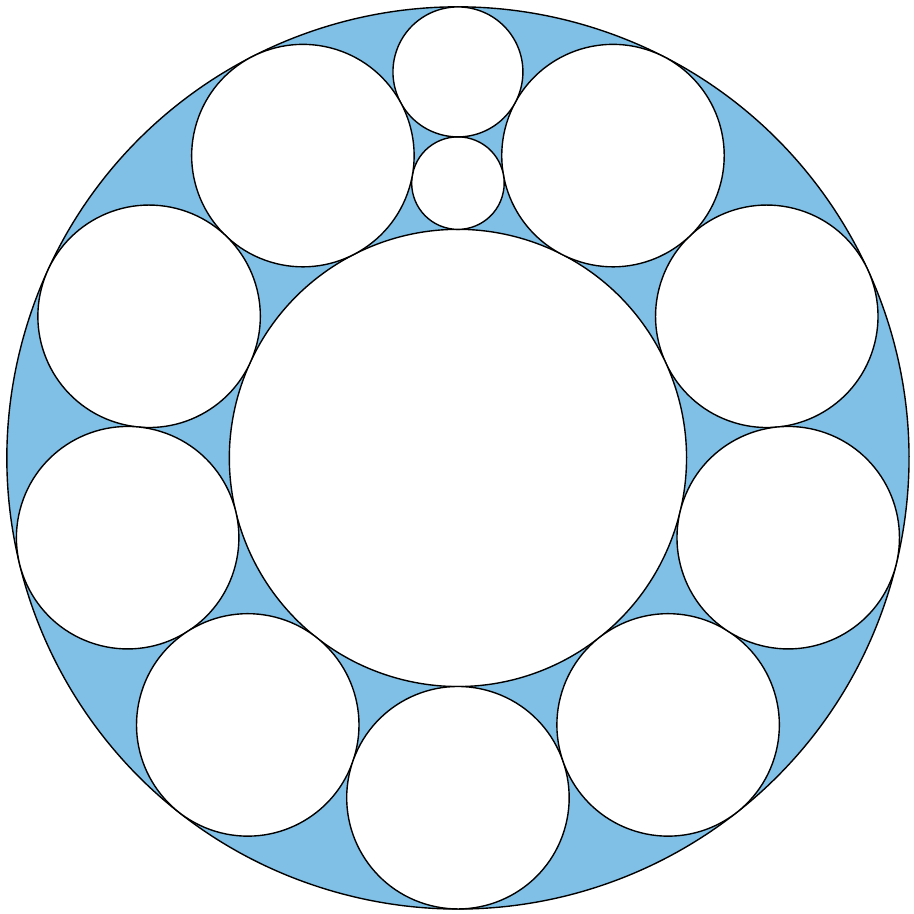}\hspace{1em}
\includegraphics[height=1.5in]{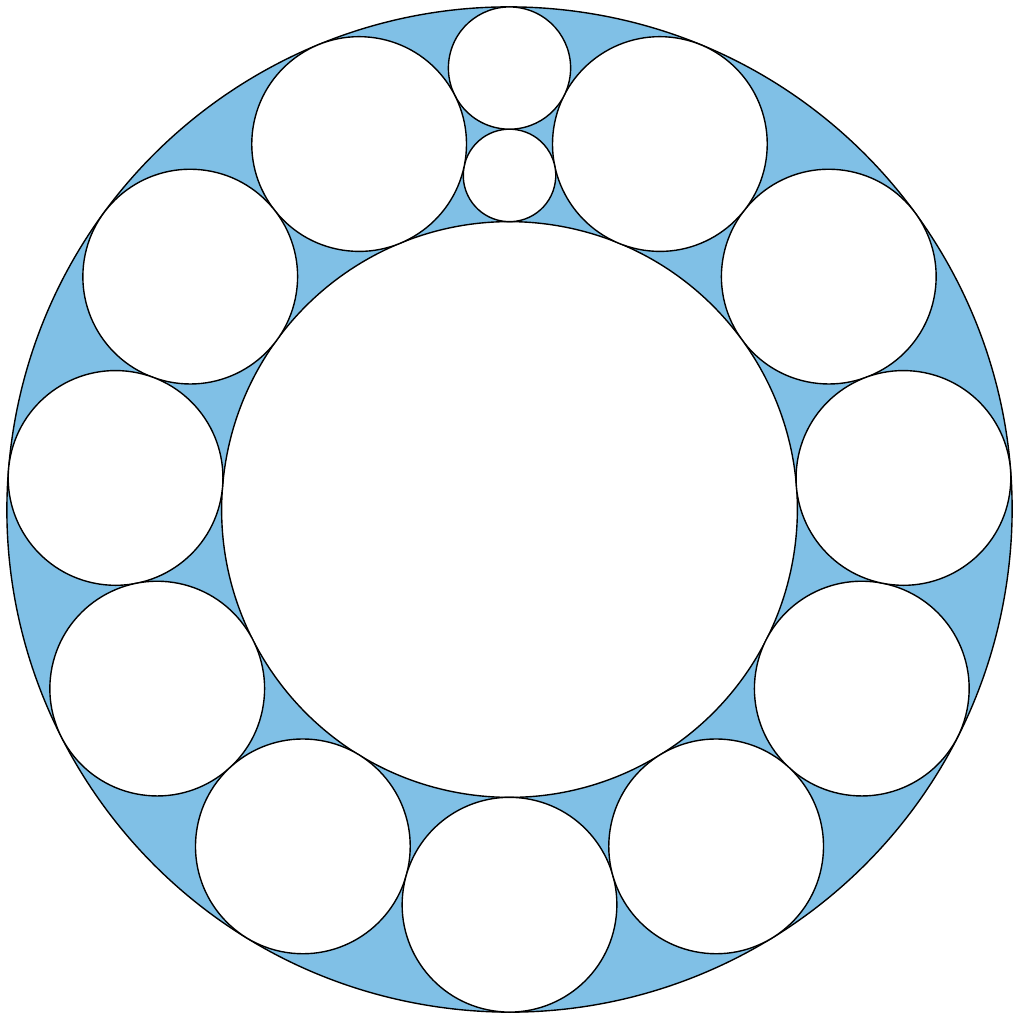}
\caption{Concentric circle packings $\Pack(1,n)$ for $n = 5,7,9,11$.}
\label{fig:pack1n}
\end{figure}

The next conjecture  concerns the graphs of the form $\Pack(1,n)$ depicted in Figure~\ref{fig:pack1n}. We tested it only up to $n = 13$ due to the increased difficulty of identifying Galois groups that are not symmetric groups. The groups $2\wr S_d$ appearing in the conjecture are the hyperoctahedral groups, symmetry groups of $d$-dimensional hypercubes.

\begin{conjecture}
For the concentric packing of $\Pack(1,n)$, the value of $b$ satisfies a polynomial such that when
\begin{itemize}
\item $n \equiv 3 \pmod{4}$ its irreducible factors have Galois groups: $2 \wr S_{n-1}$.
\item $n \equiv 1 \pmod{4}$ its irreducible factors have Galois groups: $S_1, 2 \wr S_{n-2}$;
\end{itemize}
\end{conjecture}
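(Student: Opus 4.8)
The plan is to follow the template already used in this paper for $\Pack(2,5)$: build the concentric packing of $\Pack(1,n)$ explicitly, extract a one-variable polynomial whose splitting field carries the drawing, and then analyze its Galois group. First I would place the two high-degree hub vertices $u,v$ as concentric circles centered at the origin and normalize so that the equatorial circles tangent to both hubs have radius $1$; the single pair of adjacent vertices that replaces one vertex of the cycle breaks the rotational symmetry and leaves only a small number of unknown radii. Using the law of cosines on each triangle of tangent-circle centers together with the fact that the angles around each circle sum to $2\pi$ (exactly as in Lemmas~\ref{lem:twin-circles} and~\ref{lem:k5}), I would write down the constraint system, eliminate all but one variable $b$ by a Gr\"obner basis or iterated resultants, clear denominators, and discard the linear and degenerate factors. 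This yields a polynomial $f_n(b)$ whose nontrivial irreducible factors should be the ones named in the conjecture.

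The second step is the \emph{upper} bound on the Galois group. The concentric packing admits an evident order-two symmetry, inversion through the common center of the two hub circles, which exchanges ``inside'' and ``outside'' and hence induces an involution $\iota$ on the roots of $f_n$ (the analogue of $b\mapsto 1-b$, which for $\Pack(2,n)$ produced the identity $f_0(b)=f_1(1-b)$). Consequently each nonlinear irreducible factor is $\iota$-stable, $\iota$ acts on its roots as a fixed-point-free involution pairing them up, and so $\Gal(\splitting(f)/\rationals)$ embeds into the subgroup of $S_{2d}$ preserving that pairing, namely $2\wr S_d$ with $2d$ the degree; a root fixed by $\iota$ is rational and produces the $S_1$ factor. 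Determining which case occurs, hence the value of $d$ ($n-1$ when $n\equiv 3\pmod 4$, $n-2$ when $n\equiv 1\pmod 4$) and the appearance of the linear factor, is a matter of tracking the parity-dependent incidence pattern of the packing, analogous to the residue-class bookkeeping in the $\Pack(2,n)$ conjecture.

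The third step is the matching \emph{lower} bound, $\Gal=2\wr S_d$, via Dedekind's theorem. One reduces $f_n$ modulo primes not dividing its discriminant and reads cycle types off the factorizations, now interpreted inside $2\wr S_d\subseteq S_{2d}$. Since $2\wr S_d$ is generated, for instance, by an ``$\iota$-twisted'' $2d$-cycle (whose $d$-th power is the product of all pair-swaps) together with one transposition that exchanges the two elements of a single pair, it would suffice to exhibit one prime whose reduction is irreducible of degree $2d$ with the right twist, and one prime whose reduction factors as an irreducible of degree $2d-2$ times a quadratic that splits one pair; a small wreath-product analogue of Lemma~\ref{lem:swap+cycle}, characterizing transitive subgroups of $2\wr S_d$ containing a long twisted cycle and a pair transposition, would then close the argument. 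Combined with Lemma~\ref{lem:radical-tree} (using that $S_d$ sits inside $2\wr S_d$, and is unsolvable for $d\ge 5$) and Lemma~\ref{lem:concentric} to pass from the concentric packing to an arbitrary one, this already shows that for any fixed $n\ge 7$ in the right residue class these packings are not constructible by a radical computation tree.

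The main obstacle is \emph{uniformity in $n$}: both the irreducibility of each named factor of $f_n$ and the existence of the required Dedekind primes must be proved for \emph{all} $n$ in each residue class, not merely verified up to $n=13$. Because the plain bipyramid's concentric packing has a cyclotomic Galois group, $f_n$ is a ``one-defect perturbation'' of a cyclotomic-type polynomial, and the heuristic behind the conjecture is that a generic such perturbation attains the largest Galois group compatible with the visible inversion symmetry, namely $2\wr S_d$. Making this rigorous --- for example by producing a single modulus exhibiting the required factorization pattern for all $n$ simultaneously, or by a Hilbert-irreducibility/specialization argument in a parameter interpolating $\Pack(1,n)$ with the bipyramid --- is exactly what I expect to be the crux, and is presumably why the statement remains a conjecture rather than a theorem.
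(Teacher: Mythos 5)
There is nothing to compare your argument against: the statement you were given is one of the paper's explicitly labelled \emph{conjectures}, and the paper offers no proof of it at all --- only computational evidence (Sage/PARI/FLINT calculations of the Galois groups for the concentric packings of $\Pack(1,n)$ up to $n=13$, the authors noting that identifying non-symmetric groups is harder than the brute-force Dedekind-prime search they used for the $\Pack(2,n)$ family). Your submission is likewise not a proof but a research programme, and you say so yourself: the crux you identify, uniformity in $n$, is exactly what is missing, and without it your plan reduces to the same finite verification the authors already performed.

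Beyond that honest admission, several intermediate steps in your outline are themselves unproven and nontrivial. You assert that the inversion symmetry of the concentric packing induces a $\rationals$-rational, Galois-equivariant, fixed-point-free involution on the roots of each nonlinear factor, which is what you need for the upper bound $\Gal \subseteq 2\wr S_d$; for $\Pack(2,n)$ the analogous symmetry ($b\mapsto 1-b$) merely permuted two distinct irreducible factors rather than pairing roots within one factor, so the claimed structure for $\Pack(1,n)$ needs an actual derivation, as does the residue-class bookkeeping giving $d=n-1$ versus $d=n-2$ and the extra $S_1$ factor. Your lower-bound step also relies on an unstated and unproved ``wreath-product analogue of Lemma~\ref{lem:swap+cycle}'' (a generation criterion for $2\wr S_d$ from a twisted long cycle and a pair transposition), and on the existence, for every $n$ in the residue class, of primes exhibiting the required factorization patterns --- precisely the kind of statement that Dedekind's theorem lets one verify for individual polynomials but gives no handle on uniformly in $n$. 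So the proposal is a reasonable plan, consistent with the paper's computational template and with the heuristic that the inversion symmetry is the only constraint on the group, but it does not establish the conjecture, and neither does the paper.
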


Either of these conjectures, if true, would imply that circle packing is hard on a \emph{root radical computation tree} that can compute both bounded-degree polynomial roots and unbounded-degree radicals. Moreover, they would imply that the degree necessary to compute circle packing on a root computation tree is linear in~$n$, without depending on the infinitude of Sophie Germain primes.

\newpage
\bibliographystyleappendix{splncs}
\bibliographyappendix{paper}

\end{appendix}

\end{document}